\newenvironment{textbmatrix}{	\setlength{\arraycolsep}{2.5pt}%
								\big[\begin{matrix}}{\end{matrix}\big]%
								\raisebox{0.08ex}{\vphantom{M}}}
\def\be{\begin{equation}}
\def\ee{\end{equation}}
\def\benn{\begin{equation*}}
\def\eenn{\end{equation*}}
\def\een{\nonumber \end{equation}}
\def\mat{\begin{bmatrix}}
\def\emat{\end{bmatrix}}
\def\btm{\begin{textbmatrix}}
\def\etm{\end{textbmatrix}}
\def\ba#1\ea{\begin{align}#1\end{align}}
\def\bann#1\eann{\begin{align*}#1\end{align*}}
\def\bs#1\es{\begin{split}#1\end{split}} 
\def\bg#1\eg{\begin{gather}#1\end{gather}} 
\def\bi#1\ei{\begin{itemize}#1\end{itemize}} 
\def\bsa#1\esa{\begin{IEEEeqnarray}{rCl}#1\end{IEEEeqnarray}}
\def\bsann#1\esann{\begin{IEEEeqnarray*}{rCl}#1\end{IEEEeqnarray*}}
\newcommand{\safemath}[2]{\newcommand{#1}{\ensuremath{#2}\xspace}}
\newcommand{\lefto}{\mathopen{}\left}
\DeclareMathOperator*{\argmin}{arg\;min}		
\DeclareMathOperator{\Exop}{\mathbb{E}}		
\safemath{\normal}{\mathcal{N}}				
\safemath{\circnorm}{\mathcal{CN}}			
\safemath{\uniform}{\mathcal{U}}				
\safemath{\interior}{\mathrm{Int}}			 
\safemath{\dfn}{:=}							
\newcommand{\dotgeq}{\dot{\geq}} 
\newcommand{\dotleq}{\dot{\leq}} 
\newcommand{\hbcomment}[1]{}
\newcommand{\cdind}[2]{\imath_{#1}^{#2}}
\newcommand{\cdindt}[2]{\hat{\imath}_{#1}^{#2}}
\newtheorem{theorem}{Theorem}
\newtheorem{remark}{Remark}
\newtheorem{definer}{Definition}
\newcommand{\prob}[1]{\ensuremath{\mathbb{P}\lefto[#1\right]}}
\safemath{\SNR}{\text{\sc snr}} 				
\safemath{\No}{N_0}							
\safemath{\Es}{E_s}							
\safemath{\Eb}{E_b}							
\safemath{\EbNo}{\frac{\Eb}{\No}}
\safemath{\EsNo}{\frac{\Es}{\No}}
\providecommand{\Hop}{\ensuremath{\mathbb{H}}} 
\safemath{\LH}{L_{\Hop}}						
\safemath{\SH}{S_\Hop}						
\safemath{\HH}{H_{\Hop}}						
\safemath{\CH}{C_\Hop}						
\safemath{\RH}{R_\Hop}						
\safemath{\Rh}{R_h}							
\safemath{\dB}{\,\mathrm{dB}}
\safemath{\dBm}{\,\mathrm{dBm}}
\safemath{\Hz}{\,\mathrm{Hz}}
\safemath{\kHz}{\,\mathrm{kHz}}
\safemath{\MHz}{\,\mathrm{MHz}}
\safemath{\GHz}{\,\mathrm{GHz}}
\safemath{\s}{\,\mathrm{s}}
\safemath{\ms}{\,\mathrm{ms}}
\safemath{\mus}{\,\mathrm{\mu s}}
\safemath{\ns}{\,\mathrm{ns}}
\safemath{\meter}{\,\mathrm{m}}
\safemath{\mm}{\,\mathrm{mm}}
\safemath{\cm}{\,\mathrm{cm}}
\safemath{\m}{\,\mathrm{m}}
\safemath{\W}{\,\mathrm{W}}
\safemath{\J}{\,\mathrm{J}}
\safemath{\K}{\,\mathrm{K}}
\safemath{\bit}{\,\mathrm{bit}}
\safemath{\define}{\triangleq}			
\safemath{\equivalent}{\sim}
\safemath{\distas}{\sim}					
\safemath{\reals}{\mathbb{R}}
\safemath{\positivereals}{\mathbb{R}^{+}}
\safemath{\integers}{\mathbb{Z}}
\safemath{\posint}{\mathbb{Z}_{+}}
\safemath{\naturals}{\mathbb{N}}
\safemath{\complexset}{\mathbb{C}}
\safemath{\setA}{\mathcal{A}}
\safemath{\setB}{\mathcal{B}}
\safemath{\setC}{\mathcal{C}}
\safemath{\setD}{\mathcal{D}}
\safemath{\setE}{\mathcal{E}}
\safemath{\setF}{\mathcal{F}}
\safemath{\setG}{\mathcal{G}}
\safemath{\setH}{\mathcal{H}}
\safemath{\setI}{\mathcal{I}}
\safemath{\setJ}{\mathcal{J}}
\safemath{\setK}{\mathcal{K}}
\safemath{\setL}{\mathcal{L}}
\safemath{\setM}{\mathcal{M}}
\safemath{\setN}{\mathcal{N}}
\safemath{\setO}{\mathcal{O}}
\safemath{\setP}{\mathcal{P}}
\safemath{\setQ}{\mathcal{Q}}
\safemath{\setR}{\mathcal{R}}
\safemath{\setS}{\mathcal{S}}
\safemath{\setT}{\mathcal{T}}
\safemath{\setU}{\mathcal{U}}
\safemath{\setV}{\mathcal{V}}
\safemath{\setW}{\mathcal{W}}
\safemath{\setX}{\mathcal{X}}
\safemath{\setY}{\mathcal{Y}}
\safemath{\setZ}{\mathcal{Z}}
\safemath{\emptySet}{\varnothing}
\safemath{\bma}{\mathbf{a}}
\safemath{\bmb}{\mathbf{b}}
\safemath{\bmc}{\mathbf{c}}
\safemath{\bmd}{\mathbf{d}}
\safemath{\bme}{\mathbf{e}}
\safemath{\bmf}{\mathbf{f}}
\safemath{\bmg}{\mathbf{g}}
\safemath{\bmh}{\mathbf{h}}
\safemath{\bmi}{\mathbf{i}}
\safemath{\bmj}{\mathbf{j}}
\safemath{\bmk}{\mathbf{k}}
\safemath{\bml}{\mathbf{l}}
\safemath{\bmm}{\mathbf{m}}
\safemath{\bmn}{\mathbf{n}}
\safemath{\bmo}{\mathbf{o}}
\safemath{\bmp}{\mathbf{p}}
\safemath{\bmq}{\mathbf{q}}
\safemath{\bmr}{\mathbf{r}}
\safemath{\bms}{\mathbf{s}}
\safemath{\bmt}{\mathbf{t}}
\safemath{\bmu}{\mathbf{u}}
\safemath{\bmv}{\mathbf{v}}
\safemath{\bmw}{\mathbf{w}}
\safemath{\bmx}{\mathbf{x}}
\safemath{\bmy}{\mathbf{y}}
\safemath{\bmz}{\mathbf{z}}
\safemath{\bmxi}{\boldsymbol{\xi}}
\safemath{\bmlambda}{\mathbf{\lambda}}
\safemath{\bmmu}{\mathbf{\mu}}
\safemath{\bmtheta}{\boldsymbol{\theta}}
\safemath{\bmphi}{\boldsymbol{\phi}}
\safemath{\bA}{\mathbf{A}}
\safemath{\bB}{\mathbf{B}}
\safemath{\bC}{\mathbf{C}}
\safemath{\bD}{\mathbf{D}}
\safemath{\bE}{\mathbf{E}}
\safemath{\bF}{\mathbf{F}}
\safemath{\bG}{\mathbf{G}}
\safemath{\bH}{\mathbf{H}}
\safemath{\bI}{\mathbf{I}}
\safemath{\bJ}{\mathbf{J}}
\safemath{\bK}{\mathbf{K}}
\safemath{\bL}{\mathbf{L}}
\safemath{\bM}{\mathbf{M}}
\safemath{\bN}{\mathbf{N}}
\safemath{\bO}{\mathbf{O}}
\safemath{\bP}{\mathbf{P}}
\safemath{\bQ}{\mathbf{Q}}
\safemath{\bR}{\mathbf{R}}
\safemath{\bS}{\mathbf{S}}
\safemath{\bT}{\mathbf{T}}
\safemath{\bU}{\mathbf{U}}
\safemath{\bV}{\mathbf{V}}
\safemath{\bW}{\mathbf{W}}
\safemath{\bX}{\mathbf{X}}
\safemath{\bY}{\mathbf{Y}}
\safemath{\bZ}{\mathbf{Z}}
\safemath{\bDelta}{\mathbf{\Delta}}
\safemath{\bLambda}{\mathbf{\Lambda}}
\safemath{\bPhi}{\mathbf{\Phi}}
\safemath{\bSigma}{\mathbf{\Sigma}}
\safemath{\bOmega}{\mathbf{\Omega}}
\safemath{\bTheta}{\mathbf{\Theta}}
\safemath{\bZero}{\mathbf{0}}
\safemath{\veca}{\bma}
\safemath{\vecb}{\bmb}
\safemath{\vecc}{\bmc}
\safemath{\vecd}{\bmd}
\safemath{\vece}{\bme}
\safemath{\vecf}{\bmf}
\safemath{\vecg}{\bmg}
\safemath{\vech}{\bmh}
\safemath{\veci}{\bmi}
\safemath{\vecj}{\bmj}
\safemath{\veck}{\bmk}
\safemath{\vecl}{\bml}
\safemath{\vecm}{\bmm}
\safemath{\vecn}{\bmn}
\safemath{\veco}{\bmo}
\safemath{\vecp}{\bmp}
\safemath{\vecq}{\bmq}
\safemath{\vecr}{\bmr}
\safemath{\vecs}{\bms}
\safemath{\vect}{\bmt}
\safemath{\vecu}{\bmu}
\safemath{\vecv}{\bmv}
\safemath{\vecw}{\bmw}
\safemath{\vecx}{\bmx}
\safemath{\vecy}{\bmy}
\safemath{\vecz}{\bmz}
\safemath{\vecZero}{\bZero}
\safemath{\vecxi}{\bmxi}
\safemath{\veclambda}{\bmlambda}
\safemath{\vecmu}{\bmmu}
\safemath{\vectheta}{\bmtheta}
\safemath{\vecphi}{\bmphi}
\safemath{\matA}{\bA}
\safemath{\matB}{\bB}
\safemath{\matC}{\bC}
\safemath{\matD}{\bD}
\safemath{\matE}{\bE}
\safemath{\matF}{\bF}
\safemath{\matG}{\bG}
\safemath{\matH}{\bH}
\safemath{\matI}{\bI}
\safemath{\matJ}{\bJ}
\safemath{\matK}{\bK}
\safemath{\matL}{\bL}
\safemath{\matM}{\bM}
\safemath{\matN}{\bN}
\safemath{\matO}{\bO}
\safemath{\matP}{\bP}
\safemath{\matQ}{\bQ}
\safemath{\matR}{\bR}
\safemath{\matS}{\bS}
\safemath{\matT}{\bT}
\safemath{\matU}{\bU}
\safemath{\matV}{\bV}
\safemath{\matW}{\bW}
\safemath{\matX}{\bX}
\safemath{\matY}{\bY}
\safemath{\matZ}{\bZ}
\safemath{\matZero}{\bZero}
\safemath{\matDelta}{\bDelta}
\safemath{\matLambda}{\bLambda}
\safemath{\matPhi}{\bPhi}
\safemath{\matSigma}{\bSigma}
\safemath{\matOmega}{\bOmega}
\safemath{\matTheta}{\bTheta}
\newcommand{\maxo}[1]{\ensuremath{(#1)^{+}}}
\newcommand{\ajml}[1]{\emph{joint ML decoder for #1}}
\newcommand{\ajmlong}[1]{\emph{joint ML decoder for #1}}
\def\@IEEEinterspaceratioM{0.265}
\def\@IEEEinterspaceMINratioM{0.1651}
\def\@IEEEinterspaceMAXratioM{0.38}
\def\@IEEEinterspaceratioB{0.31}
\def\@IEEEinterspaceMINratioB{0.19}
\def\@IEEEinterspaceMAXratioB{0.38}
\begin{document}

\author{Cemal Ak\c{c}aba and Helmut B\"{o}lcskei
\thanks{The authors are with the Communication Technology Laboratory, ETH Zurich, 8092 Zurich, Switzerland (email:\{cakcaba {\textbar} boelcskei\}@nari.ee.ethz.ch).}
\thanks{Part of this paper will appear in the Proc. of the IEEE Int. Symposium on Information Theory (ISIT), Seoul, Korea, Jun. 2009, available:  http://arxiv.org/abs/0903.2226v2}
}

\title{Diversity-Multiplexing Tradeoff in Fading Interference Channels}

\maketitle

\begin{abstract}
We analyze two-user single-antenna fading interference channels with perfect receive channel state information (CSI) and no transmit CSI.     We compute the diversity-multiplexing tradeoff (DMT) region of a fixed-power-split Han and Kobayashi (HK)-type superposition coding scheme and provide design criteria for the corresponding superposition codes.  We demonstrate that this scheme is DMT-optimal under moderate, strong, and very strong interference by showing that it achieves a DMT outer bound that we derive.   Further, under very strong interference, we show that a joint decoder is DMT-optimal and ``decouples'' the fading interference channel, i.e., from a DMT perspective, it is possible to transmit as if the interfering user were not present.  In addition, we show that, under very strong interference,  decoding interference while treating the intended signal as noise, subtracting the result out, and then decoding the desired signal, a process known as ``stripping'', achieves the optimal DMT region.   Our proofs are constructive in the sense that code design criteria for achieving DMT-optimality (in the cases where we can demonstrate it) are provided. \end{abstract}

\section{Introduction}
The interference channel (IC) models the situation where $M$ unrelated transmitters communicate their separate messages to $M$ independent receivers, each of which is assigned to a single transmitter.   Apart from a few special cases \cite{Sato81,Carleial75,Sason04}, the capacity region of the IC remains unknown.  Recently, for the interference-limited regime, Etkin \emph{et al.} \cite{Etkin06, Etkin08} characterized the capacity region of the IC to within one bit.  Later, Telatar and Tse \cite{Telatar07} generalized this result to a wider class of ICs.  The techniques used in \cite{Etkin06, Etkin08, Telatar07} rely on perfect channel state information (CSI) at the transmitter. Shang \emph{et al.} derived the noisy-interference sum-rate capacity for Gaussian ICs  in \cite{shangkram08}, while Raja \emph{et al.} \cite{rajavis08} characterized the capacity region of the two-user finite-state compound Gaussian IC to within one bit.  Annapureddy and Veeravalli \cite{veeravalli}  showed that the sum capacity of the two-user Gaussian IC, under weak interference, is achieved by treating interference as noise.

In \cite{Akuiyibo08},  Akuiyibo and L\'{e}v\^{e}que derived an outer bound on the diversity-multiplexing tradeoff (DMT) region for the two-user IC based on the results of Etkin \emph{et al.} \cite{Etkin08}.    In this paper, we investigate the achievability of this outer bound and we analyze the DMT region realized by a fixed-power-split Han and Kobayashi (HK)-type superposition coding scheme.  For the sake of simplicity of exposition,  we restrict our attention to the two-user case throughout the paper. Furthermore, we assume that  the receivers have perfect CSI  whereas the transmitters only know the channel statistics.  We would like to point out that the schemes used in \cite{Etkin08} make explicit use of transmit CSI and so does the scheme in \cite{Akuiyibo08}, which immediately implies that the results reported in \cite{Akuiyibo08} serve as an outer bound on the DMT region achievable in the absence of transmit CSI, the case considered here.  The contributions in this paper can be summarized as follows:
\begin{itemize}
\item{For general interference levels, we compute the DMT region of a two-message, fixed-power-split HK-type superposition coding scheme and provide design criteria for the corresponding superposition codes.   For the case where the multiplexing rates of the two transmitters are equal, we demonstrate that the two-message, fixed-power-split HK-type superposition coding scheme achieves the optimal DMT  of the two-user IC under \emph{moderate}, \emph{strong}, and \emph{very strong interference}.  For asymmetric rates, i.e., when the multiplexing rates of the two transmitters are not equal, we prove that the two message, fixed-power-split HK scheme is also DMT-optimal in the \emph{strong} and \emph{very strong interference} regimes.}
\item{Under \emph{very strong interference}, a \emph{joint decoder}, i.e., a decoder that jointly decodes the transmitted messages of both transmitters at each receiver, ``decouples'' the fading IC, i.e.,  from a DMT perspective, the achievable performance is equivalent to that of a system with two isolated single-user links.}
\item{For \emph{very strong interference}, we show that a \emph{stripping decoder}, which decodes interference while treating the intended signal as noise, subtracts the result out, and then decodes the intended signal is DMT-optimal. We furthermore show that the optimal DMT can be achieved if each of the two transmitters employs a code that is DMT-optimal on a single-input single-output (SISO) channel.}
\end{itemize}

\subsubsection*{Notation}
The superscripts $^{T}$ and $^{H}$ stand for transpose and conjugate transpose, respectively.
$x_{i}$ represents the $i$th element of the column vector $\mathbf{x}$, and $\lambda_{\min}(\mathbf{X}) $ denotes the smallest eigenvalue of the matrix $\mathbf{X}$.   $\matI_{N}$ is the $N\times N$ identity matrix, and $\mathbf{0}$ denotes the all zeros matrix of appropriate size.  All logarithms are to the base $2$ and $(a)^{+}=\max(a,0)$.  
$X\sim \mathcal{CN}(0,\sigma^2)$ stands for a circularly symmetric complex Gaussian random variable (RV) with variance $\sigma^{2}$. $f(\rho) \doteq g(\rho)$ denotes exponential equality of the functions $f(\cdot)$ and $g(\cdot)$, i.e.,  
\[\lim_{\rho\rightarrow\infty} \frac{\log f(\rho)}{\log \rho}  = \lim_{\rho\rightarrow\infty} \frac{\log g(\rho)}{\log \rho} . \]
The symbols $\dotgeq$, $\dotleq$, $\dot{>}$, and $\dot{<}$ are defined analogously.

\subsubsection*{System model}
We consider a two-user fading IC where two transmitters communicate information to two receivers via a common channel. The fading coefficient between transmitter $i$  $(i=1,2)$ and receiver $j$ $(j=1,2)$ is denoted by $h_{ij}$ and is assumed to be $\mathcal{CN}(0,1)$.  Transmitter $i$ ($ \mathcal{T}_{i}$) chooses an $N$-dimensional codeword $\vecx_{i} \in \mathbb{C}^{N}$, $\|\vecx_{i}\|^{2} \leq N$,  from its codebook, and transmits $\check{\vecx}_{i}= \sqrt{P_{i}}\vecx_{i}$ in accordance with its transmit power constraint $\|\check{\vecx}_{i}\|^{2} \leq NP_{i}$.  In addition, we account for the attenuation of transmit signal $i$ at receiver $j$ ($\mathcal{R}_{j}$) through the real-valued coefficients $\eta_{ij} > 0$.   Defining $\vecy_{i}$ and $\vecz_{i}\sim \mathcal{CN}(\mathbf{0},\matI_{N})$ as the $N$-dimensional received signal vector and noise vector, respectively, at $\mathcal{R}_{i}$, the input-output relations are given by  
 \begin{align}
\label{intc1}
	\vecy_{1} &= \eta_{11}h_{11}\check{\vecx}_{1}+\eta_{21}h_{21}\check{\vecx}_{2}+\vecz_{1} \\
\label{intc2}	\vecy_{2} &= \eta_{12}h_{12}\check{\vecx}_{1}+\eta_{22}h_{22}\check{\vecx}_{2}+\vecz_{2}.
\end{align}
Setting $\eta_{11}^{2}P_{1} = \eta_{22}^{2}P_{2} = \SNR$ and $\eta_{21}^{2}P_{2} = \eta_{12}^{2}P_{1} = \SNR^{\alpha}$ with $\alpha \in [0,\infty]$ simplifies the exposition of our results, and the comparison to  \cite{Etkin08} and \cite{Akuiyibo08}. The resulting equivalent set of input-output relations is 
\begin{align}
\label{intc1s}    \vecy_{1} &= \sqrt{\SNR}h_{11}\vecx_{1}+\sqrt{\SNR^{\alpha}}h_{21}\vecx_{2}+ \vecz_{1} \\
\label{intc2s}	\vecy_{2} &= \sqrt{\SNR^{\alpha}}h_{12}\vecx_{1}+\sqrt{\SNR}h_{22}\vecx_{2}+\vecz_{2}.
\end{align} 
We assume that both receivers know the signal-to-noise ratio (SNR) value $\SNR$ and the parameter $\alpha$, and $\mathcal{R}_{i}$ $(i=1,2)$ knows $\vech_{i} = [h_{1i} \  h_{2i}]^{T}$ perfectly,  whereas the transmitters only know the channel statistics for the channels $h_{ij}$ $\left(i,j=1,2\right)$, the SNR value, and the interference parameter $\alpha$.  The data rate of $ \mathcal{T}_{i}$ scales with SNR according to $R_{i}=r_{i}\log\SNR$, where the multiplexing rate $r_{i}$ obeys $0 \leq r_{i} \leq 1$.   As a result, for $ \mathcal{T}_{i}$ to operate at multiplexing rate $r_{i}$,  we need a sequence of codebooks $\mathcal{C}_{i}(\SNR,r_{i})$, one for each $\SNR$, with $| \mathcal{C}_{i}(\SNR,r_{i})|=2^{NR_{i}}$ codewords $\{\vecx^{1}_{i},\vecx^{2}_{i},\ldots, \vecx_{i}^{2^{NR_{i}}}\}$.  In the following, we will need the multiplexing rate vector $\vecr=[r_{1} \ r_{2}]^{T}$.

\subsubsection*{Performance metric} The  error probability corresponding to maximum-likelihood (ML) decoding of $ \mathcal{T}_{i}$ at $\mathcal{R}_{i}$ under the assumption that the correctly decoded interference $\mathcal{T}_{j}$ has been removed is denoted by $\prob{E_{ii}|\vech_{i}}$ for $i,j=1,2$ and $i\neq j$. The corresponding average (with respect to (w.r.t.) the random channel)  error probability is  $P(E_{ii}) \triangleq \Exop_{\vech_{i}}\!\!\left\{\prob{E_{ii}|\vech_{i}}\right\}$. The notation $\vecx_{i}^{j}\rightarrow \vecx_{i}^{k}$ designates the event of mistakenly decoding the transmitted codeword $\vecx_{i}^{j}$ for the codeword $\vecx_{i}^{k}$.   

The average (w.r.t. the random channel) error probability corresponding to decoding of $\mathcal{T}_{i}$ at $\mathcal{R}_{i}$ incurred by a particular communication scheme $\chi$ is denoted by $P(E^{\chi}_{i})$ for $i=1,2$ .  Throughout the paper,  as done in \cite{Akuiyibo08}, we use the performance metric $P(E^{\chi})= \max\{P(E_{1}^{\chi}), P(E_{2}^{\chi})\}$.   The DMT realized by a communication scheme $\chi$ is then characterized by
\begin{align}
	d^{\chi}(\vecr) = -\lim_{\SNR\rightarrow \infty}\frac{\log P(E^{\chi})}{\log\SNR}.
\end{align}
As discussed in \cite{tse_mu,verdu98}, the receiver that minimizes the error probability for each $\mathcal{T}_{i}$ is the \emph{individual ML receiver} at $\mathcal{R}_{i}$ for $i=1,2$, which we define next. 
\begin{definer}
An \emph{individual ML receiver} for $\mathcal{T}_{i}$ at $\mathcal{R}_{j}$ for $i,j=1,2$ treats the signal from $\mathcal{T}_{k}$ for $k=1,2$, $k\neq i$, as discrete noise with known structure (i.e., codebooks) and carries out an ML detection of the message of $\mathcal{T}_{i}$ \cite{tse_mu, verdu98}. In the following,  we denote the error probability of an individual ML receiver for $\mathcal{T}_{i}$ at $\mathcal{R}_{j}$ by $\prob{\mathcal{E}_{ij}^{IML}}$ for $i,j=1,2$.  The corresponding average (w.r.t. the random channel) error probability is denoted by $P(E^{IML}_{ij}) \triangleq \mathbb{E}_{\vech_{j}}\lefto\{\prob{\mathcal{E}_{ij}^{IML}}\right\}$.
\end{definer}

The DMT realized by the strategy of employing an \emph{individual ML receiver} for $\mathcal{T}_{i}$ at each receiver $\mathcal{R}_{i}$ for $i=1,2$ is given by 
\begin{align}
	d^{IML}(\vecr) = -\lim_{\SNR\rightarrow \infty}\frac{\log \max\lefto\{P(E^{IML}_{11}), P(E^{IML}_{22}) \right\}}{\log\SNR}.
\end{align}
Since the \emph{individual ML receiver} minimizes the error probability for each $\mathcal{T}_{i}$ at $\mathcal{R}_{i}$ for $i=1,2$, we have that the DMT $d^{IML}(\vecr)$ is an outer bound on the DMT realized by any communication scheme $\chi$, i.e.,
\begin{align}
	d^{IML}(\vecr) \geq 	d^{\chi}(\vecr).
\end{align}


\section{Achievable DMT for Joint Decoding}
\label{jointML}
A simple achievable rate region for the IC is obtained by having each receiver perform joint decoding of the messages from both transmitters.  Hence, there are no \emph{private} messages, i.e., there are no messages that should only be decoded at one receiver, and the messages of both transmitters are said to be \emph{public}.  We formally define the \emph{joint decoder} or \ajmlong{IC} next. 
\begin{definer}
A \ajmlong{IC} at $\mathcal{R}_{j }$ ($j=1,2$)  carries out  joint ML detection on the messages from both transmitters ($\mathcal{T}_{i}$ for $i=1,2$).  For the \ajml{IC} at $\mathcal{R}_{j}$, one does not declare an error if the estimate of the signal from $\mathcal{T}_{i}$ does not match the transmitted signal from $\mathcal{T}_{i}$ for $i,j=1,2$ and $i\neq j$.  The error probability of this receiver is denoted by $\prob{\mathcal{E}^{JD}_{j}}$. Then, $\prob{\mathcal{E}^{JD}_{j}}$ is the probability that only $\mathcal{T}_{j}$ or both $T_{i}$ and $T_{j}$ for $i,j=1,2$ and $i\neq j$  are decoded incorrectly.  The corresponding average (w.r.t. the random channel) error probability is denoted by $P\lefto(E^{JD}_{j}\right)\triangleq \mathbb{E}_{\vech_{j}}\lefto\{\prob{\mathcal{E}^{JD}_{j}}\right\}$.
\end{definer}
	
	The achievable DMT of the \ajmlong{IC} is characterized next.

\begin{theorem}
\label{theoremJD}
The DMT corresponding to joint decoding at each receiver is given by
\begin{align}
d^{JD}(\vecr) =   \min\limits_{i=1,2,3}\left(d^{JD}_{i}(\vecr)\right) \label{jfafjfffffsgh}
\end{align}
where
 \begin{align}
	&d^{JD}_{i}(\vecr)= (1-r_{i})^{+}, \ \ \ \ {\rm for} \ \ \  i=1,2 \\
	& d^{JD}_{3}(\vecr) = \lefto(1-r_{1}-r_{2}\right)^{+}+ \lefto(\alpha-r_{1}-r_{2}\right)^{+}. \notag
\end{align}
Denote $j^{*}= \arg \min_{i=1,2,3} d^{JD}_{i}(\vecr)$.  Let $\Gamma_{i}(\vecr) =[\gamma_{i}^{1}(\vecr) \ \gamma_{i}^{2}(\vecr)]^{T}$ be functions\footnote{We note that the functions $\Gamma_{i}(\vecr)$ might not be unique.} such that $d^{JD}_{j^{*}}(\vecr) = d^{JD}_{i}(\Gamma_{i}(\vecr))$  for $i=1,2,3$. 
If a sequence (in SNR) of codebooks with block length $N \geq 2$ satisfies
\begin{align}
	\| \Delta \vecx_{i}\|^{2} \ & \dotgeq \ \SNR^{-\gamma_{i}^{i}(\vecr)+\epsilon}, \label{xxczzxvz}\\
	\lambda_{\min}\lefto(\Delta\matX_{ij} (\Delta\matX_{ij})^{H}\right) \ & \dotgeq \ \SNR^{-\gamma_{3}^{1}(\vecr)-\gamma_{3}^{2}(\vecr)+\epsilon} \label{xxczzxvz2}
\end{align}
for all pairs of codewords $\vecx_{i}^{n_{i}},  \vecx_{i}^{\tilde{n}_{i}} \in \mathcal{C}_{i}(\SNR,r_{i})$ s.t. $\vecx_{i}^{n_{i}} \neq \vecx^{\tilde{n}_{i}}_{i}$, $\vecx_{j}^{n_{j}},  \vecx_{j}^{\tilde{n}_{j}} \in \mathcal{C}_{j}(\SNR,r_{j})$  s.t. $\vecx_{j}^{n_{j}} \neq \vecx^{\tilde{n}_{j}}_{j}$  for $i,j = 1,2$ and $i \neq j$,  where  $\Delta \vecx_{i} = \vecx_{i}^{n_{i}} - \vecx^{\tilde{n}_{i}}_{i}$, $\Delta \vecx_{j} = \vecx_{j}^{n_{j}} - \vecx^{\tilde{n}_{j}}_{j}$,   and $\Delta\matX_{ij} = [\Delta\vecx_{i} \ \Delta\vecx_{j}]$, and $\lambda_{\min}(\Delta\matX_{ij} (\Delta\matX_{ij})^{H})$ denotes the smallest nonzero eigenvalue of $\Delta\matX_{ij} (\Delta\matX_{ij})^{H}$, for some\footnote{We note that $\epsilon$ is allowed to be different in \eqref{xxczzxvz} and \eqref{xxczzxvz2}. } $\epsilon >  0$, then $P\lefto(E^{JD}\right)$ obeys
\begin{align}
P\lefto(E^{JD}\right) \doteq \SNR^{-d^{JD}(\vecr)}.
\end{align}
\end{theorem}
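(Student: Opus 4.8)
The plan is to analyze the two receivers separately---since $P(E^{JD})=\max\{P(E_1^{JD}),P(E_2^{JD})\}$---and to treat $\mathcal{R}_1$ in detail, with $\mathcal{R}_2$ following by symmetry of the model. At $\mathcal{R}_1$ the joint ML decoder resolves the pair $(\vecx_1,\vecx_2)$, and by the definition of $\prob{\mathcal{E}_1^{JD}}$ an error is charged only when $\vecx_1$ is wrong. I would therefore split the error event into a type-I event ($\vecx_1$ wrong, $\vecx_2$ correct) and a type-III event (both wrong), discarding the type-II event ($\vecx_2$ wrong, $\vecx_1$ correct). For a difference pair $(\Delta\vecx_1,\Delta\vecx_2)$ the conditional pairwise error probability is a Gaussian tail in the squared distance $\|\sqrt{\SNR}h_{11}\Delta\vecx_1+\sqrt{\SNR^{\alpha}}h_{21}\Delta\vecx_2\|^2$, which collapses to $\SNR|h_{11}|^2\|\Delta\vecx_1\|^2$ for type-I and to $\vecg^{H}(\Delta\matX_{12})^{H}\Delta\matX_{12}\vecg$, with $\vecg=[\sqrt{\SNR}h_{11}\ \ \sqrt{\SNR^{\alpha}}h_{21}]^{T}$, for type-III; the hypothesis $N\ge 2$ is exactly what guarantees that $\Delta\matX_{12}$ can have two nonzero singular values so that the relevant $\lambda_{\min}$ is meaningful.

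For the achievability (upper) bound I would union-bound over all confusable pairs, bound each pairwise error probability by its Chernoff form, and insert the code hypotheses \eqref{xxczzxvz}, \eqref{xxczzxvz2}. For type-I, $\SNR|h_{11}|^2\|\Delta\vecx_1\|^2\dotgeq\SNR^{1-v_{11}-\gamma_1^1+\epsilon}$ after the substitution $|h_{i1}|^2=\SNR^{-v_{i1}}$; for type-III I use $\vecg^{H}(\Delta\matX_{12})^{H}\Delta\matX_{12}\vecg\ge\lambda_{\min}(\Delta\matX_{12}(\Delta\matX_{12})^{H})\|\vecg\|^2\dotgeq\SNR^{-\gamma_3^1-\gamma_3^2+\epsilon}(\SNR^{1-v_{11}}+\SNR^{\alpha-v_{21}})$. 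Averaging over the fading with the density $\doteq\SNR^{-v_{i1}}$ (for $v_{i1}\ge 0$) and applying Laplace's method, the pairwise error probability decays super-polynomially off the outage set, so the polynomial union-bound prefactors $\SNR^{Nr_1}$ and $\SNR^{N(r_1+r_2)}$ are absorbed into $\epsilon$ and the integral concentrates on the outage set. This yields the type-I contribution $\doteq\SNR^{-(1-\gamma_1^1)^+}$ and the type-III contribution $\doteq\SNR^{-[(1-\gamma_3^1-\gamma_3^2)^+ +(\alpha-\gamma_3^1-\gamma_3^2)^+]}$; by the defining property $d_i^{JD}(\Gamma_i(\vecr))=d^{JD}(\vecr)$ both exponents equal $d^{JD}(\vecr)$, so $P(E^{JD})\dotleq\SNR^{-d^{JD}(\vecr)}$.

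For the matching lower bound I would exhibit two outage events. First, a genie that reveals the interfering codeword turns $\mathcal{T}_1$'s link into the scalar channel $\sqrt{\SNR}h_{11}\vecx_1+\vecz_1$; since removing the genie can only raise the error probability and any code's error exceeds the scalar-channel outage, $P(E_1^{JD})\dotgeq\SNR^{-(1-r_1)^+}=\SNR^{-d_1^{JD}}$, and symmetrically $P(E_2^{JD})\dotgeq\SNR^{-d_2^{JD}}$. Second, the joint decoder's pair-error probability is at least the sum-rate outage $\prob{(r_1+r_2)\log\SNR > \log(1+\SNR|h_{11}|^2+\SNR^{\alpha}|h_{21}|^2)}\doteq\SNR^{-d_3^{JD}}$, which---modulo the subtlety addressed below---also lower-bounds $P(E_1^{JD})$. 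Hence $P(E_1^{JD})\dotgeq\SNR^{-\min(d_1^{JD},d_3^{JD})}$ and likewise for $\mathcal{R}_2$; taking the maximum over the two receivers yields $P(E^{JD})\dotgeq\SNR^{-\min_{i=1,2,3}d_i^{JD}(\vecr)}=\SNR^{-d^{JD}(\vecr)}$, which together with the achievability bound gives the claimed exponential equality.

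The delicate step---and the one I expect to be the main obstacle---is the sum-rate part of the converse. A sum-rate outage only certifies that the \emph{pair} is decoded incorrectly, and I must rule out that this is charged entirely to a type-II event, which does not count at $\mathcal{R}_1$. I would resolve this inside the dominant outage configuration $v_{11}=(1-r_1-r_2)^+$, $v_{21}=(\alpha-r_1-r_2)^+$ by showing that the code hypotheses force the single-user squared distances $\SNR|h_{11}|^2\|\Delta\vecx_1\|^2$ and $\SNR^{\alpha}|h_{21}|^2\|\Delta\vecx_2\|^2$ to grow with $\SNR$, so the type-I and type-II confusions become negligible and the residual confusion---guaranteed to exist by a sphere-packing count on the $\SNR^{N(r_1+r_2)}$ joint codewords---is necessarily a type-III event in which $\vecx_1$ is also wrong. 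Making this suppression quantitative, and separately handling the boundary regime $\alpha\le r_1+r_2$ where the second term of $d_3^{JD}$ vanishes and the suppression becomes marginal, is the crux; the remaining fading integrals are routine Laplace-type evaluations.
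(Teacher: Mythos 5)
Your proposal is correct in outline and follows essentially the same route as the paper's proof: the same discarding of the interferer-only error event at each receiver, the same decomposition of the two remaining error events into outage/no-outage parts with the outage events evaluated at the shifted rates $\Gamma_i(\vecr)$, the same Chernoff-plus-union-bound step in which hypotheses \eqref{xxczzxvz}--\eqref{xxczzxvz2} force the conditional pairwise error probability to decay exponentially in $\SNR$ (so only the polynomially decaying outage terms survive), and the same outage-based lower bound built from the interference-free single-user events and the sum-rate event. The one substantive difference is the converse step you single out as delicate. The paper does not address it: it defines the IC outage event as $\mathcal{O}^{JD}_{i1}\cup\mathcal{O}^{JD}_{i2}$ and asserts, with a citation to the standard MAC-DMT arguments, that ``the probability of outage yields a lower bound on the error probability'' of the joint decoder; no argument is given there that a sum-rate outage produces errors on the \emph{intended} message rather than only on the interferer's, even though exactly this distinction is what motivated dropping the third, MAC-style outage event in the first place. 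Your concern is therefore well placed, and your sketched resolution is the right mechanism: since $d^{JD}(\vecr)\geq 1-r_1-r_2$, one may take $\gamma_j^{j}(\vecr)=1-d^{JD}(\vecr)\leq r_1+r_2$, so the design criterion $\|\Delta\vecx_j\|^{2}\ \dotgeq\ \SNR^{-\gamma_j^{j}(\vecr)+\epsilon}$ keeps the interfering codebook resolvable inside the dominant configuration of $\mathcal{O}^{JD}_{i2}$, forcing the exponentially large confusion set created by the sum-rate outage to contain pairs that differ in $\vecx_i$. In short, there is no gap in your proposal relative to the paper; completing your suppression argument quantitatively would in fact make the converse more rigorous than the published one, which settles this point by citation rather than by proof.
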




\begin{proof}
We first identify a lower bound on $P\lefto(E^{JD}\right)$, which constitutes an upper bound on the DMT of the \ajmlong{IC}, and then show, using an appropriate upper bound on $P(E^{JD})$, that the SNR exponents of the upper and lower bounds on $P(E^{JD})$ match at high SNR.  Hence, the upper bound on the DMT of the \ajmlong{IC} is achievable.  We define the outage events corresponding to decoding $\mathcal{T}_{i}$ at $\mathcal{R}_{i}$ (in the absence of a signal from $\mathcal{T}_{j}$) and to jointly decoding $\mathcal{T}_{i}$ \emph{and} $\mathcal{T}_{j}$ at $\mathcal{R}_{i}$ for $i,j=1,2$ and $i\neq j$ by 
\begin{align}
	\mathcal{O}_{i1}^{JD} &\triangleq \lefto\{ \vech_{i} : I(\vecx_{i}; \vecy_{i} | \vecx_{j}, \vech_{i}) < R_{i}  \right\} \label{charnomac1} \\
	\mathcal{O}_{i2}^{JD} &\triangleq \lefto\{ \vech_{i} : I(\vecx_{i},\vecx_{j} ; \vecy_{i} | \vech_{i}) < R_{1}+R_{2} \right\}  \label{charnomac2}.
\end{align}
We define an outage event at $\mathcal{R}_{i}$ for the IC as
\begin{align}
	\mathcal{O}^{JD}_{i} \triangleq \bigcup_{k=1}^{2} \mathcal{O}^{JD}_{ik} \label{outeven}
\end{align}
for $i=1,2$.  We would like to point out that the definition of the outage event in \eqref{outeven} is different from the corresponding outage event definition in multiple access channels (MACs) \cite{tse_mu, isit08_cgb} as the outage event corresponding to  decoding of $\mathcal{T}_{i}$ at $\mathcal{R}_{j}$ is absent in \eqref{outeven}.   We note that only $\mathcal{T}_{j}$ being decoded in error at $\mathcal{R}_{i}$ for $i\neq j$, although being a standard error event for the MAC,  is not (and \emph{should not}) be defined as an error event for the IC.  As long as the decision on $\mathcal{T}_{i}$ at $\mathcal{R}_{i}$ is correct, from the point of view of the IC, there is no error to be declared.  The probability of outage yields a lower bound on the error probability of the \ajml{IC}. As in \cite{Akuiyibo08}, we define the total outage probability of the IC as 
\begin{align}
\prob{\mathcal{O}^{JD}} \triangleq  \max\lefto\{\prob{	\mathcal{O}^{JD}_{1}}\! , 	\prob{\mathcal{O}^{JD}_{2} }\right\}.
\end{align} 
Using a standard argument along the lines of \cite{tse_mu, isit08_cgb}, we can see that assuming that both transmitters employ i.i.d. Gaussian codebooks results in no loss of optimality in terms of DMT performance.  We can therefore evaluate \eqref{charnomac1} and \eqref{charnomac2} as 
\begin{align}
	\mathcal{O}_{i1}^{JD}(\vecr) &\triangleq \lefto\{ \vech_{i}:  \log\left(1+\SNR|h_{ii}|^2\right) < R_{i}  \right\} \notag \\
	\mathcal{O}_{i2}^{JD}(\vecr) &\triangleq \lefto\{ \vech_{i} :\log\left(1+\SNR^{\alpha}|h_{ji}|^{2}+\SNR|h_{ii}|^{2}\right) < R_{1} + R_{2 }\right\}. \notag
\end{align}
In the following, we will also need the definitions of the no-outage events, according to
\begin{align}
	\bar{\mathcal{O}}_{i1}^{JD}(\vecr) &\triangleq \lefto\{ \vech_{i}:  \log\left(1+\SNR|h_{ii}|^2\right) \geq R_{i}  \right\} \notag \\
	\bar{\mathcal{O}}_{i2}^{JD}(\vecr) &\triangleq \lefto\{ \vech_{i} :\log\left(1+\SNR^{\alpha}|h_{ji}|^{2}+\SNR|h_{ii}|^{2}\right) \geq R_{1}+R_{2} \right\} \notag
\end{align}
with $i,j=1,2$ and $i\neq j$.  We can now establish the asymptotic behavior of $\mathcal{O}^{JD}_{i}$. By the union bound, we have 
\begin{align}
	\prob{\mathcal{O}^{JD}_{i}}  \leq  \sum_{k=1}^{2}  \prob{\mathcal{O}^{JD}_{ik}(\vecr)}.
\end{align}
Obviously, it holds that 
\begin{align}
	\prob{\mathcal{O}^{JD}_{i}} \doteq \max_{k=1,2} \prob{\mathcal{O}^{JD}_{ik}(\vecr)}.  \label{maxomizc}
\end{align}
It is shown in \cite{zheng_tradeoff} and \cite{Akuiyibo08} that
\begin{align}
\prob{\mathcal{O}_{i1}^{JD}(\vecr)} &\doteq \SNR^{-d_{i1}^{JD}(\vecr)} \\
\prob{\mathcal{O}_{i2}^{JD}(\vecr)} &\doteq \SNR^{-d_{i2}^{JD}(\vecr)} 
\end{align}
with 
\begin{align}
d_{i1}^{JD}(\vecr) &= (1-r_{i})^{+}  \label{jod1}\\
d_{i2}^{JD}(\vecr) &= (1-r_{1}-r_{2})^{+}+(\alpha-r_{1}-r_{2})^{+} \label{jod2}
\end{align}
for $i=1,2$.  We point out that \eqref{jod1} and \eqref{jod2} define four SNR exponents $d_{ij}^{JD}(\vecr)$ for $i,j=1,2$.  The outage event corresponding to jointly decoding the signals from both transmitters at $\mathcal{R}_{1}$ is identical to the outage event corresponding to jointly decoding the signals from both transmitters at $\mathcal{R}_{2}$. Hence,  the corresponding SNR exponents of the outage probabilities of these events, namely, $d_{12}^{JD}(\vecr)$ and $d_{22}^{JD}(\vecr)$, are exactly the same. The total outage probability of the IC then behaves according to 
\begin{align}
\prob{\mathcal{O}^{JD}} &=  \max\lefto\{\prob{	\mathcal{O}^{JD}_{1} }, 	\prob{\mathcal{O}^{JD}_{2} }\right\}.  \label{afasfasazzzzkmf}
\end{align}
From \eqref{maxomizc}, it follows that
\begin{align}
	\prob{\mathcal{O}^{JD}_{i}} \doteq \max_{k=1,2} \prob{\mathcal{O}^{JD}_{ik}(\vecr)} \doteq \SNR^{-\min\limits_{k=1,2} d_{ik}^{JD}(\vecr)} \label{xxazsfafgafg}.
\end{align}
Hence, combining \eqref{afasfasazzzzkmf} and \eqref{xxazsfafgafg}, we get 
 \begin{align}
\prob{\mathcal{O}^{JD}}	  &  \doteq   \max_{i=1,2}\SNR^{-\min_{k=1,2} d_{ik}^{JD}(\vecr)} \label{ffffazcvgn} \\
							  & \doteq  \SNR^{-d^{JD}(\vecr)} \label{twentyone}\end{align}
where 
\begin{align}
d^{JD}(\vecr) =   \min\limits_{i=1,2,3}\left(d^{JD}_{i}(\vecr)\right) \label{fffazcaf}
\end{align}
with
 \begin{align}
	&d^{JD}_{i}(\vecr)= (1-r_{i})^{+} \ \ \ \ \text{for} \ \ \  i=1,2  \label{hoz} \\ 
	& d^{JD}_{3}(\vecr) = \lefto(1-r_{1}-r_{2}\right)^{+}+ \lefto(\alpha-r_{1}-r_{2}\right)^{+}.  \label{vzas}\notag
\end{align}
We note that \eqref{ffffazcvgn} can be simplified by eliminating either $d_{12}^{JD}(\vecr)$ or $d_{22}^{JD}(\vecr)$ as explained earlier. This is precisely what we have done in going from \eqref{ffffazcvgn} to \eqref{twentyone}.

With \eqref{xxazsfafgafg} we arrived at a lower bound on the error probability of the \ajml{IC} at $\mathcal{R}_{i}$.  This lower bound, by definition, gives an upper bound on the DMT region.  We next try to find an upper bound on the error probability that has the same exponential behavior as this lower bound.  To this end, consider next the error probability corresponding to the \ajml{IC}. We first define the relevant error events.  Let $\vecx_{i}^{n_{i}}$ and $\vecx_{j}^{n_{j}}$ with $n_{i} \in \{1,2,\ldots, 2^{NR_{i}}\}$, $n_{j} \in \{1,2,\ldots, 2^{NR_{j}}\}$ ($i,j=1,2$ and $i\neq j$) be the  codewords transmitted by $\mathcal{T}_{i}$ and $\mathcal{T}_{j}$, respectively. The results of (joint ML) decoding of $\mathcal{T}_{i}$ and $\mathcal{T}_{j}$ at $\mathcal{R}_{i}$ are denoted by $\vecx_{i}^{\tilde{n}_{i}}$ and $\vecx_{j}^{\tilde{n}_{j}}$, respectively, with $\tilde{n}_{i} \in \{1,2,\ldots, 2^{NR_{i}}\}$, $\tilde{n}_{j} \in \{1,2,\ldots, 2^{NR_{j}}\}$ for $i,j=1,2$ and $i\neq j$.   We have the error events corresponding to $\mathcal{T}_{i}$ only and $\mathcal{T}_{i}$ and $\mathcal{T}_{j}$ being decoded in error at $\mathcal{R}_{i}$ as
\begin{align}
\mathcal{E}_{i1}^{JD} & \triangleq \left\{ \tilde{n}_{i}\neq n_{i}, \  \tilde{n}_{j} = n_{j}\right\}\\ 
\mathcal{E}_{i2}^{JD} & \triangleq \left\{ \tilde{n}_{i}\neq n_{i}, \  \tilde{n}_{j} \neq n_{j}\right\} 
\end{align}
for $i,j=1,2$ and $i\neq j$.  We will also need the total error probability defined as 
\begin{align}
	\mathcal{E}^{JD}_{i} \triangleq \bigcup_{k=1,2}\mathcal{E}_{ik}^{JD}.
\end{align}

We denote $j^{*}= \arg \min_{i=1,2,3} d^{JD}_{i}(\vecr)$.  Let $\Gamma_{i}(\vecr) =[\gamma_{i}^{1}(\vecr) \ \gamma_{i}^{2}(\vecr)]^{T}$ be functions\footnote{We note that the functions $\Gamma_{i}(\vecr)$ might not be unique.} such that $d^{JD}_{j^{*}}(\vecr) = d^{JD}_{i}(\Gamma_{i}(\vecr))$  for  $i=1,2,3$.  We recall that $d_{i2}^{JD}(\vecr) = d_{3}^{JD}(\vecr)$ for $i=1,2,$ by definition.  

We next find an upper bound on the probability of the events $\mathcal{E}_{i1}^{JD}$ as follows:
\begin{align}
&\prob{\mathcal{E}^{JD}_{i1}} = \prob{\mathcal{E}^{JD}_{i1}, \mathcal{O}^{JD}_{i1}(\Gamma_{i}(\vecr))} +\prob{\mathcal{E}^{JD}_{i1}, \bar{\mathcal{O}}^{JD}_{i1}(\Gamma_{i}(\vecr))}  \notag \\
&\leq   \prob{ \mathcal{O}^{JD}_{i1}(\Gamma_{i}(\vecr))}+ \prob{\mathcal{E}^{JD}_{i1}| \bar{\mathcal{O}}^{JD}_{i1}(\Gamma_{i}(\vecr))} \label{fadgk}
\end{align}
and for the events $\mathcal{E}_{i2}^{JD}$ according to:  
\begin{align}
&\prob{\mathcal{E}^{JD}_{i2}} = \prob{\mathcal{E}^{JD}_{i2}, \mathcal{O}^{JD}_{i2}(\Gamma_{3}(\vecr))} +\prob{\mathcal{E}^{JD}_{i2}, \bar{\mathcal{O}}^{JD}_{i2}(\Gamma_{3}(\vecr))}  \notag \\
&\leq   \prob{ \mathcal{O}^{JD}_{i2}(\Gamma_{3}(\vecr))}+ \prob{\mathcal{E}^{JD}_{i2}| \bar{\mathcal{O}}^{JD}_{i2}(\Gamma_{3}(\vecr))} \label{fadgfj}. 
\end{align}
We start by deriving an upper bound on the average (w.r.t. the random channel) pairwise error probability (PEP) of each error event $\mathcal{E}^{JD}_{ik}$ for $i=1,2$ and $k=1,2$.   Assuming, without loss of generality,  that we have an $\mathcal{E}^{JD}_{i2}$ type event, the probability of the ML decoder mistakenly deciding in favor of the codeword $\matX_{ij}^{\tilde{n}_{i}\tilde{n}_{j}}=[\vecx_{i}^{\tilde{n}_{i}} \ \vecx_{j}^{\tilde{n}_{j}}]$ when $\matX^{n_{i}n_{j}}_{ij} =[\vecx^{n_{i}}_{i} \ \vecx^{n_{j}}_{j}]$ (with $\vecx_{i}^{n_{i}}, \vecx^{\tilde{n}_{i}}_{i} \in \mathcal{C}_{i}(\SNR,r_{i})$ and $\vecx_{j}, \vecx^{\tilde{n}_{j}}_{j} \in \mathcal{C}_{j}(\SNR,r_{j})$, $i,j=1,2$ and $i\neq j$) was actually transmitted, can be upper-bounded according to
\begin{align}
&	 \mathbb{E}_{\vech_{i}}\! \lefto\{\prob{\matX_{ij}^{n_{i}n_{j}} \rightarrow \matX^{\tilde{n}_{i}\tilde{n}_{j}}_{ij}}\right\} \\ &  \leq  \mathbb{E}_{\vech_{i}}\lefto\{\exp\left[- \frac{\|\Delta\matX_{ij} \tilde{\vech}_{i} \|^2}{4} \right]\right\} \\ 
											  & \leq  \mathbb{E}_{\vech_{i}}\lefto\{\exp\left[- \frac{\lambda_{\min}\|\tilde{\vech}_{i} \|^2}{4} \right]\right\} \\ 													    
											  &  = \mathbb{E}_{\vech_{i}}\lefto\{\exp\left[- \lambda_{\min} \frac{\SNR|h_{ii}|^2+\SNR^{\alpha}|h_{ji}|^{2}}{4} \right]\right\} \label{stars}
\end{align}
where $\tilde{\vech}_{i} = [\sqrt{\SNR}h_{ii} \ \sqrt{\SNR^{\alpha}}h_{ji}]^T $  for $i,j=1,2$ and $i\neq j$ and $\lambda_{\min}$ is the smallest nonzero eigenvalue of $\Delta\matX_{ij}(\Delta\matX_{ij})^{H}$.

Noting that the no outage event  $\bar{\mathcal{O}}^{JD}_{i2}\left(\Gamma_{3}(\vecr)\right)$ entails $\SNR|h_{ii}|^2+\SNR^{\alpha}|h_{ji}|^{2} \geq \SNR^{\gamma_{3}^{1}(\vecr)+\gamma_{3}^{2}(\vecr)}-1$, \eqref{fadgk} implies  an upper bound on $\prob{\mathcal{E}^{JD}_{i2}}$ according to:
\begin{align}
&\mathbb{E}_{\vech_{i}} \! \lefto\{\prob{\mathcal{E}^{JD}_{i2}}\right\} \ \dotleq \ \label{xxvafa} \\&  \prob{ \mathcal{O}_{i2}^{JD}\left(\Gamma_{3}(\vecr)\right)}+ \SNR^{N(r_{1}+r_{2})}\exp\left[-\frac{\lambda_{\min}\SNR^{\gamma_{3}^{1}(\vecr)+\gamma_{3}^{2}(\vecr)}}{4}\right]   \notag.
\end{align}
Here, we used the definitions $R_{i}=r_{i}\log\SNR$ for $i=1,2$ and $\exp[-\frac{\lambda_{\min}}{4} (\SNR^{\gamma_{3}^{1}(\vecr)+\gamma_{3}^{2}(\vecr)}-1) ] \doteq \exp[-\frac{\lambda_{\min}}{4}\SNR^{\gamma_{3}^{1}(\vecr)+\gamma_{3}^{2}(\vecr)}]$. Given that $\lambda_{\min} \ \dotgeq \ \SNR^{-\gamma_{3}^{1}(\vecr)-\gamma_{3}^{2}(\vecr)+\epsilon}$ with $\epsilon > 0$, by assumption, we obtain
\begin{align}
&\mathbb{E}_{\vech_{i}}\!  \left\{\prob{\mathcal{E}^{JD}_{i2}}\right\}  \notag \\ 
& \ \dotleq \ \prob{ \mathcal{O}_{i2}^{JD}\left(\Gamma_{3}(\vecr)\right)}+ \SNR^{N(r_{1}+r_{2})}\exp\left[-\frac{\SNR^{\epsilon}}{4}\right]   \label{xafkasfkaskfsfk} \\								   & \doteq  \prob{ \mathcal{O}_{i2}^{JD}\left(\Gamma_{3}(\vecr)\right)} \\
&\doteq \SNR^{-d_{j^{*}}^{JD}(\vecr)}\label{finzcuoqw}
\end{align}
as the second term on the right-hand-side (RHS) of \eqref{xafkasfkaskfsfk} decays exponentially in SNR whereas the first term decays polynomially. Eq. \eqref{finzcuoqw} follows by the definition of the function $\Gamma_{3}(\vecr)$. 

A similar analysis for the $\mathcal{E}^{JD}_{i1}$-type error event results in  
 \begin{align}
 & \mathbb{E}_{\vech_{i}}\lefto\{\prob{\vecx_{i}^{n_{i}} \rightarrow \vecx^{\tilde{n}_{i}}_{i}}\right\}  \leq \notag \\ 
 & \hspace{2cm} \mathbb{E}_{\vech_{i}}\lefto\{\exp\left[-  \frac{\SNR|h_{ii}|^2\|\Delta\vecx_{i}\|^{2}}{4} \right]\right\} \label{gasgafas}
 \end{align} 
which, upon invoking \eqref{xxczzxvz} and using the fact that $\bar{\mathcal{O}}_{i1}^{JD}\left(\Gamma_{i}(\vecr)\right)$ entails $\SNR|h_{ii}|^2 \geq \SNR^{\gamma_{i}^{i}}-1$, yields
\begin{align}
&\mathbb{E}_{\vech_{i}}\lefto\{\prob{\mathcal{E}^{JD}_{i1}}\right\} \notag \\ 
& \ \ \ \dotleq \ \prob{ \mathcal{O}_{i1}^{JD}\left(\Gamma_{i}(\vecr)\right)}+ \SNR^{Nr_{i}}\exp\left[-\frac{\SNR^{\epsilon}}{4}\right]  \label{mosnzzcP} \\
& \ \ \doteq \prob{ \mathcal{O}_{i1}^{JD}\left(\Gamma_{i}(\vecr)\right)}
 \end{align}
for $i=1,2$. To complete the proof, we note that
\begin{align}
\mathbb{E}_{\vech_{i}}\lefto\{\prob{\mathcal{E}^{JD}_{i}} \right\} & \leq \sum_{k=1}^{2}\mathbb{E}_{\vech_{i}}\lefto\{ \prob{\mathcal{E}_{ik}^{JD}} \right\} \label{hoijd}\\
										&\ \dotleq \  \prob{ \mathcal{O}_{i1}^{JD}\left(\Gamma_{i}(\vecr)\right)} +\prob{ \mathcal{O}_{i2}^{JD}\left(\Gamma_{3}(\vecr)\right)} \\  
										& = 2\SNR^{-d_{j^{*}}^{JD}(\vecr)} \doteq \SNR^{-\min_{i=1,2,3} d_{i}^{JD}(\vecr)} \notag.
										\end{align}
Recalling that $P\lefto(E^{JD}\right)  = \max_{i=1,2} \mathbb{E}_{\vech_{i}}\lefto\{	\prob{\mathcal{E}^{JD}_{i}}\right\}$, we upper-bound $P\lefto(E^{JD}\right)$ according to    
\begin{align}
P\lefto(E^{JD}\right) & = \max_{i=1,2} \mathbb{E}_{\vech_{i}}\lefto\{	\prob{\mathcal{E}^{JD}_{i}} \right\} \\
	  & \ \dotleq \  \max_{i=1,2} \SNR^{-\min_{j=1,2,3} d_{j}^{JD}(\vecr)}\\
	  & \ \doteq \ \SNR^{-d^{JD}(\vecr)} \label{fortysizx}.
\end{align}
Since \eqref{fortysizx} gives an upper bound that matches the  lower bound in \eqref{twentyone}, the proof is complete.
\end{proof}

\subsubsection*{Discussion}
The strategy of the \ajmlong{IC} forces us to decode the message from the interfering user $\mathcal{T}_{j}$ at $\mathcal{R}_{i}$ for $i,j=1,2$ and $i\neq j$ together with the intended message from $
\mathcal{T}_{i}$ in its entirety.  We can relax this constraint and allow only part of the interfering signal $\mathcal{T}_{j}$ to be decoded at $\mathcal{R}_{i}$ for $i,j=1,2$ and $i\neq j$.  This is precisely the idea behind the Han-Kobayashi communication scheme, which we analyze in Section \ref{hankosection}.

\section{Achievable DMT of Two-Message Fixed-Power-Split Han-Kobayashi Schemes} 
\label{hankosection}
The Han-Kobayashi (HK) rate region \cite{hanko} remains the best known achievable rate region for the Gaussian IC \cite{Sason04, kramer06}.  The original HK strategy lets each transmitter split its message into two messages, allows each receiver to decode part of the interfering signal, and uses five auxiliary RVs $Q, U_{1}, U_{2}, W_{1},$ and $W_{2},$ all defined on arbitrary finite sets.  The auxiliary RV $U_{i}$ carries the private message of $ \mathcal{T}_{i}$, whereas the auxiliary RV $W_{i}$ carries the public message of $ \mathcal{T}_{i}$ destined for both receivers.  The RV $Q$ is for time-sharing.    The general HK rate region is usually prohibitively complex to describe \cite{Chong08}. 

In the following, we analyze the DMT of a two-message, fixed-power-split superposition HK scheme where $ \mathcal{T}_{i}$ transmits the $N$-dimensional ($N \geq 2$) vector $\vecx_{i} = \vecu_{i}+\vecw_{i}$ with $\vecu_{i}$ and $\vecw_{i}$ representing the private and the public message, respectively.   The power constraints for $\vecu_{i}$ and $\vecw_{i}$ are  
\begin{align}
\| \vecu_{i} \| \leq \sqrt{\frac{N}{\SNR^{1-p_{i}}}},  \ \  
\| \vecw_{i} \| \leq \sqrt{N}\lefto(1-\sqrt{\frac{1}{\SNR^{1-p_{i}}}}\right)  \notag
\end{align}
so that $\|\vecx_{i}\| \leq \|\vecu_{i}\| + \|\vecw_{i}\| = \sqrt{N}$.  Here, $0 \leq p_{i} < 1$ accounts for  the exponential order of the power allocated to the private message. The power split is assumed fixed and is independent of the channel realizations.  When both the private and the public messages are allocated maximum power, we have $\frac{\|\vecw_{i}\|^{2}}{\|\vecu_{i}\|^{2}} \doteq \SNR^{1-p_{i}}$.  We emphasize that any $p_{i} < 1$  constitutes a valid power split.  We will demonstrate later that schemes with $p_{i} < 0 $  yield zero diversity order, and, hence, do not contribute to the DMT region as the private message codebook is vanishing in size with increasing  SNR.  The case $p_{i}=-\infty$ corresponds to public messages only, and was treated in section \ref{jointML}.

We assume that $ \mathcal{T}_{i}$ transmits at rate $R_{i}=r_{i}\log \SNR$  where  the rates for the private and the public messages, respectively,  are  $S_{i}=s_{i} \log \SNR$ and $T_{i} = t_{i} \log \SNR$ with $r_{i}= s_{i}+t_{i}$, $s_{i}, t_{i}\geq 0$, and $0 \leq r_{i}\leq 1$.   The codebooks corresponding to the private and the public message parts are denoted as  $\mathcal{C}^{\vecu_{i}}(\SNR,s_{i})$ and $\mathcal{C}^{\vecw_{i}}(\SNR,t_{i})$, respectively, and satisfy $|\mathcal{C}^{\vecu_{i}}(\SNR,s_{i})|=\SNR^{Ns_{i}}$ and $|\mathcal{C}^{\vecw_{i}}(\SNR,t_{i})|=\SNR^{Nt_{i}}$. Clearly, $\mathcal{C}^{\vecx_{i}}(\SNR,r_{i}) = \mathcal{C}^{\vecu_{i}}(\SNR,s_{i})\times \mathcal{C}^{\vecw_{i}}(\SNR,t_{i}) $ with $|\mathcal{C}^{\vecx_{i}}(\SNR,r_{i})| = \SNR^{r_{i}}$.   In the following, we will need the private message multiplexing rate vector  $\vecs=[s_{1} \ s_{2}]^{T}$ and the SNR exponent vector  $\vecp= [p_{1} \ p_{2}]^{T}$ of the private messages.    

\begin{definer}
A \emph{joint ML decoder for the two-message, fixed-power-split HK scheme} at $\mathcal{R}_{j }$ ($j=1,2$)  carries out  joint ML detection on the public messages from both transmitters ($\mathcal{T}_{i}$ for $i=1,2$) and the private message from $\mathcal{T}_{j}$.  For the \emph{joint ML decoder for the two-message, fixed-power-split HK scheme} at $\mathcal{R}_{j}$, one does not declare an error if the estimate of the public message of $\mathcal{T}_{i}$ does not match the transmitted message for $i,j=1,2$ and $i\neq j$.  The error probability of this receiver at $\mathcal{R}_{j}$ is denoted by $\prob{\mathcal{E}^{HK}_{j}}$ for $j=1,2$.  The average error probability of this receiver is denoted by $P\lefto(E^{HK}_{j}\right) \triangleq \mathbb{E}_{\vech_{j}}\left\{\prob{\mathcal{E}_{j}}\right\}$ for $j=1,2$. 
\end{definer}

We employ a \emph{joint ML decoder for the two-message, fixed-power-split HK scheme} at each $\mathcal{R}_{j}$ ($j=1,2$).  The SNR exponent of  $P(E^{HK}) = \max\{P\lefto(E^{HK}_{1}\right)\! , P\lefto(E^{HK}_{2}\right)\}$ and the conditions on the superposition codes for achieving this SNR exponent are characterized next. 
 
\begin{theorem}	
\label{theoremHK}
The achievable DMT for the two-message, fixed-power-split HK scheme is given by  \begin{align}
\label{dhk}d^{HK}(\vecr) = \max_{\vecs,\vecp} d(\vecr,\vecs, \vecp) 
\end{align} with the optimization carried out subject to the constraints \begin{align} s_{i}+t_{i} &= r_{i},  \ \text{with} \ 	s_{i},t_{i} \geq 0  \notag \\
	0 \leq p_{i} & < 1, \ i =1,2   \notag
\end{align}
and
\begin{align} d(\vecr,\vecs, \vecp)  &= \min_{\substack{ k=1,2 \\ l=1,2,\ldots,6}}\left(d_{kl}(\vecr,\vecs, \vecp)\right)   \notag \\
d_{i1}(\vecr,\vecs,\vecp) &= 	\begin{cases} 
	(p_{i}-s_{i})^{+}, &   \text{if } \   p_{j} <  1-\alpha \\
	(1-\alpha-p_{j}+p_{i}-s_{i})^{+}, &  \text{if } \   p_{j} \geq  1 - \alpha
	\end{cases} \notag \\
d_{i2}(\vecr,\vecs,\vecp)  &= \begin{cases} 
	(1-r_{i}+s_{i})^{+}, & \  \text{if } \   p_{j} <  1-\alpha   \\
	(2-\alpha-p_{j}-r_{i}+s_{i})^{+}, & \ \text{if } \   p_{j} \geq  1-\alpha  
	\end{cases} \notag \\
d_{i3}(\vecr,\vecs,\vecp) &= 	\begin{cases} 
	(1-r_{i})^{+}, & \  \text{if } \   p_{j} <  1-\alpha \notag \\
	(2-\alpha-p_{j}-r_{i})^{+}, & \ \text{if } \   p_{j} \geq  1-\alpha \notag
	\end{cases} \\
d_{i4}(\vecr,\vecs,\vecp) &= 	\begin{cases} 
	(p_{i}-s_{i}-r_{j}+s_{j})^{+} \! + \! (\alpha-s_{i}-r_{j}+s_{j})^{+}, \\  \ \ \ \ \ \ \ \ \ \ \ \ \  \text{if } \   p_{j} < 1-s_{i}-r_{j}+s_{j} \notag \\
	(p_{i}-s_{i}-r_{j}+s_{j})^{+}, \\ \ \ \ \ \ \text{if } \   p_{j} \geq  1-s_{i}-r_{j}+s_{j}  \ \text{and} \ p_{j} < 1-\alpha \notag \\
	(1-\alpha-p_{j}+p_{i}-s_{i}-r_{j}+s_{j})^{+}, \\ \ \ \ \ \ \text{if } \   p_{j} \geq  1-s_{i}-r_{j}+s_{j} \ \text{and} \ p_{j} \geq 1-\alpha \notag 
	\end{cases} \\
d_{i5}(\vecr,\vecs,\vecp) &=  \begin{cases} 
	\lefto(\! 1- \! \! \sum\limits_{k=1}^{2}r_{k}+\! \sum\limits_{l=1}^{2} s_{l} \! \right)^{\! \! +}\! \! +\! \lefto(\! \alpha -\! \sum\limits_{k=1}^{2}r_{k}+\! \sum\limits_{l=1}^{2} s_{l} \! \right)^{\! \! +}\! ,  \\  \ \  \text{if } \   p_{j} <  1-\sum\limits_{k=1}^{2}r_{k}+\sum\limits_{l=1}^{2} s_{l}, \\
	\lefto(\! 1-\sum\limits_{k=1}^{2}r_{k}+\sum\limits_{l=1}^{2} s_{l} \! \right)^{\! \! +},  \\  \ \  \text{if } \   p_{j} \geq  1-\sum\limits_{k=1}^{2}r_{k}+\sum\limits_{l=1}^{2} s_{l} \ \text{and} \ p_{j} < 1-\alpha \notag \\
	\lefto(2-\alpha-p_{j}-\sum\limits_{k=1}^{2}r_{k}+\sum\limits_{l=1}^{2} s_{l}\right)^{\! \! +},  \\  \ \  \text{if } \   p_{j} \geq  1-\sum\limits_{k=1}^{2}r_{k}+\sum\limits_{l=1}^{2} s_{l} \ \text{and} \ p_{j} \geq 1-\alpha
	\end{cases} \\	
d_{i6}(\vecr,\vecs,\vecp) &=	\begin{cases} 
	(1-r_{i}-r_{j}+s_{j})^{+}+(\alpha-r_{i}-r_{j}+s_{j})^{+},  \\ \ \ \ \ \ \ \ \ \ \ \ \ \ \ \    \text{if } \   p_{j} <  1-r_{i}-r_{j}+s_{j} \notag \\
	(1-r_{i}-r_{j}+s_{j})^{+},  \\ \ \ \ \ \ \text{if } \   p_{j} \geq  1-r_{i}-r_{j}+s_{j} \ \text{and} \ p_{j} < 1-\alpha  \notag \\
	(2-\alpha-p_{j}-r_{i}-r_{j}+s_{j})^{+},  \\  \ \ \ \ \ \text{if } \   p_{j} \geq  1-r_{i}-r_{j}+s_{j} \ \text{and} \ p_{j} \geq 1-\alpha\notag
	\end{cases}
\end{align} with $i,j=1,2$ and $i\neq j$.  Define the codeword difference vectors $\Delta\vecu_{i}=\sqrt{\SNR^{1-p_{i}}}(\vecu^{\cdind{i}{u}}_{i}-\vecu^{\cdindt{i}{u}}_{i})$, $\Delta\vecw_{i}=\vecw^{\cdind{i}{w}}_{i}-\vecw^{\cdindt{i}{w}}_{i}$, and $\Delta\vecx_{i}=\vecx^{\cdind{i}{x}}_{i}-\vecx^{\cdindt{i}{x}}_{i}$ with $\vecu^{\cdind{i}{u}}_{i}, \vecu^{\cdindt{i}{u}}_{i}\in \mathcal{C}^{\vecu_{i}}(\SNR, s_{i})$, $\vecw^{\cdind{i}{w}}_{i},\vecw^{\cdindt{i}{w}}_{i}\in \mathcal{C}^{\vecw_{i}}(\SNR,t_{i})$ and $\vecx^{\cdind{i}{x}}_{i}, \vecx^{\cdindt{i}{x}}_{i} \in \mathcal{C}^{\vecx_{i}}(\SNR, r_{i})$,  for $i=1,2$. Further, define  $\Delta\matA_{ij} = [ \Delta\vecu_{i}  \ \Delta \vecw_{j}]$, $\Delta\matB_{ij} = [ \Delta\vecw_{i}  \ \Delta \vecw_{j}]$,  and $\Delta\matC_{ij} = [ \Delta\vecx_{i}  \ \Delta \vecw_{j}]$ for $i,j=1,2$ and $i\neq j$.  Denote the optimizing values of $\vecs$, $\vect$, and $\vecp$  obtained by solving (\ref{dhk}) as $\vecs^{*}, \vect^{*}$, and $\vecp^{*}$, respectively. 
We let 
\begin{align} [k^{*} \ l^{*}] = \argmin_{\substack{ k=1,2 \\ l=1,2,3,4,5,6}}\left(d_{kl}(\vecr,\vecs, \vecp)\right).\end{align} 
Further, let the functions\footnote{We note that the functions $\Upsilon_{nm}(\vecr)$ and $\Psi_{nm}(\vecs^{*})$ might not be unique.} $\Upsilon_{nm}(\vecr)=[\upsilon_{nm}^{1}(\vecr) \ \upsilon_{nm}^{2}(\vecr)]^{T}$ and $\Psi_{nm}(\vecs^{*})=[\psi_{nm}^{1}(\vecs^{*}) \ \psi_{nm}^{2}(\vecs^{*})]^{T}$ be such that
\begin{align}
	d_{k^{*}l^{*}}(\vecr,\vecs^{*},\vecp^{*}) = d_{nm}(\Upsilon_{nm}(\vecr), \Psi_{nm}(\vecs^{*}), \vecp^{*}) \notag
\end{align}
for all $n=1,2$ and $m=1,2,\ldots,6$.  If there exists a sequence (in SNR) of superposition codes satisfying 
\begin{align}
	\| \Delta\vecu_{i}\|^2 \ &\dotgeq \ \SNR^{-\psi_{i1}^{i}(\vecs^{*}) +\epsilon} \notag \\
      \| \Delta\vecw_{i}\|^2 \ &\dotgeq \ \SNR^{-\upsilon_{i2}^{i}(\vecr)+\psi_{i2}^{i}(\vecs^{*})+\epsilon}\notag\\
	\| \Delta\vecx_{i}\|^2 \ &\dotgeq \  \SNR^{-\upsilon_{i3}^{i}(\vecr)+\epsilon}\notag\\ 
	\lambda_{\min}(\Delta\matA_{ij}\lefto(\Delta\matA_{ij}\right)^{H}) \ &\dotgeq \ \SNR^{-\psi_{i4}^{i}(\vecs^{*}) -\upsilon_{j4}^{j}(\vecr)+\psi_{j4}^{j}(\vecs^{*})+\epsilon}\notag\\
	\lambda_{\min}(\Delta\matB_{ij}\lefto(\Delta\matB_{ij}\right)^{H}) \ &\dotgeq \ \SNR^{-\sum\limits_{k=1}^{2}\upsilon_{k5}^{k}(\vecr)+\sum\limits_{j=1}^{2}\psi_{j5}^{j}(\vecs^{*})+\epsilon} \notag \\
	\lambda_{\min}(\Delta\matC_{ij}\lefto(\Delta\matC_{ij}\right)^{H}) \ &\dotgeq \ \SNR^{-\upsilon_{i6}^{i}(\vecr) -\upsilon_{j6}^{j}(\vecr)+\psi_{j6}^{j}(\vecs^{*})+\epsilon}\label{cloon}
\end{align}
for every pair of codewords in each codebook for $i,j=1,2$, $i\neq j$, and for some\footnote{We note that the $\epsilon$'s in \eqref{cloon} are allowed to be different.} $\epsilon > 0$, then we have 
\begin{align}
	P(E^{HK}) \doteq \SNR^{-d_{HK}(\vecr)}.
\end{align}

\end{theorem}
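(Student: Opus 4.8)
The plan is to follow the same two-sided program as in the proof of Theorem~\ref{theoremJD}: first establish a lower bound on $P(E^{HK})$ through an outage analysis, which yields a converse bound on the achievable DMT, and then produce a matching upper bound on $P(E^{HK})$ by a conditional pairwise-error-probability (PEP) analysis that invokes the code design conditions \eqref{cloon}. The genuinely new ingredient, absent in Theorem~\ref{theoremJD}, is that at $\mathcal{R}_i$ the joint ML decoder recovers only the triple $(\vecu_i,\vecw_i,\vecw_j)$ while the interferer's private message $\vecu_j$ is left undecoded and acts as structured, non-Gaussian noise. Its power exponent is $(\alpha+p_j-1)^{+}$, so the branch conditions appearing in the $d_{il}$—in particular the threshold $p_j\gtrless 1-\alpha$—are precisely the conditions under which this residual private interference rises above or stays below the noise floor, while the remaining thresholds (such as $p_j\gtrless 1-s_i-r_j+s_j$ in $d_{i4}$) compare it against the power of the jointly decoded signals.

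First I would set up the outage side. At each receiver $\mathcal{R}_i$ I would define six outage events, one per error pattern in which the \emph{intended} message is decoded incorrectly: decoding $\vecu_i$ alone $(l=1)$, $\vecw_i$ alone $(l=2)$, $\vecx_i=\vecu_i+\vecw_i$ $(l=3)$, the pair $(\vecu_i,\vecw_j)$ through $\Delta\matA_{ij}$ $(l=4)$, the pair $(\vecw_i,\vecw_j)$ through $\Delta\matB_{ij}$ $(l=5)$, and the triple $(\vecx_i,\vecw_j)$ through $\Delta\matC_{ij}$ $(l=6)$. For each event the associated mutual information is that of the relevant signal components, with the effective noise consisting of $\vecz_i$ plus the residual private interference of power $\doteq\SNR^{(\alpha+p_j-1)^{+}}$. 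Invoking the standard fact that i.i.d.\ Gaussian codebooks are DMT-optimal and evaluating the outage exponents by the Laplace-type argument of \cite{zheng_tradeoff,Akuiyibo08}, now with the noise floor raised to $\SNR^{(\alpha+p_j-1)^{+}}$, reproduces exactly the branch structure of each $d_{il}(\vecr,\vecs,\vecp)$; taking the maximum over $\mathcal{R}_1,\mathcal{R}_2$ together with the union bound then gives a lower bound on $P(E^{HK})$ of exponential order $\SNR^{-d(\vecr,\vecs,\vecp)}$ for the fixed split $(\vecs,\vecp)$.

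Next I would match this with an error-probability upper bound. Exactly as in \eqref{fadgk}--\eqref{finzcuoqw}, each error event is split into its intersection with the corresponding outage and no-outage events; on the no-outage event the conditional PEP is controlled by a Chernoff bound of the form $\Exop_{\vech_i}\{\exp[-\lambda_{\min}\|\tilde{\vech}_i\|^2/(4\,\sigma^2)]\}$, where $\sigma^2\doteq\SNR^{(\alpha+p_j-1)^{+}}$ is the effective noise variance after the residual private interference is absorbed, and $\lambda_{\min}$ is the smallest nonzero eigenvalue of the relevant difference-matrix Gram, i.e.\ $\|\Delta\vecu_i\|^2$, $\|\Delta\vecw_i\|^2$, $\|\Delta\vecx_i\|^2$, or the $\lambda_{\min}$ of $\Delta\matA_{ij}(\Delta\matA_{ij})^{H}$, $\Delta\matB_{ij}(\Delta\matB_{ij})^{H}$, $\Delta\matC_{ij}(\Delta\matC_{ij})^{H}$. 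The scaling functions $\Upsilon_{nm},\Psi_{nm}$ play the role that $\Gamma_i$ played in Theorem~\ref{theoremJD}: they equalize the no-outage thresholds so that, once the design conditions \eqref{cloon} are invoked, every conditional error term decays as $\exp[-\SNR^{\epsilon}/4]$, which dominates the polynomial codeword-counting factor $\SNR^{N(\cdot)}$ from the union bound over competing codewords. Summing the six events and maximizing over the two receivers yields $P(E^{HK})\dotleq\SNR^{-d(\vecr,\vecs,\vecp)}$, matching the outage lower bound; finally, since the fixed power split $(\vecs,\vecp)$ is a free design parameter, optimizing over all admissible splits gives $P(E^{HK})\doteq\SNR^{-d^{HK}(\vecr)}$ with $d^{HK}(\vecr)=\max_{\vecs,\vecp}d(\vecr,\vecs,\vecp)$.

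The main obstacle I anticipate is rigorously handling the undecoded private interference $\vecu_j$ in the PEP bound. Because $\vecu_j$ is drawn from a discrete codebook rather than being Gaussian and is never decoded, the ML metric treats it as colored, structured noise; one must justify—via the i.i.d.\ Gaussian-codebook argument together with a worst-case-noise bound—that conditioning on the transmitted $\vecu_j$ and replacing its contribution by an effective Gaussian noise of variance $\doteq\SNR^{(\alpha+p_j-1)^{+}}$ leaves every exponent unchanged. The remaining difficulty is bookkeeping: there are twelve error events (six per receiver), each with a multi-branch case distinction, and the equalizing functions $\Upsilon_{nm},\Psi_{nm}$ are non-unique, so one must verify that a single consistent choice drives every conditional PEP below the corresponding outage exponent across all branches of the $p_j$-thresholds.
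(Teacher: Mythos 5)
Your proposal is correct and follows essentially the same route as the paper's proof: an outage-based lower bound built from the six relevant rate constraints (dropping the unintended public message constraint), matched by a conditional-PEP upper bound in which correctly decoded components are canceled, the undecoded private interference $\vecu_{j}$ is replaced by worst-case Gaussian noise of variance $\doteq 1+\SNR^{\alpha+p_{j}-1}|h_{ji}|^{2}$ (the paper invokes exactly this via \cite{Hassibi03}), and the equalizing functions $\Upsilon_{nm},\Psi_{nm}$ together with the design conditions \eqref{cloon} drive every conditional error term below the outage exponent, after which one maximizes over receivers and optimizes over $(\vecs,\vecp)$. The difficulty you flag about the structured, undecoded interference is precisely handled in the paper by the worst-case-noise argument you anticipate, so there is no gap.
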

\begin{proof} 
The public message is to be decoded at both receivers, whereas  the private message is to be decoded \emph{only} at the intended receiver.    As stated before and discussed in \cite{zheng_tradeoff}, there is no loss of optimality in assuming i.i.d. Gaussian inputs in obtaining an outer bound on the DMT.  Hence, we restrict ourselves to the case where all codebooks are i.i.d. Gaussian, i.e.,  
\begin{align} 
\vecu_{i}  &\sim  \mathcal{CN}(\mathbf{0}, \SNR^{p_{i}-1}\matI_{N}) \\
\vecw_{i} &\sim \mathcal{CN}(\mathbf{0}, \lefto(1-\sqrt{1/(\SNR^{1-p_{i}})}\right)^{2}\matI_{N}) \label{agagagagagazcazgv}
\end{align} with $ 0 \leq p_{i} < 1$.  Since we are interested in the high-SNR asymptotics, we can take $\lefto(1-\sqrt{\frac{1}{\SNR^{1-p_{i}}}}\right)^{2}\approx 1$ so that \eqref{agagagagagazcazgv} becomes 
\begin{align} 
\vecw_{i} &\sim \mathcal{CN}(\mathbf{0}, \matI_{N}).
\end{align}
The set of achievable rates $\{S_{i}, T_{i}, T_{j} \}$ for $i,j=1,2$, $i\neq j$ at $\mathcal{R}_{i}$,  given the channel realization $\vech_{i}$, can be characterized as
\begin{align}
&\mathcal{R}^{i}_{HK} \triangleq  \left\{S_{i}, T_{i}, T_{j}\right\} \ :  \notag \\
&\label{zzz} S_{i} \leq \log \lefto(1+ \frac{\SNR^{p_{i}}|h_{ii}|^{2}}{1+\SNR^{\alpha+p_{j}-1}|h_{ji}|^{2}}\right) \\ 
&\label{unsz2}T_{i} \leq \log \lefto(1+ \frac{\SNR |h_{ii}|^{2}}{1+\SNR^{\alpha+p_{j}-1}|h_{ji}|^{2}}\right) \\
&\label{unsz}T_{j} \leq \log \lefto(1+ \frac{\SNR^{\alpha} |h_{ji}|^{2}}{1+\SNR^{\alpha+p_{j}-1}|h_{ji}|^{2}}\right) \\
&S_{i}+T_{i} \leq \log \lefto(1+ \frac{\SNR|h_{ii}|^{2}}{1+\SNR^{\alpha+p_{j}-1}|h_{ji}|^{2}}\right) \\
&S_{i}+T_{j} \leq \log \lefto(1+ \frac{\SNR^{p_{i}}|h_{ii}|^{2}+\SNR^{\alpha}|h_{ji}|^2}{1+\SNR^{\alpha+p_{j}-1}|h_{ji}|^{2}}\right) \\
&\label{unsz3}T_{i}+T_{j} \leq \log \lefto(1+ \frac{\SNR|h_{ii}|^{2}+\SNR^{\alpha}|h_{ji}|^2}{1+\SNR^{\alpha+p_{j}-1}|h_{ji}|^{2}}\right) \\ 
&S_{i}+T_{i}+ T_{j} \leq \log \lefto(1+ \frac{\SNR|h_{ii}|^{2}+\SNR^{\alpha}|h_{ji}|^2}{1+\SNR^{\alpha+p_{j}-1}|h_{ji}|^{2}}\right) \\
&\label{kkk} S_{i},T_{i}, T_{j} \geq 0 
\end{align} for  $i,j=1,2$ and $i\neq j$. For a set $\mathcal{S}$ of quadruples $\left\{S_{1},T_{1},S_{2},T_{2}\right\}$, let $\prod(\mathcal{S})$ be the set of rate pairs $(R_{1},R_{2})$ such that $R_{1}=S_{1}+T_{1}$ and $R_{2}=S_{2}+T_{2}$. Then, the set
\begin{align} 
\mathcal{R}^{*} \triangleq \prod\left( \mathcal{R}_{HK}^{1}\bigcap \mathcal{R}_{HK}^{2}\right) \label{hkrater}
\end{align}
is an achievable rate region for the IC operating under a HK scheme with fixed power split $\vecp$. By definition, no decoding error is made at $\mathcal{R}_{i}$ if the private and the public message of $\mathcal{T}_{i}$ are decoded correctly but the public message of $\mathcal{T}_{j}$ is decoded incorrectly \cite{Chong08}.  Therefore, as  the receiver $R_{i}$ is not interested in the messages from $\mathcal{T}_{j}$, it does not make sense to declare an outage because the channel between the unintended transmitter $\mathcal{T}_{j}$ and the receiver $\mathcal{R}_{i}$ for $i,j=1,2$ $i\neq j$, is not good enough to support the transmission rate $T_{j}$. Hence, the outage event corresponding to decoding the public message of the unintended transmitter,  (\ref{unsz}), and its counterpart for $\mathcal{R}_j$ are unnecessary from the point of view of the respective receivers.   An outage event for $\mathcal{R}_{i}$ is therefore defined by  
\begin{align}
\mathcal{O}_{i}(\vecr,\vecs,\vecp) \triangleq \bigcup_{j=1}^{6} \mathcal{O}_{ij}(\vecr,\vecs,\vecp) \label{totaloutage}
\end{align}
where
\begin{align}
&\mathcal{O}_{i1}(\vecr,\vecs,\vecp) \triangleq \notag \\& \lefto\{ \vech_{i}:	   \log \lefto(1+ \frac{\SNR^{p_{i}}|h_{ii}|^{2}}{1+\SNR^{\alpha+p_{j}-1}|h_{ji}|^{2}}\right) < S_{i}  \right\} \label{outage1} \\
&\mathcal{O}_{i2}(\vecr,\vecs,\vecp) \triangleq \notag \\& \lefto\{\vech_{i}:   \log \lefto(1+ \frac{\SNR |h_{ii}|^{2}}{1+\SNR^{\alpha+p_{j}-1}|h_{ji}|^{2}}\right) < T_{i} \right\} \label{outage2} \\
&\mathcal{O}_{i3}(\vecr,\vecs,\vecp) \triangleq \notag \\& \lefto\{ \vech_{i}:  \log \lefto(1+ \frac{\SNR|h_{ii}|^{2}}{1+\SNR^{\alpha+p_{j}-1}|h_{ji}|^{2}}\right) < S_{i} +T_{i}\right\}  \label{outage3} \\
&\mathcal{O}_{i4}(\vecr,\vecs,\vecp) \triangleq \notag \\& \lefto\{ \vech_{i}:  \log \lefto(1+ \frac{\SNR^{p_{i}}|h_{ii}|^{2}+\SNR^{\alpha}|h_{ji}|^2}{1+\SNR^{\alpha+p_{j}-1}|h_{ji}|^{2}}\right)   < S_{i}+T_{j} \! \right\} \label{outage4} \\
&\mathcal{O}_{i5}(\vecr,\vecs,\vecp) \triangleq \notag \\& \lefto\{\vech_{i}: \log \lefto(1+ \frac{\SNR|h_{ii}|^{2}+\SNR^{\alpha}|h_{ji}|^2}{1+\SNR^{\alpha+p_{j}-1}|h_{ji}|^{2}}\right)   < T_{i}+T_{j} \right\}\label{outage5} \\
&\mathcal{O}_{i6}(\vecr,\vecs,\vecp) \triangleq \notag \\& \lefto\{ \vech_{i}:   \log \lefto(1+ \frac{\SNR|h_{ii}|^{2}+\SNR^{\alpha}|h_{ji}|^2}{1+\SNR^{\alpha+p_{j}-1}|h_{ji}|^{2}}\right) < S_{i}+T_{i}+ T_{j} \right\}  \label{outage6}
\end{align}
for $i,j=1,2$ and $i\neq j$.  We also define the complementary events $\bar{\mathcal{O}}_{ik}(\vecr,\vecs,\vecp)$ for $k=1,2,\ldots, 6$ as follows:
\begin{align}
&\bar{\mathcal{O}}_{i1}(\vecr,\vecs,\vecp) \triangleq \notag \\& \lefto\{ \vech_{i}:	   \log \lefto(1+ \frac{\SNR^{p_{i}}|h_{ii}|^{2}}{1+\SNR^{\alpha+p_{j}-1}|h_{ji}|^{2}}\right) \geq S_{i}  \right\} \label{noutage1} \\
&\bar{\mathcal{O}}_{i2}(\vecr,\vecs,\vecp) \triangleq \notag \\& \lefto\{\vech_{i}:   \log \lefto(1+ \frac{\SNR |h_{ii}|^{2}}{1+\SNR^{\alpha+p_{j}-1}|h_{ji}|^{2}}\right) \geq T_{i} \right\} \label{noutage2} \\
&\bar{\mathcal{O}}_{i3}(\vecr,\vecs,\vecp) \triangleq \notag \\& \lefto\{ \vech_{i}:  \log \lefto(1+ \frac{\SNR|h_{ii}|^{2}}{1+\SNR^{\alpha+p_{j}-1}|h_{ji}|^{2}}\right) \geq S_{i} +T_{i}\right\}  \label{noutage3} \\
&\bar{\mathcal{O}}_{i4}(\vecr,\vecs,\vecp) \triangleq \notag \\& \lefto\{ \vech_{i}:  \log \lefto(1+ \frac{\SNR^{p_{i}}|h_{ii}|^{2}+\SNR^{\alpha}|h_{ji}|^2}{1+\SNR^{\alpha+p_{j}-1}|h_{ji}|^{2}}\right)   \geq S_{i}+T_{j} \! \right\} \label{noutage4} \\
&\bar{\mathcal{O}}_{i5}(\vecr,\vecs,\vecp) \triangleq \notag \\& \lefto\{\vech_{i}: \log \lefto(1+ \frac{\SNR|h_{ii}|^{2}+\SNR^{\alpha}|h_{ji}|^2}{1+\SNR^{\alpha+p_{j}-1}|h_{ji}|^{2}}\right)  \geq T_{i}+T_{j} \right\}\label{noutage5} \\
&\bar{\mathcal{O}}_{i6}(\vecr,\vecs,\vecp) \triangleq \notag \\& \lefto\{ \vech_{i}:   \log \lefto(1+ \frac{\SNR|h_{ii}|^{2}+\SNR^{\alpha}|h_{ji}|^2}{1+\SNR^{\alpha+p_{j}-1}|h_{ji}|^{2}}\right) \geq S_{i}+T_{i}+ T_{j} \right\}  \label{noutage6}
\end{align}
for $i,j=1,2$ and $i\neq j$.  It is shown in \cite{Akuiyibo08} that $\prob{\mathcal{O}_{ik}(\vecr,\vecs,\vecp)} \doteq \SNR^{-d_{ik}(\vecr,\vecs,\vecp)}$, $i=1,2$, $k=1,2, \ldots, 6$, where 
\begin{align} 
d_{i1}(\vecr,\vecs,\vecp) &= 	\begin{cases} 
\label{cons1}	(p_{i}-s_{i})^{+},  \\ \ \ \ \ \   \text{if } \   p_{j}  <   1-\alpha \\
	(1-\alpha-p_{j}+p_{i}-s_{i})^{+}, \\ \ \ \ \ \  \text{if } \   p_{j} \geq  1 - \alpha
	\end{cases}  \\
d_{i2}(\vecr,\vecs,\vecp)  &= \begin{cases} 
\label{cons2}	(1-r_{i}+s_{i})^{+}, \\ \ \ \ \ \ \  \text{if } \   p_{j} <  1-\alpha   \\
	(2-\alpha-p_{j}-r_{i}+s_{i})^{+}, \\ \ \ \ \ \ \ \text{if } \   p_{j} \geq  1-\alpha  
	\end{cases}  \\
d_{i3}(\vecr,\vecs,\vecp) &= 	\begin{cases} 
\label{cons3}	(1-r_{i})^{+}, \\ \ \ \ \ \ \  \text{if } \   p_{j}  < 1-\alpha \\
	(2-\alpha-p_{j}-r_{i})^{+}, \\ \ \ \ \ \ \ \text{if } \   p_{j} \geq  1-\alpha 
	\end{cases} \\
d_{i4}(\vecr,\vecs,\vecp) &= 	\begin{cases} 
\label{cons4}	(p_{i}-s_{i}-r_{j}+s_{j})^{+} \! + \! (\alpha-s_{i}-r_{j}+s_{j})^{+}, \\  \ \ \ \ \ \ \ \ \ \ \ \ \  \text{if } \   p_{j} < 1-s_{i}-r_{j}+s_{j}  \\
	(p_{i}-s_{i}-r_{j}+s_{j})^{+}, \\ \ \ \ \text{if } \   p_{j} \geq  1-s_{i}-r_{j}+s_{j}  \ \text{and}  \ p_{j} < 1-\alpha \\
	(1-\alpha-p_{j}+p_{i}-s_{i}-r_{j}+s_{j})^{+}, \\ \ \ \  \text{if } \   p_{j} \geq  1-s_{i}-r_{j}+s_{j} \ \text{and} \ p_{j} \geq 1-\alpha  
	\end{cases} \\
d_{i5}(\vecr,\vecs,\vecp) &=  \begin{cases} 
\label{cons5}	\lefto(\! 1- \! \! \sum\limits_{k=1}^{2}r_{k}+\! \sum\limits_{l=1}^{2} s_{l} \! \right)^{\! \! +}\! \! +\! \lefto(\! \alpha -\! \sum\limits_{k=1}^{2}r_{k}+\! \sum\limits_{l=1}^{2} s_{l} \! \right)^{\! \! +}\! ,  \\  \ \  \text{if } \   p_{j} <  1-\sum\limits_{k=1}^{2}r_{k}+\sum\limits_{l=1}^{2} s_{l}, \\
	\lefto(\! 1-\sum\limits_{k=1}^{2}r_{k}+\sum\limits_{l=1}^{2} s_{l} \! \right)^{\! \! +},  \\  \ \  \text{if } \   p_{j} \geq  1-\sum\limits_{k=1}^{2}r_{k}+\sum\limits_{l=1}^{2} s_{l} \ \text{and} \ p_{j} < 1-\alpha \\
	\lefto(2-\alpha-p_{j}-\sum\limits_{k=1}^{2}r_{k}+\sum\limits_{l=1}^{2} s_{l}\right)^{\! \! +},  \\  \ \  \text{if } \   p_{j} \geq  1-\sum\limits_{k=1}^{2}r_{k}+\sum\limits_{l=1}^{2} s_{l} \ \text{and} \ p_{j} \geq 1-\alpha
	\end{cases} \\	
d_{i6}(\vecr,\vecs,\vecp) &=	\begin{cases} 
\label{cons6}	(1-r_{i}-r_{j}+s_{j})^{+}+(\alpha-r_{i}-r_{j}+s_{j})^{+},  \\ \ \ \ \ \ \ \ \ \ \ \ \ \ \ \    \text{if } \   p_{j} <  1-r_{i}-r_{j}+s_{j}  \\
	(1-r_{i}-r_{j}+s_{j})^{+},  \\ \ \ \ \ \ \text{if } \   p_{j} \geq  1-r_{i}-r_{j}+s_{j} \ \text{and} \ p_{j} < 1-\alpha  \\
	(2-\alpha-p_{j}-r_{i}-r_{j}+s_{j})^{+},  \\  \ \ \ \ \ \text{if } \   p_{j} \geq  1-r_{i}-r_{j}+s_{j} \ \text{and} \ p_{j} \geq 1-\alpha
	\end{cases}
\end{align} with $i,j=1,2$ and $i\neq j$. We define the total outage probability of the IC as the maximum of the probabilities of outage for the two receivers, that is,
\begin{align}
\prob{\mathcal{O}(\vecr,\vecs,\vecp)} \triangleq \max\left(\prob{\mathcal{O}_{1}(\vecr,\vecs,\vecp)}\! ,\prob{\mathcal{O}_{2}(\vecr,\vecs,\vecp)}\right).
\end{align} 
We note that this definition is compatible with our previous definitions. For a given rate tuple $\vecr$, we would like to minimize this probability over all choices of $\vecs$ and $\vecp$, i.e., 
\begin{align}
\prob{\mathcal{O}^{HK}(\vecr)} \triangleq \min_{\vecs, \vecp} \prob{\mathcal{O}(\vecr,\vecs,\vecp)} \label{gacirt}
\end{align} subject to
\begin{align}
&r_{i} = s_{i}+t_{i}\\
&s_{i},t_{i} \geq 0 \\
&0\leq p_{i} < 1, \ \ \text{for} \ i=1,2. 
\end{align}
We will next show that $\prob{\mathcal{O}^{HK}(\vecr)} $ obeys the following exponential behavior in $\SNR$
\begin{align}
\label{DMTcurve}
\prob{\mathcal{O}^{HK}(\vecr)}  \doteq \SNR^{-d^{HK}(\vecr)}
\end{align}
where 
\begin{align}
d^{HK}(\vecr) = \max_{\vecs,\vecp} \min\left(d_{1}(\vecr,\vecs,\vecp), d_{2}(\vecr,\vecs,\vecp)\right) 
\end{align} subject to \begin{align} 
&	s_{i}+t_{i} = r_{i} \notag \\  
&	0 \leq s_{i} \leq r_{i}\notag \\
&	0 \leq t_{i} \leq r_{i}\notag \\
&	0 \leq p_{i} < 1, \ \ \ \text{for} \ i=1,2, 
\end{align}
and where the $d_{i}(\vecr,\vecs,\vecp)$ are given by
\begin{align}
d_{1}(\vecr,\vecs, \vecp) = \min_{i=1,2, \ldots,6} d_{1i}(\vecr,\vecs,\vecp) \\
d_{2}(\vecr,\vecs, \vecp) = \min_{i=1,2, \ldots,6} d_{2i}(\vecr,\vecs,\vecp).
\end{align}
To see this, we note that $\prob{\mathcal{O}^{HK}(\vecr)}$ can be bounded as follows
\begin{align}
\notag &\min_{\vecs, \vecp} \max\left(\prob{\mathcal{O}_{1k}(\vecr,\vecs,\vecp)},\prob{\mathcal{O}_{2l}(\vecr,\vecs,\vecp)}\right)\leq \prob{\mathcal{O}^{HK}(\vecr)} \\  &\leq \min_{\vecs,\vecp} \max\lefto(\sum_{i=1}^{6}\prob{\mathcal{O}_{1i}(\vecr,\vecs,\vecp)},\sum_{j=1}^{6}\prob{\mathcal{O}_{2j}(\vecr,\vecs,\vecp)}\right) \label{glodg}
\end{align} where the inequality holds for all $k=1,2,\ldots, 6$ and $l=1,2,\ldots, 6$. In the high SNR limit the RHS of (\ref{glodg}) is dominated by the SNR exponent given by 
\begin{align}
	\max_{\vecs,\vecp} \min\lefto( \min_{i=1,2, \ldots,6}d_{1i}(\vecr,\vecs,\vecp),  \min_{j=1,2, \ldots,6} d_{2j}(\vecr,\vecs,\vecp)\right). 
\end{align}
The upper and lower bounds on $\prob{\mathcal{O}^{HK}(\vecr)}$ can be made to have the same SNR exponent upon selection of the appropriate values for $k$ and $l$ in the left-hand-side (LHS) of (\ref{glodg}).  We now arrived at a lower bound on the error probability of the \emph{joint ML decoder for the two-message, fixed-power-split HK scheme}. 

Following \cite{tse_mu}, we decompose the error probability of the \emph{joint ML decoder for the two-message, fixed-power-split HK scheme}  at $\mathcal{R}_{i}$ into seven disjoint error events.  As noted earlier, one of these events is irrelevant for the IC.  Denoting the decisions on the private and public message of $\mathcal{T}_{i}$ and the public message of $\mathcal{T}_{j}$ at $\mathcal{R}_{i}$ by $\vecu^{\cdindt{i}{u}}_{i}, \vecw^{\cdindt{i}{w}}_{i}$, and $\vecw^{\cdindt{j}{w}}_{j}$, respectively,  we end up with the following six error events when the transmitted codewords are $\vecu^{\cdind{i}{u}}_{i}, \vecw^{\cdind{i}{w}}_{i}$, and $\vecw^{\cdind{j}{w}}_{j}$ for $i,j=1,2$ and $i \neq j$: 
\begin{align}
\mathcal{E}^{HK}_{i1} &\triangleq \left\{ \cdindt{i}{u} \neq \cdind{i}{u}, \  \cdindt{i}{w} = \cdind{i}{w}, \  \cdindt{j}{w} = \cdind{j}{w}\right\}\\ 
\mathcal{E}^{HK}_{i2} &\triangleq \left\{  \cdindt{i}{u}= \cdind{i}{u}, \  \cdindt{i}{w} \neq \cdind{i}{w}, \  \cdindt{j}{w} = \cdind{j}{w}\right\}\\
\mathcal{E}^{HK}_{i3} &\triangleq \left\{  \cdindt{i}{u} \neq \cdind{i}{u}, \  \cdindt{i}{w} \neq\cdind{i}{w}, \  \cdindt{j}{w}  =\cdind{j}{w}\right\}\\
\mathcal{E}^{HK}_{i4} &\triangleq \left\{  \cdindt{i}{u}\neq \cdind{i}{u}, \  \cdindt{i}{w} = \cdind{i}{w}, \  \cdindt{j}{w}  \neq\cdind{j}{w}\right\}\\ 
\mathcal{E}^{HK}_{i5} &\triangleq \left\{ \cdindt{i}{u} =\cdind{i}{u}, \  \cdindt{i}{w} \neq\cdind{i}{w}, \  \cdindt{j}{w}  \neq \cdind{j}{w}\right\}\\
\mathcal{E}^{HK}_{i6} &\triangleq \left\{ \cdindt{i}{u} \neq \cdind{i}{u}, \ \cdindt{i}{w} \neq \cdind{i}{w}, \  \cdindt{j}{w}  \neq \cdind{j}{w}\right\}.
\end{align}
The total error event at $\mathcal{R}_{i}$ is simply the union of the above events, i.e.,
\begin{align}
\mathcal{E}^{HK}_{i} \triangleq \bigcup_{k=1}^{6} \mathcal{E}^{HK}_{ik}.
\end{align}

We let 
\begin{align} [k^{*} \ l^{*}] = \argmin_{\substack{ k=1,2 \\ l=1,2,3,4,5,6}}\left(d_{kl}(\vecr,\vecs, \vecp)\right).\end{align} 
Further, let the functions\footnote{We note that the functions $\Upsilon_{nm}(\vecr)$ and $\Psi_{nm}(\vecs^{*})$ might not be unique.} $\Upsilon_{nm}(\vecr)=[\upsilon_{nm}^{1}(\vecr) \ \upsilon_{nm}^{2}(\vecr)]^{T}$ and $\Psi_{nm}(\vecs^{*})=[\psi_{nm}^{1}(\vecs^{*}) \ \psi_{nm}^{2}(\vecs^{*})]^{T}$ be such that
\begin{align}
	d_{k^{*}l^{*}}(\vecr,\vecs^{*},\vecp^{*}) = d_{nm}(\Upsilon_{nm}(\vecr), \Psi_{nm}(\vecs^{*}), \vecp^{*}) \notag
\end{align}
for all $n=1,2$ and $m=1,2,\ldots,6$. 

Next, we derive an upper bound on $\mathcal{E}^{HK}_{i}$ and show that the SNR exponent of this bound matches the SNR exponent of the outage probability $\prob{\mathcal{O}^{HK}(\vecr)}$. We start by deriving an upper bound on  $\prob{\mathcal{E}^{HK}_{ik}}$ according to 
\begin{align}
\prob{\mathcal{E}^{HK}_{ik}}  &= \prob{\mathcal{E}^{HK}_{ik}, \mathcal{O}_{ik} \lefto(\Upsilon_{ik}(\vecr),\Psi_{ik}(\vecs^{*}), \vecp^{*}\right)} + \notag \\  \ \ \ \ \ \ \ \ & \ \ \ \  \ \ \ \ \ \ \ \ \  \prob{\mathcal{E}^{HK}_{ik}, \bar{\mathcal{O}}_{ik} \lefto(\Upsilon_{ik}(\vecr),\Psi_{ik}(\vecs^{*}),\vecp^{*}\right)}  \\
&\leq   \prob{ \mathcal{O}_{ik} \lefto(\Upsilon_{ik}(\vecr),\Psi_{ik}(\vecs^{*}), \vecp^{*}\right)}+ \\ \  \ \ \ \ \ & \ \ \ \ \ \  \ \ \ \ \ \ \   \prob{\mathcal{E}^{HK}_{ik}| \bar{\mathcal{O}}_{ik} \lefto(\Upsilon_{ik}(\vecr),\Psi_{ik}(\vecs^{*}), \vecp^{*}\right)},
\end{align}
for $i=1,2$ and $k=1,2, \ldots, 6$. Next, we derive an upper bound on $\prob{\mathcal{E}^{HK}_{ik}| \bar{\mathcal{O}}_{ik} \lefto(\Upsilon_{ik}(\vecr),\Psi_{ik}(\vecs^{*}), \vecp^{*}\right)}$ using the union bound and the PEP.  For the event $\mathcal{E}^{HK}_{i1}$, the receiver can cancel the contribution of $\vecw_{i}$ and  $\vecw_{j}$ out as they have been decoded correctly.  The resulting equivalent signal model is then
\begin{align}
\vecy = \sqrt{\SNR} h_{ii}\vecu_{i} + \sqrt{\SNR^{\alpha}}h_{ji}\vecu_{j}+\vecz.
\end{align}
Treating $\vecu_{j}$ as noise with $\vecu_{j} \sim \mathcal{CN}(\mathbf{0}, \SNR^{-(1-p_{j})}\matI_{N})$ results in an upper bound on the error probability as the worst noise under a covariance constraint is Gaussian \cite{Hassibi03}.   The equivalent noise $\vecn= \vecz+\sqrt{\SNR^{\alpha}}h_{ji}\vecu_{j}$ is therefore Gaussian with $\vecn \sim \mathcal{CN}(\mathbf{0}, (1+\SNR^{-(1-p_{j})+\alpha}|h_{ji}|^{2})\matI_{N})$.  Recall that we assumed that $\mathcal{R}_{j}$ knows $h_{ji}$ perfectly.  We are now in a position to upper-bound the PEP according to 
\begin{align}
&\mathbb{E}_{\vech_{i}}\lefto\{\prob{\vecu_{i}\rightarrow \tilde{\vecu}_{i}}\right\} \notag \\  & \ \ \ \ \ \leq \mathbb{E}_{\vech_{i}}\lefto\{\exp\left[-\frac{\|h_{ii} (\vecu_{i}-\tilde{\vecu}_{i})\|^{2}\SNR}{4(1+\SNR^{-(1-p_{j})+\alpha}|h_{ji}|^{2})}\right]\right\}. \notag
\end{align}
Since $\Delta\vecu_{i}={\sqrt{\SNR^{1-p_{i}}}}({\vecu}_{i}-\tilde{\vecu}_{i})$, we get
\begin{align}
&\mathbb{E}_{\vech_{i}}\lefto\{\prob{\vecu_{i}\rightarrow \tilde{\vecu}_{i}}\right\} \notag \\ &  \ \ \ \ \ \leq \mathbb{E}_{\vech_{i}}\lefto\{\exp\left[-\frac{\|h_{ii} (\Delta\vecu_{i})\|^{2}\SNR^{p_{i}}}{4(1+\SNR^{-(1-p_{j})+\alpha}|h_{ji}|^{2})}\right]\right\}. \label{gotinv}
\end{align}
Next, we use the fact that $\bar{\mathcal{O}}_{i1} \lefto(\Upsilon_{i1}(\vecr),\Psi_{i1}(\vecs^{*}), \vecp^{*}\right)$ entails $\frac{\SNR^{p_{i}}|h_{ii}|^{2}}{1+\SNR^{-(1-p_{j})+\alpha}|h_{ji}|^{2}} \geq \SNR^{\psi_{i1}^{i}(\vecs^{*})}$ where $i,j =1,2$ and $i\neq j$ and apply the union bound to upper-bound $\prob{\mathcal{E}^{HK}_{i1}| \bar{\mathcal{O}}_{i1} \lefto(\Upsilon_{i1}(\vecr),\Psi_{i1}(\vecs^{*}), \vecp^{*}\right)}$ according to 
\begin{align}
&\mathbb{E}_{\vech_{i}}\lefto\{\prob{\mathcal{E}_{i1}| \bar{\mathcal{O}}_{i1} \lefto(\Upsilon_{i1}(\vecr),\Psi_{i1}(\vecs^{*}), \vecp^{*}\right)}\right\} \leq \notag \\  & \ \ \ \ \ \ \ \ \ \SNR^{Ns_{i}}\exp\left[-\frac{\SNR^{\psi_{i1}^{i}(\vecs^{*})}\|\Delta\vecu_{i}\|^{2}}{4}\right] \label{eqsfrfegs}.
\end{align}
Since $\|\Delta\vecu_{i}\|^{2} \ \dotgeq \ \SNR^{-\psi_{i1}^{i}(\vecs^{*})+\epsilon}$, with $\epsilon > 0$, by assumption, we further have
\begin{align}
&\mathbb{E}_{\vech_{i}}\lefto\{ \prob{\mathcal{E}^{HK}_{i1}}\right\} \notag  \\  & \ \dotleq \  \prob{ \mathcal{O}_{i1} \lefto(\Upsilon_{i1}(\vecr),\Psi_{i1}(\vecs^{*}), \vecp^{*}\right)}+  \SNR^{Ns_{i}}\exp\left[-\SNR^{\epsilon}\right]  \label{ejfjjfjfalhfsak} \\
							& \ \dotleq \  \prob{ \mathcal{O}_{i1} \lefto(\Upsilon_{i1}(\vecr),\Psi_{i1}(\vecs^{*}), \vecp^{*}\right)}.
\end{align}

For the event $\mathcal{E}^{HK}_{i2}$, the receiver can cancel the contributions of the correctly decoded messages $\vecu_{i}$ and $\vecw_{j}$ out.  Following steps similar to those leading to \eqref{gotinv}, we obtain 
\begin{align}
&\mathbb{E}_{\vech_{i}}\lefto\{\prob{\vecw_{i}\rightarrow \tilde{\vecw}_{i}}\right\} \notag \\ & \ \ \ \ \ \leq \mathbb{E}_{\vech_{i}}\lefto\{\exp\left[-\frac{\|h_{ii}\Delta\vecw_{i}\|^{2}\SNR}{4(1+\SNR^{-(1-p_{j})+\alpha}|h_{ji}|^{2})}\right]\right\} \notag.
\end{align}
 Next, an application of  the union bound to $\prob{\mathcal{E}^{HK}_{i2}| \bar{\mathcal{O}}_{i2} \lefto(\Upsilon_{i2}(\vecr),\Psi_{i2}(\vecs^{*}), \vecp^{*}\right)}$ yields
\begin{align}
&\mathbb{E}_{\vech_{i}}\lefto\{\prob{\mathcal{E}^{HK}_{i2}| \bar{\mathcal{O}}_{i2} \lefto(\Upsilon_{i2}(\vecr),\Psi_{i2}(\vecs^{*}), \vecp^{*}\right)}\right\}  \leq \\ & \ \ \ \ \ \  \SNR^{Nt_{i}}\exp\left[-\frac{\SNR^{\upsilon_{i2}^{i}(\vecr)-\psi_{i2}^{i}(\vecs^{*})}\|\Delta\vecw_{i}\|^{2}}{4}\right] \notag 
\end{align}
as the event $\bar{\mathcal{O}}_{i2}\lefto(\Upsilon_{i2}(\vecr),\Psi_{i2}(\vecs^{*}), \vecp^{*}\right)$ entails \begin{align} \frac{\SNR|h_{ii}|^{2}}{1+\SNR^{\alpha+p_{j}-1}|h_{ji}|^{2}} \geq \SNR^{\upsilon_{i2}^{i}(\vecr)-\psi_{i2}^{i}(\vecs^{*})}.\end{align}  Since $\|\Delta\vecw_{i}\|^{2} \ \dotgeq \ \SNR^{-\upsilon_{i2}^{i}(\vecr)+\psi_{i2}^{i}(\vecs^{*})+\epsilon}$, with $\epsilon > 0$, by assumption, we further have
\begin{align}
&\mathbb{E}_{\vech_{i}}\lefto\{\prob{\mathcal{E}^{HK}_{i2}}\right\}  \notag \\
& \ \dotleq \ \prob{ \mathcal{O}_{i2}\lefto(\Upsilon_{i2}(\vecr),\Psi_{i2}(\vecs^{*}), \vecp^{*}\right)}+  \SNR^{Nt_{i}}\exp\left[-\SNR^{\epsilon}\right] \\
& \ \dotleq \  \prob{ \mathcal{O}_{i2}\lefto(\Upsilon_{i2}(\vecr),\Psi_{i2}(\vecs^{*}), \vecp^{*}\right)}.
\end{align}

For the event $\mathcal{E}^{HK}_{i3}$, the receiver can cancel the contribution of the correctly decoded message $\vecw_{j}$ out. We define $\vecx^{\cdindt{i}{x}}_{i}=\vecu^{\cdindt{i}{u}}_{i}+\vecw^{\cdindt{i}{w}}_{i}$, and recall that  $\vecx^{\cdind{i}{x}}_{i} =\vecu^{\cdind{i}{u}}_{i}+\vecw^{\cdind{i}{w}}_{i}$. The PEP of deciding in favor of $\vecx^{\cdindt{i}{x}}_{i}$ when $\vecx^{\cdind{i}{x}}_{i}$ was actually transmitted can be upper-bounded as  
\begin{align}
&\mathbb{E}_{\vech_{i}}\lefto\{\prob{\vecx_{i}\rightarrow \tilde{\vecx}_{i}} \right\} \notag \\ & \ \ \ \ \  \leq \mathbb{E}_{\vech_{i}}\lefto\{\exp\left[-\frac{\|h_{ii}\Delta\vecx_{i}\|^{2}\SNR}{4(1+\SNR^{-(1-p_{j})+\alpha}|h_{ji}|^{2})}\right]\right\} \notag
\end{align}
where $\Delta\vecx_{i}= \vecx^{\cdind{i}{x}}_{i}-\vecx^{\cdindt{i}{x}}_{i}$ (as defined before).   Next, applying the union bound, we get 
\begin{align}
& \mathbb{E}_{\vech_{i}}\lefto\{\prob{\mathcal{E}^{HK}_{i3}| \bar{\mathcal{O}}_{i3}\lefto(\Upsilon_{i3}(\vecr),\Psi_{i3}(\vecs^{*}), \vecp^{*}\right)} \right\} \leq \notag \\ & \ \ \ \ \ \  \ \ \ \ \  \SNR^{Nr_{i}}\exp\left[-\frac{\SNR^{\upsilon_{i3}^{i}(\vecr)}\|\Delta\vecx_{i}\|^{2}}{4}\right] \notag
\end{align}
since the event $\bar{\mathcal{O}}_{i3}\lefto(\Upsilon_{i3}(\vecr),\Psi_{i3}(\vecs^{*}),\vecp^{*}\right)$ entails \begin{align} \frac{\SNR|h_{ii}|^{2}}{1+\SNR^{\alpha+p_{j}-1}|h_{ji}|^{2}} \geq \SNR^{\upsilon_{i3}^{i}(\vecr)}.\end{align}  As $\|\Delta\vecx_{i}\|^{2} \ \dotgeq \ \SNR^{-\upsilon_{i3}^{i}(\vecr)+\epsilon}$, for $\epsilon > 0$, by assumption, we further have
\begin{align}
& \mathbb{E}_{\vech_{i}}\lefto\{\prob{\mathcal{E}^{HK}_{i3}}\right\} \notag \\ \ &\dotleq \ \prob{ \mathcal{O}_{i3}\lefto(\Upsilon_{i3}(\vecr),\Psi_{i3}(\vecs^{*}), \vecp^{*}\right)}+  \SNR^{Nr_{i}}\exp\left[-\SNR^{\epsilon}\right] \\
						\	& \dotleq \ \prob{ \mathcal{O}_{i3}\lefto(\Upsilon_{i3}(\vecr),\Psi_{i3}(\vecs^{*}), \vecp^{*}\right)}.
\end{align}

For the event $\mathcal{E}^{HK}_{i4}$, the receiver can cancel out the contribution of the correctly decoded message $\vecw_{i}$.  Denoting ${\matA}_{ij} = [\sqrt{\SNR^{1-p_{i}}} {\vecu}_{i}^{\cdind{i}{u}} \  \ {\vecw}_{j}^{\cdind{j}{w}}]$, $\tilde{\matA}_{ij} = [\sqrt{\SNR^{1-p_{i}}}\vecu_{i}^{\cdindt{i}{u}} \ \ \vecw^{\cdindt{j}{w}}_{j}]$, $\tilde{\vech} = [\sqrt{\SNR^{p_{i}}}h_{ii} \ \ \sqrt{\SNR^{\alpha}}h_{ji}]^{T}$, and recalling that $\Delta\matA_{ij} = \matA_{ij}-\tilde{\matA}_{ij}$,  the PEP corresponding to deciding in favor of $\tilde{\matA}_{ij}$ when $\matA_{ij}$ was actually transmitted is upper-bounded according to
\begin{align}
&\mathbb{E}_{\vech_{i}}\lefto\{	\prob{\matA_{ij} \rightarrow \tilde{\matA}_{ij}}\right\} \notag \\
     & \ \ \ \leq \mathbb{E}_{\vech_{i}}\lefto\{\exp\left[-\frac{\|\Delta\matA_{ij}\tilde{\vech}\|^{2}}{4(1+\SNR^{\alpha-(1-p_{j})}|h_{ji}|^{2})}\right]\right\} \notag \\
	& \ \ \ \leq  \mathbb{E}_{\vech_{i}} \lefto\{\exp\left[-\lambda_{\min} \frac{\SNR^{p_{i}}|h_{ii}|^{2}+\SNR^{\alpha}|h_{ji}|^{2}}{4(1+\SNR^{\alpha-(1-p_{j})}|h_{ji}|^{2})}\right]\right\} \notag \\
	&\ \ \  \leq  \exp\left[-\lambda_{\min} \SNR^{\psi_{i4}^{i}(\vecs^{*}) +\upsilon_{j4}^{j}(\vecr)-\psi_{j4}^{j}(\vecs^{*})}\right] \notag 
\end{align}
where $\lambda_{\min}$ is the smallest nonzero eigenvalue of $\Delta\matA_{ij}(\Delta\matA_{ij})^{H}$. As \begin{align} \lambda_{\min} \ \dotgeq \ \SNR^{-\psi_{i4}^{i}(\vecs^{*}) -\upsilon_{j4}^{j}(\vecr)+\psi_{j4}^{j}(\vecs^{*})+\epsilon} \end{align} with some $\epsilon > 0$, by assumption, we have
\begin{align}
& \mathbb{E}_{\vech_{i}}\lefto\{\prob{\mathcal{E}^{HK}_{i4}}\right\} \notag \\
					    \ & \dotleq \ \prob{ \mathcal{O}_{i4}\lefto(\Upsilon_{i4}(\vecr),\Psi_{i4}(\vecs^{*}), \vecp^{*}\right)}+  \SNR^{N (s_{i}+t_{j})}\exp\left[-\SNR^{\epsilon}\right] \notag \\
						\	& \dotleq \ \prob{ \mathcal{O}_{i4}\lefto(\Upsilon_{i4}(\vecr),\Psi_{i4}(\vecs^{*}), \vecp^{*}\right)} \notag .
\end{align}

For the event $\mathcal{E}^{HK}_{i5}$, the receiver cancels out the contributions of the correctly decoded $\vecu_{i}$.  Denoting $\matB_{ij} = [{\vecw}^{\cdind{i}{w}}_{i} \ \   \vecw^{\cdind{j}{w}}_{j}]$, $\tilde{\matB}_{ij} = [\vecw^{\cdindt{i}{w}}_{i} \ \ \vecw^{\cdindt{j}{w}}_{j}]$,  $\tilde{\vech} = [\sqrt{\SNR}h_{ii} \ \ \sqrt{\SNR^{\alpha}} h_{ji}]^{T}$, and recalling that $\Delta\matB_{ij} = \matB_{ij}-\tilde{\matB}_{ij}$, we have
\begin{align}
&\mathbb{E}_{\vech_{i}}\lefto\{	\prob{\matB_{ij}\rightarrow \tilde{\matB}_{ij}}\right\} \notag \\ 
& \ \ \ \leq \mathbb{E}_{\vech_{i}}\lefto\{\exp\left[-\frac{\|\Delta\matB_{ij}\tilde{\vech}\|^{2}}{4(1+\SNR^{-(1-p_{j})+\alpha}|h_{ji}|^{2})}\right]\right\} \notag \\
	& \ \ \ \leq  \mathbb{E}_{\vech_{i}} \lefto\{\exp\left[-\lambda_{\min} \frac{\SNR|h_{ii}|^{2}+\SNR^{\alpha}|h_{ji}|^{2}}{4(1+\SNR^{-(1-p_{j})+\alpha}|h_{ji}|^{2})}\right]\right\} \notag \\
	& \ \ \ \leq  \exp\left[-\lambda_{\min} \SNR^{\sum\limits_{k=1}^{2}\upsilon_{k5}^{k}(\vecr)-\sum\limits_{j=1}^{2}\psi_{j5}^{j}(\vecs^{*})}\right] \notag
\end{align}
where $\lambda_{\min}$ is the smallest nonzero eigenvalue of $\Delta\matB_{ij}(\Delta\matB_{ij})^{H}$. As \begin{align}\lambda_{\min} \ \dotgeq \ \SNR^{-\sum\limits_{k=1}^{2}\upsilon_{k5}^{k}(\vecr)+\sum\limits_{j=1}^{2}\psi_{j5}^{j}(\vecs^{*})+\epsilon} \end{align} with some $\epsilon > 0$, by assumption, we have
\begin{align}
& \mathbb{E}_{\vech_{i}}\lefto\{\prob{\mathcal{E}^{HK}_{i5}} \right\} \notag \\
\ &\dotleq \ \prob{ \mathcal{O}_{i5}\lefto(\Upsilon_{i5}(\vecr),\Psi_{i5}(\vecs^{*}), \vecp^{*}\right)}+  \SNR^{N (t_{1}+t_{2})}\exp\left[-\SNR^{\epsilon}\right] \notag \\ \notag
						\	& \dotleq \ \prob{ \mathcal{O}_{i5}\lefto(\Upsilon_{i5}(\vecr),\Psi_{i5}(\vecs^{*}), \vecp^{*}\right)} \notag.
\end{align}

Finally, for the event $\mathcal{E}^{HK}_{i6}$, all codewords are in error, so that there is nothing to cancel out.  Denoting $\matC_{ij} = [{\vecx}_{i}^{\cdind{i}{x}} \ \ {\vecw}_{j}^{\cdind{j}{w}}]$, $\widetilde{\matC}_{ij} = [\vecx^{\cdindt{i}{x}}_{i} \ \ \vecw^{\cdindt{j}{w}}_{j}]$, $\tilde{\vech} = [\sqrt{\SNR}h_{ii} \ \sqrt{\SNR^{\alpha}}h_{ji}]^T $, and recalling that $\Delta\matC_{ij} = \matC_{ij}-\tilde{\matC}_{ij}$, we obtain
\begin{align}
&\mathbb{E}_{\vech_{i}}\lefto\{	\prob{\matC_{ij}\rightarrow \widetilde{\matC}_{ij}}\right\} \notag \\ 
& \ \ \ \leq \mathbb{E}_{\vech_{i}}\lefto\{\exp\left[-\frac{\|\Delta\matC_{ij}\tilde{\vech}\|^{2}}{4(1+\SNR^{-(1-p_{j})+\alpha}|h_{ji}|^{2})}\right]\right\}  \notag \\
	& \ \ \ \leq  \mathbb{E}_{\vech_{i}} \lefto\{\exp\left[-\lambda_{\min} \frac{\SNR|h_{ii}|^{2}+\SNR^{\alpha}|h_{ji}|^{2}}{4(1+\SNR^{-(1-p_{j})+\alpha}|h_{ji}|^{2})}\right]\right\} \notag \\
	& \ \ \ \leq  \exp\left[-\lambda_{\min} \SNR^{\upsilon_{i6}^{i}(\vecr) +\upsilon_{j6}^{j}(\vecr)-\psi_{j6}^{j}(\vecs^{*})}\right] \notag
\end{align}
where $\lambda_{\min}$ is the smallest nonzero eigenvalue of $\Delta\matC_{ij}(\Delta\matC_{ij})^{H}$.  As \begin{align}\lambda_{\min} \ \dotgeq \ \SNR^{-\upsilon_{i6}^{i}(\vecr) -\upsilon_{j6}^{j}(\vecr)+\psi_{j6}^{j}(\vecs^{*})+\epsilon}\end{align} with some $\epsilon> 0$, by assumption, we have
\begin{align}
& \mathbb{E}_{\vech_{i}}\lefto\{\prob{\mathcal{E}^{HK}_{i6}}\right\} \notag \\
 \ &\dotleq \ \prob{ \mathcal{O}_{i6}\lefto(\Upsilon_{i6}(\vecr),\Psi_{i6}(\vecs^{*}), \vecp^{*}\right)}+  \SNR^{N (r_{i}+t_{j})}\exp\left[-\SNR^{\epsilon}\right]  \notag \\
						\	& \dotleq \ \prob{ \mathcal{O}_{i6}\lefto(\Upsilon_{i6}(\vecr),\Psi_{i6}(\vecs^{*}), \vecp^{*}\right)}. \notag 
\end{align}

Next, we upper-bound $\mathbb{E}_{\vech_{i}}\lefto\{\prob{\mathcal{E}^{HK}_{i}}\right\}$, $i=1,2$, as follows
\begin{align}
&\mathbb{E}_{\vech_{i}}\lefto\{\prob{\mathcal{E}^{HK}_{i}}\right\}  \leq \sum_{k=1}^{6}\mathbb{E}_{\vech_{i}}\lefto\{\prob{\mathcal{E}^{HK}_{ik}}\right\}  \\
&												 \ \dotleq \ \sum_{k=1}^{6} \prob{\mathcal{O}_{ik}\lefto(\Upsilon_{ik}(\vecr),\Psi_{ik}(\vecs^{*}), \vecp^{*}\right)} \\
& \doteq \max_{k=1,2,\dots,6} \prob{\mathcal{O}_{ik}\lefto(\Upsilon_{ik}(\vecr),\Psi_{ik}(\vecs^{*}), \vecp^{*}\right)} \\
															& \doteq \prob{\mathcal{O}^{HK}(\vecr)}.	
\end{align}
The error probability for the two-message, fixed power-split-HK scheme is given by 
\begin{align}
P\lefto(E^{HK}\right) &\doteq \max_{i=1,2} \mathbb{E}_{\vech_{i}}\lefto\{\prob{\mathcal{E}^{HK}_{i}}\right\}  \\
	  &\doteq \max\left\{\prob{\mathcal{O}^{HK}(\vecr)},\prob{\mathcal{O}^{HK}(\vecr)} \right\} \\
	 & \doteq \prob{\mathcal{O}^{HK}(\vecr)} \label{fdfgdsfgdsgxcvcbvbcb}
\end{align}
where \eqref{fdfgdsfgdsgxcvcbvbcb} follows from  the definition of $\prob{\mathcal{O}^{HK}(\vecr)}$.  From the outage lower bound \eqref{gacirt}, we have that 
\begin{align}
\prob{\mathcal{O}^{HK}(\vecr)} \ \dotleq \ P\lefto(E^{HK}\right) \ \dotleq \ \prob{\mathcal{O}^{HK}(\vecr)}
\end{align}
and therefore, 
\begin{align}
P\lefto(E^{HK}\right) \doteq \prob{\mathcal{O}^{HK}(\vecr)}.
\end{align}
\end{proof}

\begin{remark}
It turns out that the total outage probability can be described in a more simple fashion by recognizing that the constraints \eqref{cons2} and \eqref{cons5} are redundant.   An inspection of \eqref{cons2} and \eqref{cons3} immediately yields that 
$d_{i3}(\vecr,\vecs,\vecp) \leq d_{i2}(\vecr,\vecs,\vecp)$ so that \eqref{cons2} can be eliminated.   Finally, \eqref{cons5} can be eliminated as follows:
\begin{itemize}
\item{whenever $p_{j} <  1-\sum\limits_{k=1}^{2}r_{k}+s_{j}$, then \[ d_{i6}(\vecr,\vecs,\vecp) \leq d_{i5}(\vecr,\vecs,\vecp).\]}
\item{whenever $p_{j} \geq  1-\sum\limits_{k=1}^{2}r_{k}+s_{j}$ and \begin{itemize} \item[$\star$]{ $p_{j} \geq  1-\sum\limits_{k=1}^{2}r_{k}+\sum\limits_{l=1}^{2} s_{l}$, then \[d_{i6}(\vecr,\vecs,\vecp) \leq d_{i5}(\vecr,\vecs,\vecp).\]} \item[$\star$]{$p_{j} <  1-\sum\limits_{k=1}^{2}r_{k}+\sum\limits_{l=1}^{2} s_{l}$, then \[d_{j1}(\vecr,\vecs,\vecp) \leq d_{i5}(\vecr,\vecs,\vecp)\] with $i,j=1,2$ and $i \neq j$. }\end{itemize}}
\end{itemize}
It is interesting to observe that analogues of the eliminations carried out in the last step above were reported in \cite{Chong08}.  We note that the elimination of \eqref{cons2} and \eqref{cons5} is equivalent (in terms of DMT) to eliminating conditions \eqref{outage2} and \eqref{outage5} in the characterization of the total outage event in \eqref{totaloutage}. This,  in turn, is equivalent (in terms of DMT) to eliminating \eqref{unsz} and \eqref{unsz3} from the characterization of the achievable rate region  $\mathcal{R}^{*}$.    Now, it can be shown that the HK rate region described in \cite{Chong08}  evaluates precisely to the rate region $\mathcal{R}^{*}$ in \eqref{hkrater} without the constraints \eqref{unsz} and \eqref{unsz3} when the distributions of the inputs are assumed to be i.i.d. Gaussian in \cite{Chong08}.  
\end{remark}

\section{Achievable DMT of the Interference Channel}
We would like to recall that the joint decoder and the two-message, fixed-power-split HK scheme correspond to different power-splits between private and public messages (at the transmitters), different code design criteria, and different decoding algorithms.    As already mentioned, the joint decoder can be viewed as a special case of the two-message, fixed-power-split HK scheme where there are no private messages.  For a given rate tuple $\vecr$,  obviously, either $d^{HK}(\vecr)$ or $d^{JD}(\vecr)$ dominates.  Therefore, the maximum achievable DMT of the fixed-power-split HK scheme  is given by 
\begin{align} 
d(\vecr) = \max\lefto\{d^{HK}(\vecr), d^{JD}(\vecr)\right\} \label{maxref}
\end{align}
and can be achieved by using the appropriate power-split, code designs, and decoding algorithm as follows: 
\begin{itemize}
\item{If $d^{HK}(\vecr) \leq d^{JD}(\vecr)$, employ a family of codebooks satisfying the code design criteria in Theorem \ref{theoremJD},  and use the \ajml{IC}.}
\item{If $d^{HK}(\vecr) > d^{JD}(\vecr)$,  employ a family of codebooks satisfying the code design criteria, the power-split $\vecp^{*}$, and the \ajml{two-message, fixed-power-split HK scheme} in Theorem \ref{theoremHK}.}
\end{itemize} 
In the next section, we show that the fixed-power-split HK scheme is DMT-optimal for certain interference levels. Specifically, we call ICs with $ 1 >\alpha \geq 2/3 $, $ 2> \alpha \geq 1 $, and $ \alpha \geq 2 $ \emph{moderate}, \emph{strong} and \emph{very strong} ICs in the sense of \cite{Etkin08}, respectively.  Next, we will show that the fixed-power-split HK scheme  is DMT-optimal under \emph{moderate}, \emph{strong} and \emph{very strong interference} for symmetric multiplexing rates, i.e., for $r=r_{1}=r_{2}$.

\section{DMT-Optimality}
In this section, we derive an outer bound on the DMT region of the IC that is tighter than the outer bound derived in \cite{Akuiyibo08} for some interference levels.  It turns out that for symmetric multiplexing rates, i.e., when $r=r_{1}=r_{2}$, the two-message, fixed-power-split HK scheme achieves this outer bound for all $\alpha \geq 2/3$. Hence, for $\alpha \geq 2/3$, the two-message, fixed-power-split HK scheme is DMT-optimal for symmetric multiplexing rates.  For $\alpha < 2/3$, unfortunately,  the two-message, fixed-power-split HK scheme does not reach our outer bound.  For asymmetric rate requirements, i.e., when $r_{1}\neq r_{2}$,  we show that the two-message, fixed-power-split HK scheme is DMT-optimal for $\alpha \geq 1$.   We proceed by presenting our outer bound.

\subsection{Outer bound on DMT}
\label{laylaylom}
We consider outer-bounding the capacity region of the IC by providing $\mathcal{R}_{2}$ with the side information $\vecx_{1}$.  As $\mathcal{R}_{2}$ knows the fading coefficient $h_{12}$ perfectly (by assumption), it can cancel the interference out completely, leaving a one-sided IC as depicted in Fig. \ref{onesided}.  Further, we assume that a genie reveals the fading coefficient $h_{21}$ to $\mathcal{T}_{2}$.  It is shown in \cite{Etkin06, Etkin08} that the capacity region of the IC is contained in the following region 
\begin{align}
\mathcal{R}_{ETW}^{1} \triangleq 
\begin{cases}
	\mathcal{D}_{1}, & \SNR^{\alpha}|h_{21}|^{2} < 1 \\
	\mathcal{D}_{2}, & \SNR^{\alpha}|h_{21}|^{2} \geq 1
\end{cases}
\end{align}
where 
\begin{align}
	\mathcal{D}_{1} &\triangleq (S_{1}, T_{1}, S_{2}, T_{2}) :  \notag \\
	&S_{1}+T_{1} \leq \log\left(1+\SNR|h_{11}|^{2}\right) +1  \notag \\
	&S_{2}+T_{2} \leq \log\left(1+\SNR|h_{22}|^{2}\right) +1 \notag \\
\mathcal{D}_{2} &\triangleq (S_{1}, T_{1}, S_{2}, T_{2}) :  \notag \\
	&S_{1}+T_{1} \leq \log\left(1+\SNR|h_{11}|^{2}\right) +1  \notag \\
	&S_{1}+T_{1}+T_{2} \leq \log\left(1+\SNR|h_{11}|^{2}+\SNR^{\alpha}|h_{21}|^{2}\right) +1 \notag \\
	&S_{2} \leq \log\left(1+\SNR^{1-\alpha}\frac{|h_{22}|^{2}}{|h_{21}|^{2}}\right) +1 \notag \\
	&S_{2} + T_{2} \leq \log\left(1+\SNR|h_{22}|^{2}\right) +1 \notag.
\end{align}
For a set $\mathcal{S}$ of quadruples $\{S_{1}, T_{1}, S_{2}, T_{2}\}$, let $\prod(\mathcal{S})$ be the corresponding set of rate pairs such that $R_{1}=S_{1}+T_{1}$ and $R_{2} = S_{2}+T_{2}$.  We recall that $S_{i}=s_{i}\log\SNR$, $T_{i}=t_{i}\log\SNR$, and $R_{i}=r_{i}\log\SNR$ for $i=1,2$. Then, the set
\begin{align}
\mathcal{R}_{ETW}^{*} \triangleq \prod\left(\mathcal{R}_{ETW}^{1}\right)
\end{align}
is an outer bound on the achievable rate region for the IC, i.e., we have
\begin{align}
	\mathcal{R}_{ETW}^{*} \supseteq \mathcal{R}^{\dagger} \label{kitten}
\end{align}
where $\mathcal{R}^{\dagger}$ is any achievable rate region of the IC.  Next, we define the events 
\begin{align}
&   \mathcal{A} \triangleq \left\{h_{21}: \SNR^{\alpha}|h_{21}|^{2} <1 \right\} \notag \\
&   \bar{\mathcal{A}} \triangleq \left\{h_{21}: \SNR^{\alpha}|h_{21}|^{2} \geq 1 \right\} \notag
\end{align}
and 
\begin{align}
&	\mathcal{O}^{ETW}_{1i}(\vecr,\vecs) \triangleq \notag \\
&	\left\{\vech_{i} :  \log\left(1+\SNR|h_{ii}|^{2}\right) +1 < S_{i}+T_{i}\right\} \ \text{for} \ i=1,2\notag \\
&	\mathcal{O}^{ETW}_{13}(\vecr,\vecs) \triangleq \notag \\
&	\left\{\vech_{1} :  \log\left(1+\SNR|h_{11}|^{2}+\SNR^{\alpha}|h_{21}|^{2}\right) +1 < S_{1}+T_{1}+T_{2}   \right\} \notag \\
&   \mathcal{O}^{ETW}_{14}(\vecr, \vecs) \triangleq \notag \\
&	\left\{\vech_{2} :  \log\left(1+\SNR^{1-\alpha}\frac{|h_{22}|^{2}}{|h_{21}|^{2}}\right) +1 < S_{2} \right\} \notag.
\end{align}
Since any achievable rate region for the IC is contained in $\mathcal{R}_{ETW}^{*}$, it follows that the error probability of any scheme communicating over the IC is lower-bounded by
\begin{align} 
\prob{\mathcal{O}^{ETW}(\vecr)} \triangleq \min_{\vecs} \prob{\mathcal{O}^{ETW}_{1}(\vecr, \vecs)}
\end{align}
where the minimization is carried out subject to
\begin{align}
	r_{i} = s_{i} + t_{i} \\
	s_{i}, t_{i} \geq 0 \\
	s_{i}, t_{i} \leq r_{i}
\end{align}
for $i=1,2$ with 
\begin{align}
\mathcal{O}^{ETW}_{1}(\vecr, \vecs) \triangleq \mathcal{K}_{1}(\vecr,\vecs) \bigcup \mathcal{K}_{2}(\vecr,\vecs)
\end{align}
and 
\begin{align}
\mathcal{K}_{1}(\vecr,\vecs) &\triangleq \left(\bigcup_{i=1,2}\mathcal{O}^{ETW}_{1i}(\vecr,\vecs)\right) \bigcap \mathcal{A} \\
\mathcal{K}_{2}(\vecr,\vecs) &\triangleq \left(\bigcup_{i=1,2,3,4}\mathcal{O}^{ETW}_{1i}(\vecr,\vecs)\right) \bigcap \bar{\mathcal{A}}.
\end{align}
Next, we compute $\prob{\mathcal{O}^{ETW}_{1}(\vecr, \vecs)}$. We note that $\mathcal{O}^{ETW}_{1}(\vecr, \vecs) \triangleq \mathcal{K}_{1}(\vecr,\vecs) \bigcup \mathcal{K}_{2}(\vecr,\vecs)$ can equivalently be characterized as:
\begin{align}
&\mathcal{O}^{ETW}_{1}(\vecr, \vecs) = \notag \\ & \notag \left(\bigcup_{i=1,2} \mathcal{O}^{ETW}_{1i}(\vecr,\vecs)\right)\bigcup\left(\bigcup_{i=3,4} \mathcal{O}^{ETW}_{1i}(\vecr,\vecs)\bigcap \bar{\mathcal{A}}\right).
\end{align}
It follows that we can upper-bound $\prob{\mathcal{O}^{ETW}_{1}(\vecr, \vecs)}$ according to 
\begin{align}
& \prob{\mathcal{O}^{ETW}_{1}(\vecr, \vecs)} \leq  \notag \\ & \sum_{i=1}^{2}\prob{\mathcal{O}^{ETW}_{1i}(\vecr,\vecs)} + \sum_{i=3}^{4} \prob{\mathcal{O}^{ETW}_{1i}(\vecr,\vecs)\bigcap \bar{\mathcal{A}}} . \label{up}
\end{align}
We can also lower-bound $\prob{\mathcal{O}^{ETW}_{1}(\vecr, \vecs)}$ according to
\begin{align}
\prob{\mathcal{O}^{ETW}_{1i}(\vecr,\vecs)}  \leq \prob{\mathcal{O}^{ETW}_{1}(\vecr, \vecs)}  \label{low1}
\end{align}
for $i=1,2$.  Further, for $i=3,4$, we have
\begin{align}
\prob{\mathcal{O}^{ETW}_{1i}(\vecr,\vecs)\bigcap \bar{\mathcal{A}}} \leq \prob{\mathcal{O}^{ETW}_{1}(\vecr, \vecs)} \label{low2}.
\end{align}
 We only need to compute the SNR exponents of the upper and lower bounds to obtain the SNR exponent of $\prob{\mathcal{O}^{ETW}_{1}(\vecr, \vecs)}$.  It is shown in \cite{Akuiyibo08} that 
\begin{align}
\prob{\mathcal{O}^{ETW}_{1i}(\vecr,\vecs)} \doteq \SNR^{-d_{1i}^{ETW}(\vecr,\vecs)} \label{xx331}
\end{align}
where $d_{1i}^{ETW}(\vecr,\vecs) = (1-r_{i})^{+}$ for $i=1,2$, and 
\begin{align}
\prob{\mathcal{O}^{ETW}_{13}(\vecr,\vecs)} & \doteq \SNR^{-d_{13}^{ETW}(\vecr,\vecs)} \label{xx332} \\
\prob{\mathcal{O}^{ETW}_{14}(\vecr,\vecs)} & \doteq \SNR^{-d_{14}^{ETW}(\vecr,\vecs)}  \label{xx333} 
\end{align}
with
\begin{align} 
d_{13}^{ETW}(\vecr,\vecs) &= (1-r_{1}-r_{2}+s_{2})^{+}+(\alpha-r_{1}-r_{2}+s_{2})^{+} \notag \\
d_{14}^{ETW}(\vecr,\vecs) &= \begin{cases} 
									(1-\alpha-s_{2})^{+}, & \text{if} \ s_{2} >  0 \ {\rm and}\ \alpha <1 \label{userl} \\
									1, &  \text{if} \ s_{2} = 0  \  \\
									0, &   \text{if} \ s_{2} > 0  \ {\rm and} \ \alpha \geq 1. 
									\end{cases}
\end{align}
Combining \eqref{xx331}-\eqref{xx333} with \eqref{low1}-\eqref{low2} and \eqref{up}, it follows that 
 \begin{align}
 \prob{\mathcal{O}^{ETW}_{1}(\vecr, \vecs)} \doteq \SNR^{-d^{ETW}_{1}(\vecr,\vecs)}
 \end{align}
where 
\begin{align}
 d^{ETW}_{1} = \min_{i=1,2,3,4} d_{1i}^{ETW}(\vecr,\vecs).
\end{align}
The SNR exponent of $\prob{\mathcal{O}^{ETW}(\vecr)}$  is then obtained as
\begin{align} 
\prob{\mathcal{O}^{ETW}(\vecr)}= \min_{\vecs} \SNR^{-d^{ETW}_{1}(\vecr,\vecs)}    \\
									 =  \SNR^{-\max_{\vecs}d^{ETW}_{1}(\vecr,\vecs)}  \label{klo}
\end{align}
where the optimization is carried out subject to
\begin{align}
	r_{i} = s_{i} + t_{i} \\
	s_{i}, t_{i} \geq 0 \\
	s_{i}, t_{i} \leq r_{i}. 
\end{align}
The error probability lower bound \eqref{klo} is in general difficult to evaluate.  However, we show in the next subsection that in some cases, this bound can be evaluated very easily. 

\begin{figure}[htbp]
\begin{center}
\includegraphics[scale=0.4]{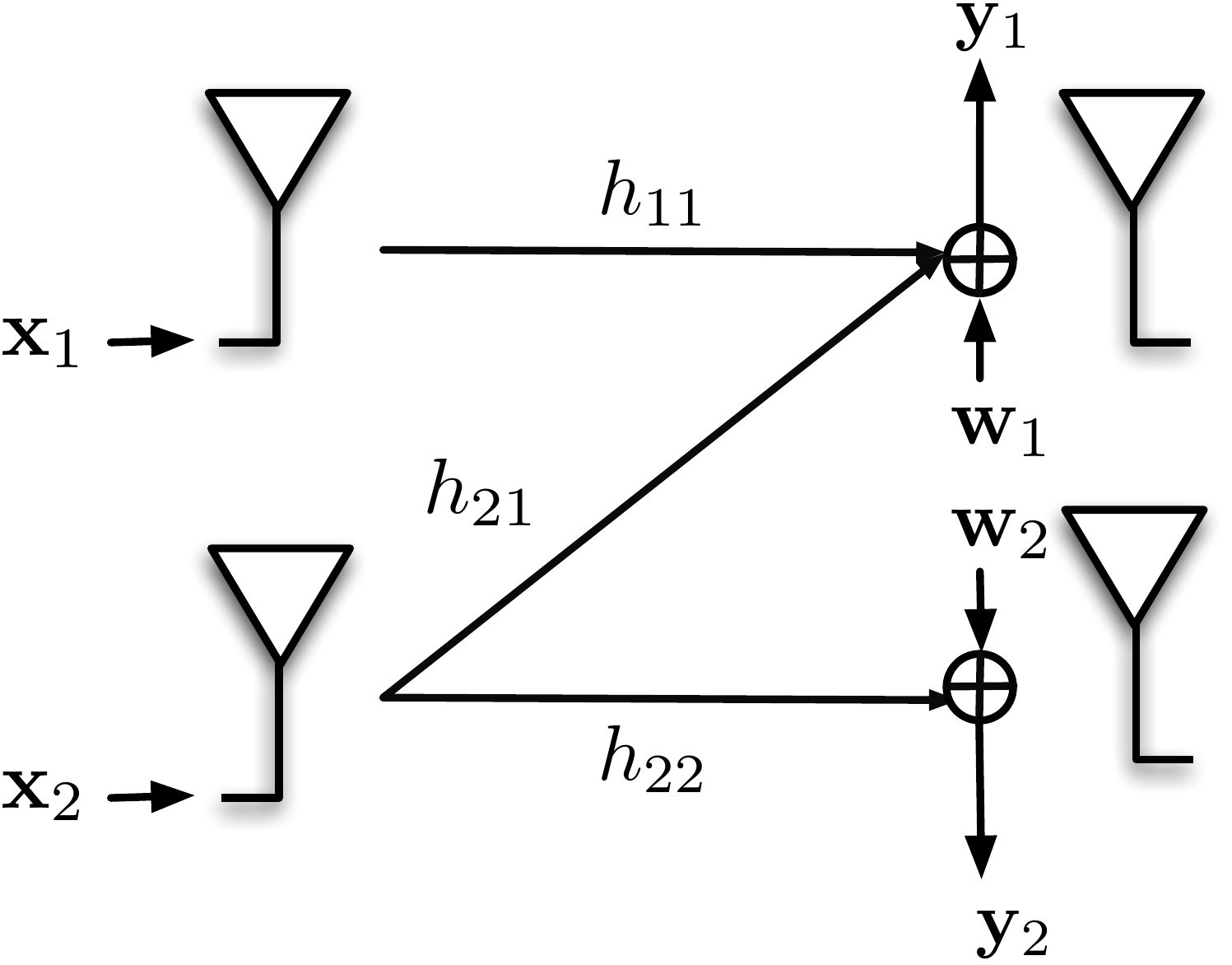}
\caption{One-sided interference channel}
\label{onesided}
\end{center}
\end{figure}

\subsection{The case $\alpha \geq 1$}
It follows immediately from the outer bound \eqref{klo} that the \ajmlong{IC} achieves the optimal DMT of the IC for all interference levels $\alpha \geq 1$. We denote the minimizing value of $\vecs$ in \eqref{klo} by $\vecs^{\dagger}$ and note that the DMT outer bound in Section \ref{laylaylom} can be simplified according to
\begin{align}
d^{ETW}_{1}(\vecr,\vecs^{\dagger}) = d^{JD}(\vecr).
\end{align}
Upon inspection of \eqref{userl}, we see that choosing any $s_{2} > 0$ results in $d_{i4}^{ETW}(\vecr,\vecs) = 0$ for $\alpha \geq 1$. Hence, for any $s_{2}> 0$, we have $d_{1}^{ETW}(\vecr,\vecs) = 0$.   For $s_{2} =0$, we get  
\begin{align}
d_{1i}^{ETW}(\vecr,\vecs) &= (1-r_{i})^{+} \ {\rm for} \ i=1,2 \label{opt1} \\
d_{13}^{ETW}(\vecr,\vecs) &= (1-r_{1}-r_{2})^{+}+(\alpha-r_{1}-r_{2})^{+} \label{opt2} \\
d_{14}^{ETW}(\vecr,\vecs) &= 1 \label{opt3}.
\end{align}
Therefore, $d^{ETW}_{1}(\vecr,\vecs^{\dagger})$ is equivalent to $d^{JD}(\vecr)$ by inspection of \eqref{jfafjfffffsgh} and \eqref{opt1}-\eqref{opt3}. 

\subsection{The case $1>\alpha \geq 2/3$}
For the case $1 > \alpha \geq 2/3$ and for general multiplexing rates for the two transmitters, proving optimality of the two-message, fixed-power-split HK scheme remains elusive.  However, we can show that the two-message, fixed-power-split HK scheme is DMT-optimal for $r_{1}=r_{2} = r$.  The maximum DMT of the two-message, fixed-power-split HK scheme is achieved for $1 \geq\alpha \geq 2/3$ as follows:
\begin{itemize}
	\item{for $r<\alpha/2$, use the \ajmlong{IC} according to Theorem \ref{theoremJD}}
		\item{for $r\geq \alpha/2$, use the \emph{joint ML decoder for the two-message, fixed-power-split HK scheme} according to Theorem \ref{theoremHK} with $p_{i} = 1-\alpha$ and $s_{i}=r-\alpha/2$ for $i=1,2$.}
\end{itemize}
We recall that in the case of symmetric multiplexing rates ($r_{1}=r_{2}=r$),  we have that $s = s_{i}$ for $i=1,2$.  It turns out that the DMT outer bound in \eqref{klo} can be maximized according to
\begin{itemize}
	\item{for $r<\alpha/2$, set $s= 0$.} 
	\item{for $r\geq \alpha/2$, set $s=r-\alpha/2$.}
\end{itemize}

With these choices of optimizing values, an inspection of the DMT outer bound in \eqref{klo} and the achievable region \eqref{maxref} yields that the two regions are equivalent.  Hence, for $1\geq \alpha \geq 2/3$ and $r_{1}=r_{2}=r$,  we have shown that the fixed-power-split HK scheme achieves the optimal DMT.


\section{Very Strong Interference} 
We recall that channels with $\alpha \geq 2$ are called \emph{very strong interference channels} in the sense of \cite{Etkin08}.  We shall see that the condition $\alpha \geq 2$ enables each transmitter-receiver pair to communicate as if the interference were not present.   In this section, we restrict to $\alpha \geq 2$ and show that the joint decoder and a \emph{stripping decoder},  which decodes interference while treating the intended signal as noise, subtracts the result out, and then decodes the desired signal,  are optimal for the IC under very strong interference.

\subsection{Joint decoder}
\label{VSint123255}
Consider the steps \eqref{fadgk} and \eqref{fadgfj} in the proof of the achievable DMT of joint decoding.  We can upper-bound $\prob{\mathcal{E}^{JD}_{ik}}$ as 
\begin{align}
&\prob{\mathcal{E}^{JD}_{ik}} = \prob{\mathcal{E}^{JD}_{ik}, \mathcal{O}^{JD}_{ik}(\vecr)} +\prob{\mathcal{E}^{JD}_{ik}, \bar{\mathcal{O}}^{JD}_{ik}(\vecr)}  \notag \\
&\leq   \prob{ \mathcal{O}^{JD}_{ik}(\vecr)}+ \prob{\mathcal{E}^{JD}_{ik}| \bar{\mathcal{O}}^{JD}_{ik}(\vecr)}  \label{ghas}
\end{align}
for $k=1,2$. We will see that this approach leads to stricter design criteria, but in exchange enables us to decouple the IC as we will  demonstrate shortly.  Using \eqref{stars} in \eqref{ghas} and noting that $\bar{\mathcal{O}}^{JD}_{ik}(\vecr)$ entails $\SNR|h_{ii}|^{2}+\SNR^{\alpha}|h_{ji}|^{2} \geq \SNR^{r_{1}+r_{2}}-1$, we can upper-bound $\mathbb{E}_{\vech_{i}} \lefto\{\prob{\mathcal{E}^{JD}_{i2}}\right\}$ according to 
\begin{align}
&\mathbb{E}_{\vech_{i}} \lefto\{\prob{\mathcal{E}^{JD}_{i2}}\right\} \ \dotleq \  \\&  \prob{ \mathcal{O}_{i2}^{JD}\left(\vecr\right)}+ \SNR^{N(r_{1}+r_{2})}\exp\left[-\frac{\lambda_{\min}\SNR^{r_{1}+r_{2}}}{4}\right]   \notag.
\end{align}
We recall that $\lambda_{\min}$ is the smallest eigenvalue of $\Delta\matX_{ij}(\Delta\matX_{ij})^{H}$.  Hence, if $\lambda_{\min} \ \dotgeq \ \SNR^{-r_{1}-r_{2}+\epsilon}$ for some $\epsilon>0$, we have that 
\begin{align}
	&\mathbb{E}_{\vech_{i}} \lefto\{\prob{\mathcal{E}^{JD}_{i2}}\right\} \ \dotleq \ \prob{\mathcal{O}^{JD}_{i2}(\vecr)} \label{abc}.
\end{align}
 
 Similarly, using \eqref{gasgafas} in \eqref{ghas} and noting that $\mathcal{O}_{i1}^{JD}$ entails $\SNR|h_{ii}|^{2} \geq \SNR^{r_{i}}-1$, we get 
 \begin{align}
&\mathbb{E}_{\vech_{i}}\lefto\{\prob{\mathcal{E}^{JD}_{i1}}\right\} \notag \\ 
& \ \ \ \dotleq \ \prob{ \mathcal{O}_{i1}^{JD}\left(\vecr\right)}+ \SNR^{Nr_{i}}\exp\left[-\frac{\SNR^{r_{i}}\|\Delta\vecx_{i}\|^2}{4}\right] \label{rhsboy}.
 \end{align}
 If $\|\Delta\vecx_{i}\|^{2} \ \dotgeq \ \SNR^{-r_{i}+\epsilon}$ for some $\epsilon  > 0 $ for every pair of codewords, the second term on the RHS of \eqref{rhsboy} decays exponentially, leaving the polynomially decaying term, according to
 \begin{align}
 \mathbb{E}_{\vech_{i}}\lefto\{\prob{\mathcal{E}^{JD}_{i1}}\right\} \ \dotleq \ \prob{ \mathcal{O}_{i1}^{JD}\left(\vecr\right)} \label{bcg}.
 \end{align}
 Inserting \eqref{abc} and \eqref{bcg} into \eqref{hoijd}, we get 
 \begin{align}
& \mathbb{E}_{\vech_{i}}\lefto\{\prob{\mathcal{E}^{JD}_{i}} \right\}  \leq \sum_{k=1}^{2}\mathbb{E}_{\vech_{i}}\lefto\{ \prob{\mathcal{E}_{ik}^{JD}} \right\} \\
										&\ \dotleq \  \prob{ \mathcal{O}_{i1}^{JD}\left(\vecr\right)} +\prob{ \mathcal{O}_{i2}^{JD}\left(\vecr\right)} \\  
										& \doteq \ \SNR^{-(1-r_{i})^{+}} + \SNR^{-(1-r_{1}-r_{2})^{+} - (\alpha-r_{1}-r_{2})^{+}} \label{ffsazez}
\end{align}
for $i=1,2$.   We simplify \eqref{ffsazez} for $\alpha \geq 2$ to get 
\begin{align}
	\mathbb{E}_{\vech_{i}}\lefto\{\prob{\mathcal{E}^{JD}_{i}} \right\} \ \dotleq \ \SNR^{-(1-r_{i})^{+}}.
\end{align}
We recall that $P(E_{ii})$ is the average ML error probability under the assumption that the perfectly decoded interference has been removed.  We note that $P(E_{ii})$ is a lower bound on $\mathbb{E}_{\vech_{i}}\lefto\{\prob{\mathcal{E}^{JD}_{i}} \right\}$.  Further, by the outage bound on $P(E_{ii})$ \cite{zheng_tradeoff}, $P(E_{ii})$ is lower-bounded according to
\begin{align}
\SNR^{-(1-r_{i})^{+}}\  \dotleq \ P(E_{ii}) \ \dotleq \ \mathbb{E}_{\vech_{i}}\lefto\{\prob{\mathcal{E}^{JD}_{i}} \right\} \ \dotleq \  \SNR^{-(1-r_{i})^{+}} \notag.
\end{align}
Hence, we get
\begin{align}
\ P(E_{ii}) \doteq  \mathbb{E}_{\vech_{i}}\lefto\{\prob{\mathcal{E}^{JD}_{i}} \right\}  \doteq   \SNR^{-(1-r_{i})^{+}}.
\end{align} This shows that under very strong interference, the IC  is effectively \emph{decoupled}, in the sense that, it is possible to achieve the performance of two point-to-point SISO systems without interference, provided that we employ a family of codebooks that satisfy
\begin{align}
\|\Delta\vecx_{i}\|^{2} \ \dotgeq \ \SNR^{-r_{i}+\epsilon} \\
\lambda_{\min}\left( \Delta\matX_{ij}(\Delta\matX_{ij})^{H}\right) \ \dotgeq \ \SNR^{-r_{1}-r_{2}+\epsilon}
\end{align}
for all pairs of codewords $\vecx_{i}^{n_{i}},  \vecx_{i}^{\tilde{n}_{i}} \in \mathcal{C}_{i}(\SNR,r_{i})$ s.t. $\vecx_{i}^{n_{i}} \neq \vecx^{\tilde{n}_{i}}_{i}$, $\vecx_{j}^{n_{j}},  \vecx_{j}^{\tilde{n}_{j}} \in \mathcal{C}_{j}(\SNR,r_{j})$ s.t. $\vecx_{j}^{n_{j}} \neq \vecx^{\tilde{n}_{j}}_{j}$  for $i,j = 1,2$ and $i \neq j$,  where  $\Delta \vecx_{i} = \vecx_{i}^{n_{i}} - \vecx^{\tilde{n}_{i}}_{i}$, $\Delta \vecx_{j} = \vecx_{j}^{n_{j}} - \vecx^{\tilde{n}_{j}}_{j}$,   and $\Delta\matX_{ij} = [\Delta\vecx_{i} \ \Delta\vecx_{j}]$, and $\lambda_{\min}(\Delta\matX_{ij} (\Delta\matX_{ij})^{H})$ denotes the smallest nonzero eigenvalue of $\Delta\matX_{ij} (\Delta\matX_{ij})^{H}$, for some\footnote{We note that $\epsilon$ is allowed to be different in \eqref{xxczzxvz} and \eqref{xxczzxvz2}. } $\epsilon >  0$, with a power-split according to $p_{i}=-\infty$ for $i=1,2$ and the receiver algorithm corresponding to the joint decoder described earlier. Hence, the joint decoder is DMT-optimal under very strong interference.  What is more, as shown next, a stripping decoder achieves the DMT performance of the joint decoder, and therefore, is also DMT-optimal.

\subsection{Stripping decoder}
In this section, we take $N=1$; we will see later that optimal performance can be achieved for $N\geq 1$, in contrast to the fixed-power-split HK scheme.  In the following, we use the short-hand $x_{i}$ for the first element of the transmit signal vector $\vecx_{i}$, $y_{i}$ for the first element of the receive signal vector $\vecy_{i}$, and $\mathcal{X}_{i}$ for $\mathcal{C}_{i}(\SNR,r_{i})$. 

We write  $\prob{E_{ij}|\vech_{j}}$ for $i,j=1,2$ and $i\neq j$ for the ML decoding error probability of decoding $ \mathcal{T}_{i}$ at receiver $\mathcal{R}_{j}$ under the assumption that $ \mathcal{T}_{j}$ is treated as noise. We define the respective average ML decoding error probability as $P(E_{ij}) = \mathbb{E}_{\vech_{j}}\lefto\{\prob{E_{ij}|\vech_{j}}\right\}$.  We assume throughout that the transmit symbols are equally likely for both transmitters, and hence $\prob{x_{i}} = \frac{1}{|\mathcal{X}_{i}|}$ for $i=1,2$.



In the following, we show that a stripping decoder achieves the DMT outer bound in \cite{Akuiyibo08} given by 
\begin{align}
\label{Leveq}d(r) \leq \min\{(1-r_{1})^{+},(1-r_{2})^{+}\}.
  \end{align}
\begin{theorem}
\label{VSint1232}
For the fading IC with I/O relation (\ref{intc1s})-(\ref{intc2s}), we have   
\begin{align}
		P(E) \doteq \SNR^{- \min\{(1-r_{1})^{+},(1-r_{2})^{+}\}}
\end{align}
provided that $\Delta x_{i} = x^{j}_{i}-x_{i}^{k}$ satisfies $|\Delta x_{i}|^{2} \ \dotgeq \ \SNR^{-r_{i}+\epsilon}$ for every pair $x^{j}_{i},x^{k}_{i}$ in each codebook $\mathcal{X}_{i}$, $i=1,2$, and  for some $\epsilon > 0 $. 
\end{theorem}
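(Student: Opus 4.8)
The plan is to treat the stripping decoder at each receiver as a two-stage procedure and to show that, under very strong interference, the interference-decoding stage is reliable enough that the overall diversity is set entirely by the single-user exponent of the desired-signal stage. Fix $\mathcal{R}_i$, whose intended transmitter is $\mathcal{T}_i$ and whose interferer is $\mathcal{T}_j$ ($j\neq i$). The decoder first decodes $x_j$ while treating $x_i$ as noise (error governed by $P(E_{ji})$), cancels $\sqrt{\SNR^{\alpha}}h_{ji}\hat{x}_j$ out, and then decodes $x_i$ from the residual; when the interferer is recovered correctly the residual is the clean SISO channel $\sqrt{\SNR}h_{ii}x_i+z_i$, so the second stage incurs error $P(E_{ii})$. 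Since a mis-decoded interferer generically corrupts the residual, a union bound over the two stages yields, for the stripping-decoder error $P(E_i)$ of $\mathcal{T}_i$ at $\mathcal{R}_i$, the decomposition $P(E_i)\ \dotleq\ P(E_{ji})+P(E_{ii})$.

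Next I would bound the two terms separately. For the desired-signal stage, the hypothesis $|\Delta x_i|^2\ \dotgeq\ \SNR^{-r_i+\epsilon}$ together with the standard pairwise-error-probability-plus-outage argument of \cite{zheng_tradeoff} applied to $y_i=\sqrt{\SNR}h_{ii}x_i+z_i$ gives the point-to-point SISO tradeoff $P(E_{ii})\ \doteq\ \SNR^{-(1-r_i)^{+}}$. For the interference stage I would upper-bound the error by treating $x_i$ as worst-case Gaussian noise of variance $1+\SNR|h_{ii}|^2$ (as in the HK proof, invoking \cite{Hassibi03}), so that the effective SINR for $x_j$ is $\SNR^{\alpha}|h_{ji}|^2/(1+\SNR|h_{ii}|^2)$. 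Writing $|h_{ji}|^2\ \doteq\ \SNR^{-a}$ and $|h_{ii}|^2\ \doteq\ \SNR^{-b}$ with $a,b\geq 0$, the outage event $\log(1+\mathrm{SINR})<R_j$ reads $\alpha-a-(1-b)^{+}<r_j$; minimizing $a+b$ (the negative SNR-exponent of the joint channel probability) over this region, whose optimum is $a=\alpha-1-r_j$, $b=0$, yields outage exponent $\alpha-1-r_j$. The criterion $|\Delta x_j|^2\ \dotgeq\ \SNR^{-r_j+\epsilon}$ makes the conditional no-outage PEP decay like $\exp[-\SNR^{\epsilon}]$, negligible against the polynomial outage term, so $P(E_{ji})\ \doteq\ \SNR^{-(\alpha-1-r_j)}$.

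Assembling the pieces, $P(E_i)\ \doteq\ \SNR^{-\min\{\alpha-1-r_j,\,(1-r_i)^{+}\}}$, and therefore
\begin{align}
P(E)=\max_{i=1,2}P(E_i)\ \doteq\ \SNR^{-\min\{\alpha-1-r_2,\,(1-r_1)^{+},\,\alpha-1-r_1,\,(1-r_2)^{+}\}}. \notag
\end{align}
The decisive fact is that $\alpha\geq 2$ is precisely the condition making $\alpha-1-r_j\geq 1-r_j=(1-r_j)^{+}$ for $j=1,2$; since each interference term then dominates one of the single-user terms, the four-way minimum collapses to $\min\{(1-r_1)^{+},(1-r_2)^{+}\}$, giving the desired upper bound on $P(E)$. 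For the matching lower bound I would observe that a genie supplying the correct interference can only improve performance, so $P(E_i)\geq P(E_{ii})\ \dotgeq\ \SNR^{-(1-r_i)^{+}}$ by the single-user outage lower bound \cite{zheng_tradeoff}; hence $P(E)\ \dotgeq\ \SNR^{-\min\{(1-r_1)^{+},(1-r_2)^{+}\}}$, and the two bounds coincide.

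I expect the main obstacle to be the outage-exponent computation for the interference stage, $P(E_{ji})\ \doteq\ \SNR^{-(\alpha-1-r_j)}$, and in particular the recognition that $\alpha\geq 2$ is exactly the threshold at which $\alpha-1-r_j\geq(1-r_j)^{+}$. This is what guarantees that decoding the interferer---even while suffering the full intended signal as noise---never limits the diversity, so that the minimum collapses cleanly to $\min\{(1-r_1)^{+},(1-r_2)^{+}\}$. A secondary technical point is justifying the worst-case Gaussian-noise bound and the error-propagation reasoning behind the two-stage union bound, both of which are needed to make the decomposition $P(E_i)\ \dotleq\ P(E_{ji})+P(E_{ii})$ rigorous.
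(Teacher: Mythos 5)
Your proposal is correct and follows essentially the same route as the paper's own proof: a two-stage stripping decoder whose interference stage treats the intended signal as worst-case Gaussian noise (invoking \cite{Hassibi03}), code design criteria that make the conditional no-outage PEPs decay exponentially so that the outage terms with exponents $(\alpha-1-r_{j})^{+}$ and $(1-r_{i})^{+}$ dominate, the observation that $\alpha \geq 2$ collapses the resulting minimum to $\min\{(1-r_{1})^{+},(1-r_{2})^{+}\}$, and a genie/outage argument for the matching lower bound. The only differences are cosmetic: you derive the interference-stage outage exponent $\alpha-1-r_{j}$ directly by exponent minimization where the paper cites \cite{Akuiyibo08}, and you assert $\doteq$ for $P(E_{ji})$ and $P(E_{i})$ where your argument (like the paper's) only establishes $\dotleq$ --- which suffices, since the lower bound comes separately from the single-user outage bound.
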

\begin{proof}
In the following, we show that a stripping decoder achieves the optimal DMT region.  We start by decoding $ \mathcal{T}_{2}$ at $\mathcal{R}_{1}$ while treating $ \mathcal{T}_{1}$ as noise,  i.e., we have the effective I/O relation 
\begin{align}
\label{xzzxe}
	y_{1} & = \sqrt{ \SNR^{\alpha} } h_{21}x_{2} + \tilde{z}
\end{align}
where	$\tilde{z}$ is the effective noise term with variance $1+ \SNR|h_{11}|^{2}$. We next note that the worst case (in terms of mutual information and hence outage probability) uncorrelated (with the transmit signal) additive noise under a variance constraint is Gaussian \cite[Theorem 1]{Hassibi03}.  In the following, we use the corresponding worst case outage probability to exponentially upper-bound $P(E_{21})$, i.e., we set $\tilde{z}\sim \mathcal{CN}(0, 1+ \SNR|h_{11}|^{2})$.   We start by normalizing the received signal according to
\begin{align}
	\frac{y_{1}}{\sqrt{1+\SNR|h_{11}|^2}} & = \sqrt{\frac{\SNR^{\alpha}}{1+\SNR|h_{11}|^2}}h_{21}x_{2}+ z
\end{align}	
where $ z \sim \mathcal{CN}(0,1)$.  We can now upper-bound $\prob{E_{21}|\vech_{1}}$ as
\begin{align}
&	\prob{E_{21}|\vech_{1}} = \sum_{x_{2} \in \mathcal{X}_{2}} \prob{x_{2}}\prob{E_{21}| \vech_{1}, x_{2} } \\
\label{equipp}							    & \ \ \ = \frac{1}{|\mathcal{X}_{2}|} \sum_{i=1}^{|\mathcal{X}_{2}|} \prob{\bigcup\limits_{\substack{ j =1    \\ {j  \neq i}}}^{|\mathcal{X}_{2}|}  \! x^{i}_{2} \rightarrow x^{j}_{2}\left.\right| \vech_{1}} \\
\label{unionbound}							    & \ \ \ \leq |\mathcal{X}_{2}| \prob{x^{\tilde{i}}_{2} \rightarrow x^{\tilde{j}}_{2}\left.\right| \vech_{1}} \\
							 \label{cortez}   &\ \ \ \leq |\mathcal{X}_{2}| {\rm Q} \left(\sqrt{\frac{\SNR^{\alpha}|h_{21}|^{2}|\Delta x_{2}|^{2}}{2(1+\SNR|h_{11}|^{2})}}\right),
\end{align}
where $\left\{x_{2}^{\tilde{i}}, x_{2}^{\tilde{j}}\right\}$ denotes the (or ``a'' in the case of multiple pairs with the same distance) pair of symbols with minimum Euclidean distance among all possible pairs of different symbols.  We now define the outage event $\mathcal{O}_{ii}$ associated with decoding $ \mathcal{T}_{i}$ at $\mathcal{R}_{i}$ ($i=1,2$) in the absence of  interference and its complementary event $\bar{\mathcal{O}}_{ii}$  as follows
\begin{align}
\label{fax}	\mathcal{O}_{ii} & \triangleq \left\{h_{ii}: \log\left(1+\SNR|h_{ii}|^{2}\right) < R_{i} \right\} \\
\label{faxoz}	\bar{\mathcal{O}}_{ii} &\triangleq \left\{h_{ii}: \log\left(1+\SNR|h_{ii}|^{2}\right) \geq R_{i} \right\}.
\end{align}
We note that this definition is consistent with the definition of $P(E_{ii})$. Similarly, we define the event $\mathcal{O}_{ij}$ associated with decoding $ \mathcal{T}_{i}$ at $\mathcal{R}_{j}$ while treating $ \mathcal{T}_{j}$ as noise ($i,j=1,2$ and $i\neq j$) and its complementary event $\bar{\mathcal{O}}_{ij}$  as follows
\begin{align}
	\mathcal{O}_{ij} & \triangleq \left\{\vech_{j}: \log\left(1+\frac{\SNR^{\alpha}|h_{ij}|^{2}}{1+\SNR|h_{jj}|^2}\right) < R_{i} \right\} \notag \\
	\bar{\mathcal{O}}_{ij} &\triangleq  \left\{\vech_{j}: \log\left(1+\frac{\SNR^{\alpha}|h_{ij}|^{2}}{1+\SNR|h_{jj}|^2}\right) \geq R_{i} \right\} \notag. 
\end{align}
Next, we upper-bound $P(E_{21})$ according to
\begin{align}
&P(E_{21}) = \Exop_{\vech_{1}} \! \lefto\{\prob{E_{21}|\vech_{1}}\right\}  =  \notag \\[0.1cm]
& \label{molt1}  \Exop_{\vech_{1}}\! \lefto\{\prob{\mathcal{O}_{21}}\prob{E_{21}|\vech_{1}, \mathcal{O}_{21}} \! + \!  \prob{\bar{\mathcal{O}}_{21}}\prob{E_{21}|\vech_{1},\bar{\mathcal{O}}_{21}} \! \right\}   \\[0.1cm]
 &\label{molt2}  \ \ \ \ \ \leq  \prob{\mathcal{O}_{21}}+ \Exop_{\vech_{1}}\! \lefto\{\prob{E_{21}|\vech_{1},\bar{\mathcal{O}}_{21}} \right\}  \\[0.1cm]
 &\label{molt3} \ \ \ \  \ \leq \prob{\mathcal{O}_{21}} + \SNR^{r_{2}} {\rm Q} \left(\sqrt{\frac{\SNR^{r_{2}}|\Delta x_{2}|^{2}}{2}}\right)
\end{align} where (\ref{molt1}) follows from Bayes's rule and (\ref{molt2}) is obtained by upper-bounding $\prob{E_{21}|\vech_{1}, \mathcal{O}_{21}}$ and $\prob{\bar{\mathcal{O}}_{21}}$ by $1$.  Finally, (\ref{molt3}) follows by using the fact that $\bar{\mathcal{O}}_{21}$ entails $\frac{\SNR^{\alpha}|h_{21}|^{2}}{1+\SNR|h_{11}|^2} \geq 2^{R_{2}}-1$, and invoking $R_{2}=r_{2}\log\SNR$, $|\mathcal{X}_{2}| = \SNR^{r_{2}}$, and $\SNR \gg 1$ in (\ref{cortez}).  It can be shown that $\prob{\mathcal{O}_{21}} \doteq \SNR^{-(\alpha-1-r_{2})^{+}}$ for $\alpha \geq  2$ \cite{Akuiyibo08}.  Further, since $|\Delta x_{2}|^{2} \ \dotgeq \ \SNR^{-r_{2}+\epsilon}$, for $\epsilon > 0$, by assumption,  we can further simplify the above as the second term in (\ref{molt3}) decays exponentially in SNR whereas the first term decays polynomially, i.e., we get \begin{align}
\Exop_{\vech_{1}}\! \left\{\prob{E_{21}|\vech_{1}}\right\}  \ \dotleq \  \prob{\mathcal{O}_{21}} \doteq \SNR^{-(\alpha-1-r_{2})^{+}}.
\end{align}
We proceed to analyze decoding of $ \mathcal{T}_{1}$ at $\mathcal{R}_{1}$ and start by defining $\bar{x}_{2}$ as the result of decoding $ \mathcal{T}_{2}$ at $\mathcal{R}_{1}$.  Note that we do not need to assume that $ \mathcal{T}_{2}$ was decoded correctly at $\mathcal{R}_{1}$. We begin by upper-bounding $\prob{E_{11}|\vech_{1}}$ given $\bar{x}_{2}$: 
\begin{align}
&\prob{E_{11}|\vech_{1},\bar{x}_{2}}   \notag \\
\label{xxza2}& = \sum_{x_{1} \in \mathcal{X}_{1}} \sum_{x_{2}\in \mathcal{X}_{2}} \prob{x_{1}} \prob{x_{2}} \prob{E_{1}| \vech_{1},x_{1}, x_{2}, \bar{x}_{2}} \\ 
\label{xxza3}& = \frac{1}{|\mathcal{X}_{1}||\mathcal{X}_{2}|} \sum_{i=1}^{|\mathcal{X}_{1}|}   \sum_{k=1}^{|\mathcal{X}_{2}|}  \prob{\!  \bigcup\limits_{\substack{j =1     \\ {j\neq i}}}^{|\mathcal{X}_{1}|}  x^{i}_{1} \! \rightarrow x^{j}_{1}\left.\right|\vech_{1},x_{2}^{k},\bar{x}_{2}}  \\
\label{xxza4}& \leq \frac{|\mathcal{X}_{1}|}{|\mathcal{X}_{2}|}\sum_{k=1}^{|\mathcal{X}_{2}|}\prob{x^{\tilde{i}}_{1} \rightarrow x^{\tilde{j}}_{1}\!\left.\right|\vech_{1},x_{2}^{k},\bar{x}_{2}}\! ,
\end{align}
where $\left\{x_{1}^{\tilde{i}}, x_{1}^{\tilde{j}}\right\}$ denotes the (or ``a'' in the case of multiple pairs with the same distance) pair of symbols with minimum Euclidean distance among all possible pairs of different symbols. Next, we further upper-bound $\prob{E_{11}|\vech_{1},\bar{x}_{2}}$ by considering two events; namely, when $\mathcal{R}_{1}$ decodes $ \mathcal{T}_{2}$ correctly and when it does not:
\begin{align}
& \prob{E_{11}|\vech_{1},\bar{x}_{2}}  \leq  \notag \\ 
& \frac{|\mathcal{X}_{1}|}{|\mathcal{X}_{2}|} \sum_{k=1}^{|\mathcal{X}_{2}|} \left(\prob{\bar{x}_{2}\!=x_{2}^{k}| \vech_{1}, x_{2}^{k}}\!\prob{x^{\tilde{i}}_{1} \!  \rightarrow x^{\tilde{j}}_{1}  \!\left.\right| \! \vech_{1},x_{2}^{k}, \bar{x}_{2} ,\bar{x}_{2} \! = \! x_{2}^{k}} \! \right.  \notag \\
&\left. + \prob{\bar{x}_{2}\! \neq x_{2}^{k}|\vech_{1}, x_{2}^{k}}\!\prob{x^{\tilde{i}}_{1} \!  \rightarrow x^{\tilde{j}}_{1}\!  \left. \right| \! \vech_{1},x_{2}^{k}, \bar{x}_{2}, \bar{x}_{2}\! \neq x_{2}^{k}} \!  \right) \! , \label{junkf}
\end{align}
where $\prob{x^{\tilde{i}}_{1} \!  \rightarrow x^{\tilde{j}}_{1}\! \left.\right| \! \vech_{1},x_{2}^{k}, \bar{x}_{2} ,\bar{x}_{2} \! = \! x_{2}^{k}}$ is the probability of mistakenly decoding $x_{1}^{\tilde{i}}$ for $x_{1}^{\tilde{j}}$ given that $ \mathcal{T}_{2}$ transmitted $x_{2}^{k}$ and $\mathcal{R}_{1}$ decoded $ \mathcal{T}_{2}$ correctly, i.e.,  $\bar{x}_{2} =x_{2}^{k}$.  The quantity $\prob{\bar{x}_{2} = x_{2}^{k}|\vech_{1}, x_{2}^{k}}$ is the probability of decoding $ \mathcal{T}_{2}$ correctly given that $x_{2}^{k}$ was transmitted. By upper-bounding $\prob{\bar{x}_{2}=x_{2}^{k}|\vech_{1},x_{2}^{k}}$ and $\prob{x^{\tilde{i}}_{1} \! \rightarrow x^{\tilde{j}}_{1}\left.\right| \! \vech_{1},x_{2}^{k}, \bar{x}_{2}, \bar{x}_{2} \! \neq x_{2}^{k}}$ in (\ref{junkf}) by $1$, we arrive at 
\begin{align}
 \prob{E_{11}|\vech_{1},\bar{x}_{2}} &\leq \frac{|\mathcal{X}_{1}|}{|\mathcal{X}_{2}|} \sum_{k=1}^{|\mathcal{X}_{2}|} \prob{x^{\tilde{i}}_{1} \rightarrow x^{\tilde{j}}_{1} \! \left.\right|\vech_{1},x_{2}^{k},\bar{x}_{2},\bar{x}_{2}=x_{2}^{k}} + \notag \\   
 &\label{georg}  \ \ \ \ \ \ \ \  \ \  \frac{|\mathcal{X}_{1}|}{|\mathcal{X}_{2}|}\sum_{k=1}^{|\mathcal{X}_{2}|} \prob{\bar{x}_{2}\neq x_{2}^{k}|\vech_{1}, x_{2}^{k}}. 
\end{align}
Next, noting that $\frac{1}{|\mathcal{X}_{2}|}\sum\limits_{k=1}^{|\mathcal{X}_{2}|}\prob{\bar{x}_{2}\neq x_{2}^{k}|\vech_{1}, x_{2}^{k}}\leq \prob{E_{21}|\vech_{1}}$ and invoking the corresponding upper bound (\ref{cortez}) in (\ref{georg}), we get
 \begin{align}
 \prob{E_{1}|\vech_{1},\bar{x}_{2}} & \leq |\mathcal{X}_{1}|{\rm Q}\left(\sqrt{\frac{\SNR|h_{11}|^{2}|\Delta  x_{1}|^{2}}{2}}\right)+ \notag \\ & \ \ \ \ \ \ \ 
\label{georg2}|\mathcal{X}_{1}||\mathcal{X}_{2}|{\rm Q}\left(\sqrt{\frac{\SNR^{\alpha}|h_{21}|^{2}|\Delta  x_{2}|^{2}}{2(1+\SNR|h_{11}|^{2})}}\right).
\end{align} The first term on the RHS of (\ref{georg2}) follows from the first term on the RHS of (\ref{georg}), since given $\bar{x}_{2} =x_{2}^{k}$, the interference is subtracted out perfectly, leaving an effective SISO channel without interference.  We are now in a position to upper-bound $P(E_{11})$:
\begin{align}
\label{consxzzz}	& P(E_{11}) = \Exop_{\vech_{1}}\!\lefto\{\prob{E_{11}|\vech_{1}}\right\} \leq   \Exop_{\vech_{1}}\!\lefto\{\prob{E_{11}|\vech_{1},\bar{x}_{2}}\right\} \\[0.12cm] 
	& \leq \Exop_{\vech_{1}}\lefto\{ |\mathcal{X}_{1}|{\rm Q}\left(\sqrt{\frac{\SNR|h_{11}|^{2}|\Delta  x_{1}|^{2}}{2}}\right)\right\}+\notag \\[0.12cm] 
\label{xxzassr}	& \Exop_{\vech_{1}}\lefto\{|\mathcal{X}_{1}||\mathcal{X}_{2}|{\rm Q}\left(\sqrt{\frac{\SNR^{\alpha}|h_{21}|^{2}|\Delta  x_{2}|^{2}}{2(1+\SNR|h_{11}|^{2})}}\right) \right\}.
\end{align}
Here, (\ref{consxzzz}) follows since the error probability incurred by using the stripping decoder constitutes a natural upper bound on $\Exop_{\vech_{1}}\!\lefto\{\prob{E_{11}|\vech_{1}}\right\} $. We upper-bound (\ref{xxzassr}) by splitting each of the two terms into outage and no outage sets using Bayes's rule to arrive at
\begin{align}
	&  P(E_{11}) = \Exop_{\vech_{1}}\lefto\{\prob{E_{11}|\vech_{1}}\right\} \leq  \notag \\[0.08cm] 
 	& \prob{\mathcal{O}_{11}} + \SNR^{r_{1}}{ \rm Q}\lefto(\frac{\SNR^{r_1}|\Delta x_{1}|^{2}}{2}\right) + \prob{\mathcal{O}_{21}} + \notag 
	\\ 	& \ \ \ \ \ \ \ \ \ \ \ \ \ \ \ \ \ \ \ \ \ \ \ \ \ \ \ \ \ \ \ \ \   
\label{barbu}	\SNR^{r_{1}+r_{2}}{ \rm Q}\lefto(\frac{\SNR^{r_2}|\Delta x_{2}|^{2}}{2}\right). 
 \end{align}
The second and fourth terms on the RHS of (\ref{barbu}) follow from (\ref{xxzassr}) since $\bar{\mathcal{O}}_{11}$ and $\bar{\mathcal{O}}_{21}$ entail $\SNR |h_{11}|^{2} \geq 2^{R_{1}}-1$ and $\frac{\SNR^{\alpha}|h_{21}|^{2}}{1+\SNR|h_{11}|^2} \geq 2^{R_{2}}-1$, respectively, and since $R_{i}=r_{i}\log\SNR$, $|\mathcal{X}_{i}| = \SNR^{r_{i}}$ for $i=1,2$, and $\SNR \gg 1$.  Given that the minimum Euclidean distances in each codebook, $|\Delta  x_{1}|^{2}$ and $|\Delta  x_{2}|^{2}$,  obey $|\Delta  x_{1}|^{2} \ \dotgeq \ \SNR^{-r_{1}+\epsilon}$ and $|\Delta  x_{2}|^{2} \ \dotgeq \ \SNR^{-r_{2}+\epsilon}$, for some $\epsilon > 0$, by assumption, we get
\begin{align}
	P(E_{11}) &= \Exop_{\vech_{1}}\!\left\{\prob{E_{11}|\vech_{1}}\right\}  \ \dotleq  \ \ \prob{\mathcal{O}_{11}} +  \prob{\mathcal{O}_{21}} \\ 
											                  & \ \doteq  \  \SNR^{-\maxo{1-r_{1}}} + \SNR^{-\maxo{\alpha-1-r_{2}}} \\
\label{xxxaszadv1}											                  &  \ \doteq  \  \SNR^{-\min\{\maxo{1-r_{1}},\maxo{\alpha-1-r_{2}}\}}.
\end{align}	
Similar derivations for decoding at $\mathcal{R}_{2}$ lead to  
\begin{align}
	\label{xxxaszadv2}											P(E_{22})            \ \dotleq  \  \SNR^{-\min\{\maxo{1-r_{2}},\maxo{\alpha-1-r_{1}}\}}.
\end{align}
We note that the error probability of decoding $ \mathcal{T}_{i}$ at $\mathcal{R}_{i}$ is exponentially lower-bounded by $\prob{\mathcal{O}_{ii}}$ for $i=1,2$ \cite{zheng_tradeoff}.  Hence, $P(E_{ii})$ is  sandwiched according to
\begin{align}
\label{sand1} &\!\!\! \SNR^{-\maxo{1-r_{i}}}  \dotleq \  P(E_{ii})  \ \dotleq   \  \SNR^{-\min\{\maxo{1-r_{i}},\maxo{\alpha-1-r_{j}}\}} 
\end{align}
for $i,j=1,2$ and $i\neq j$.  The proof is concluded by first upper-bounding \begin{align} P(E) =\max\{P(E_{11}), P(E_{22})\} \end{align} as
\begin{align}
P(E)  \ \dotleq  \ & \max  \lefto\{\SNR^{-\min\{\maxo{1-r_{1}},\maxo{\alpha-1-r_{2}}\}}, \right. \notag  \\ 
 & \ \ \ \left.   \SNR^{-\min\{\maxo{1-r_{2}},\maxo{\alpha-1-r_{1}}\}}\right\}  \notag \\
\label{sensib1} \doteq & \ \SNR^{-\min\lefto\{\maxo{1-r_{1}},\maxo{1-r_{2}}\right\}}
\end{align}
where  (\ref{sensib1}) is a consequence of the assumption $\alpha \geq 2$.   Secondly, $P(E)$ can be lower-bounded using the outage bounds on the individual error probabilities:
\begin{align}
\label{sensib2}	\SNR^{-\min\lefto\{\maxo{1-r_{1}},\maxo{1-r_{2}}\right\}} \ &\dotleq  \ P(E).	
\end{align}
Since the $\SNR$ exponents in the upper bound (\ref{sensib1}) and the lower bound (\ref{sensib2}) match, we can conclude that  
 \begin{align}
P(E) \doteq \SNR^{-\min\lefto\{\maxo{1-r_{1}},\maxo{1-r_{2}}\right\}}
 \end{align}
which establishes the desired result. \end{proof}
\begin{remark} 
We can immediately conclude from Theorem \ref{VSint1232} that using a sequence of codebooks that is DMT-optimal for the SISO channel for both users results in DMT-optimality for the IC under very strong interference. 
\end{remark}
\begin{remark} 
If $r_{1}=r_{2}=r$ and we use sequences of codebooks $\mathcal{C}(\SNR,r)$ satisfying the conditions of Theorem \ref{VSint1232} for both users,  then we have 
\begin{align}
P(E_{11}) \doteq P(E_{22}) \doteq  \SNR^{-(1-r)^{+}}
\end{align}
as a simple consequence of (\ref{sand1}). This means that in the special case, where each $ \mathcal{T}_{i}$ transmits at the same multiplexing rate $r$, we have the stronger result that the single user DMT, i.e., the DMT that is achievable for a SISO channel in the absence of any interferers,  is achievable for both users.  In effect, under very strong interference and when the two users operate at the same multiplexing rate, the interference channel effectively gets  \emph{decoupled}. For a stripping decoder and $r_{1}\neq r_{2}$, we can, in general, not arrive at the same conclusion as the SNR exponents in (\ref{sand1}) do not necessarily match. 
\end{remark}



\section{Suboptimal Strategies}
In the following, we investigate the DMT performance of treating the IC as a combination of two MACs and sharing transmission time between the two transmitters. These strategies are suboptimal; in fact, it can be shown that the two-message, fixed-power-split HK scheme always outperforms these schemes.  Nevertheless, we analyze these two schemes as they are of some practical importance.  

\subsection{Achievable DMT for treating the IC as a combination of two MACs}
\label{jointMAC}
A simple achievable rate region for the IC is obtained by treating the IC as a MAC at each receiver $\mathcal{R}_{j}$ for $j=1,2$.  Next,  we formally define the strategy of treating the IC as a combination of two MACs.
\begin{definer}
\label{MACdefinition}
A MAC at $\mathcal{R}_{i}$ is obtained by requiring the messages from both transmitters $\mathcal{T}_{j}$, $j=1,2$, to be decoded at $\mathcal{R}_{i}$ for $i=1,2$.  
\end{definer}
\begin{definer}
A \ajmlong{MAC} at $\mathcal{R}_{j }$ ($j=1,2$)  carries out joint ML detection on the messages from both transmitters ($\mathcal{T}_{i}$ for $i=1,2$).    The ML error probability and the average ML error probability of this receiver are denoted by $\prob{\mathcal{E}^{MAC}_{j}}$ and $P\lefto(E^{MAC}_{j}\right)\triangleq\mathbb{E}_{\vech_{j}}\left\{\prob{\mathcal{E}^{MAC}_{j}}\right\}$, respectively.
\end{definer}

The following theorem provides the achievable  DMT for the strategy of treating the IC as a combination of two MACs.
\begin{theorem}
\label{theoremMAC}
The DMT corresponding to treating the IC as a MAC at each receiver is given by
\begin{align}
d^{MAC}(\vecr) =   \min\limits_{\substack{i=1,2\\ k=1,2,3}}\left\{d^{MAC}_{ik}(\vecr)\right\}
\end{align}
where
 \begin{align}
	&d^{MAC}_{i1}(\vecr)= (1-r_{i})^{+} \notag  \\
	&d^{MAC}_{i2}(\vecr)= (\alpha-r_{j})^{+},  \ \ \ \text{for} \ \  i,j=1,2  \ \text{and} \ i\neq j \notag \\
	& d^{MAC}_{i3}(\vecr) = \lefto(1-r_{1}-r_{2}\right)^{+}+ \lefto(\alpha-r_{1}-r_{2}\right)^{+}. \notag
\end{align}
Denote 
\begin{align} 
	[i^{*} \ k^{*}]= \arg \min_{\substack{i=1,2 \\ k= 1,2,3}} d^{MAC}_{ik}(\vecr). 
\end{align} Let $\Xi_{ik}(\vecr) =[\xi_{ik}^{1}(\vecr) \ \xi_{ik}^{2}(\vecr)]^{T}$ be functions\footnote{We note that the functions $\Xi_{ik}(\vecr)$ might not be unique.} such that 
\begin{align}
d^{MAC}_{i^{*}k^{*}}(\vecr) = d^{MAC}_{ik}(\Xi_{ik}(\vecr))
\end{align}  for $i=1,2$, $k=1,2,3$. 
If a sequence (in SNR) of codebooks with block length $N \geq 2$ satisfies
\begin{align}
	\| \Delta \vecx_{i}\|^{2} \ & \dotgeq \ \SNR^{-\min\lefto\{\xi_{i1}^{i}(\vecr), \xi_{j2}^{i}(\vecr)\right\}+\epsilon} \label{cdcmac1}\\
	\lambda_{\min}\lefto(\Delta\matX_{ij} (\Delta\matX_{ij})^{H}\right) \ & \dotgeq \ \SNR^{-\xi_{i3}^{1}(\vecr)-\xi_{i3}^{2}(\vecr)+\epsilon} \label{cdcmac2}
\end{align}
for all pairs of codewords $\vecx_{i}^{n_{i}},  \vecx_{i}^{\tilde{n}_{i}} \in \mathcal{C}_{i}(\SNR,r_{i})$ s.t. $\vecx_{i}^{n_{i}} \neq \vecx^{\tilde{n}_{i}}_{i}$, $\vecx_{j}^{n_{j}},  \vecx_{j}^{\tilde{n}_{j}} \in \mathcal{C}_{j}(\SNR,r_{j})$ s.t. $\vecx_{j}^{n_{j}} \neq \vecx^{\tilde{n}_{j}}_{j}$ for $i,j = 1,2$ and $i \neq j$,  where  $\Delta \vecx_{i} = \vecx_{i}^{n_{i}} - \vecx^{\tilde{n}_{i}}_{i}$, $\Delta \vecx_{j} = \vecx_{j}^{n_{j}} - \vecx^{\tilde{n}_{j}}_{j}$,   and $\Delta\matX_{ij} = [\Delta\vecx_{i} \ \Delta\vecx_{j}]$, and $\lambda_{\min}(\Delta\matX_{ij} (\Delta\matX_{ij})^{H})$ denotes the smallest nonzero eigenvalue of $\Delta\matX_{ij} (\Delta\matX_{ij})^{H}$, for some\footnote{We note that $\epsilon$ is allowed to be different in \eqref{cdcmac1} and \eqref{cdcmac2}. } $\epsilon >  0$, then $P(E)$ obeys
\begin{align}
P(E) \doteq \SNR^{-d^{MAC}(\vecr)}.
\end{align}
\end{theorem}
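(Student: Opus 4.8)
The plan is to mirror the proof of Theorem~\ref{theoremJD} almost step for step, the only structural novelty being that, because we now treat each receiver as a MAC (Definition~\ref{MACdefinition}), decoding the \emph{interfering} transmitter incorrectly now \emph{does} count as an error. This reinstates the very outage/error event that was deliberately dropped in \eqref{outeven}, and it is precisely this extra event that produces the term $d^{MAC}_{i2}(\vecr) = (\alpha - r_j)^{+}$ and forces the combined code-design constraint in \eqref{cdcmac1}. As before, I would first obtain a lower bound on $P(E)$ via an outage argument, then a matching upper bound via the pairwise-error-probability (PEP) method, and finally show the two SNR exponents coincide.

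For the lower bound, at $\mathcal{R}_{i}$ I would define the three MAC outage events
\begin{align}
\mathcal{O}^{MAC}_{i1} &\triangleq \lefto\{\vech_{i} : \log(1+\SNR|h_{ii}|^{2}) < R_{i}\right\} \notag \\
\mathcal{O}^{MAC}_{i2} &\triangleq \lefto\{\vech_{i} : \log(1+\SNR^{\alpha}|h_{ji}|^{2}) < R_{j}\right\} \notag \\
\mathcal{O}^{MAC}_{i3} &\triangleq \lefto\{\vech_{i} : \log(1+\SNR|h_{ii}|^{2}+\SNR^{\alpha}|h_{ji}|^{2}) < R_{1}+R_{2}\right\} \notag
\end{align}
for $i,j=1,2$, $i\neq j$, together with their complementary events. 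These are the standard MAC outage regions; in contrast to \eqref{outeven}, the middle event is now retained. Invoking the i.i.d.\ Gaussian-input argument of \cite{tse_mu, zheng_tradeoff} (no loss in DMT), the results of \cite{zheng_tradeoff, Akuiyibo08} give $\prob{\mathcal{O}^{MAC}_{ik}} \doteq \SNR^{-d^{MAC}_{ik}(\vecr)}$ with the three exponents as stated. A union bound together with the dominant-term argument of \eqref{maxomizc}--\eqref{xxazsfafgafg} then yields $\prob{\mathcal{O}^{MAC}_{i}} \doteq \SNR^{-\min_{k} d^{MAC}_{ik}(\vecr)}$, and $\prob{\mathcal{O}^{MAC}} = \max_{i}\prob{\mathcal{O}^{MAC}_{i}}$ supplies the lower bound $P(E) \dotgeq \SNR^{-d^{MAC}(\vecr)}$.

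For the upper bound I would decompose the error event at $\mathcal{R}_{i}$ into the three disjoint events $\mathcal{E}^{MAC}_{i1} = \{\tilde{n}_{i} \neq n_{i}, \tilde{n}_{j} = n_{j}\}$, $\mathcal{E}^{MAC}_{i2} = \{\tilde{n}_{i} = n_{i}, \tilde{n}_{j} \neq n_{j}\}$, and $\mathcal{E}^{MAC}_{i3} = \{\tilde{n}_{i} \neq n_{i}, \tilde{n}_{j} \neq n_{j}\}$, splitting each into its intersection with the matching outage set $\mathcal{O}^{MAC}_{ik}(\Xi_{ik}(\vecr))$ and with its complement, exactly as in \eqref{fadgk}--\eqref{fadgfj}. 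On the no-outage set the conditional error probability is PEP-bounded: event $\mathcal{E}^{MAC}_{i1}$ reuses \eqref{gasgafas} with gain $\SNR|h_{ii}|^{2}$ acting on $\Delta\vecx_{i}$; event $\mathcal{E}^{MAC}_{i2}$ is the same bound with gain $\SNR^{\alpha}|h_{ji}|^{2}$ acting on $\Delta\vecx_{j}$; and event $\mathcal{E}^{MAC}_{i3}$ reuses \eqref{stars} with $\lambda_{\min}(\Delta\matX_{ij}(\Delta\matX_{ij})^{H})$ and $\tilde{\vech}_{i}=[\sqrt{\SNR}h_{ii}\ \sqrt{\SNR^{\alpha}}h_{ji}]^{T}$. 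The code-design criteria \eqref{cdcmac1}--\eqref{cdcmac2} are chosen so that on each no-outage set the exponentially-decaying term is dominated by $\exp(-\SNR^{\epsilon}/4)$, leaving only the polynomially-decaying outage term; summing the three events and maximizing over $i=1,2$ then gives $P(E) \dotleq \SNR^{-d^{MAC}(\vecr)}$, matching the lower bound.

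The step that requires the most care is the bookkeeping behind \eqref{cdcmac1}. A single codebook $\mathcal{C}_{i}(\SNR,r_{i})$ now plays two roles: its codewords are the \emph{intended} signal at $\mathcal{R}_{i}$ (error event $\mathcal{E}^{MAC}_{i1}$, no-outage threshold governed by $\Xi_{i1}$) and the \emph{interfering} signal at $\mathcal{R}_{j}$ (error event $\mathcal{E}^{MAC}_{j2}$, no-outage threshold governed by $\Xi_{j2}$). Consequently $\Delta\vecx_{i}$ must simultaneously satisfy $\|\Delta\vecx_{i}\|^{2} \dotgeq \SNR^{-\xi^{i}_{i1}(\vecr)+\epsilon}$ and $\|\Delta\vecx_{i}\|^{2} \dotgeq \SNR^{-\xi^{i}_{j2}(\vecr)+\epsilon}$, which is exactly why the exponent in \eqref{cdcmac1} is $-\min\{\xi^{i}_{i1}(\vecr), \xi^{i}_{j2}(\vecr)\}$. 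Verifying that the $\Xi_{ik}$ defined through $d^{MAC}_{i^{*}k^{*}}(\vecr) = d^{MAC}_{ik}(\Xi_{ik}(\vecr))$ make each no-outage set $\bar{\mathcal{O}}^{MAC}_{ik}(\Xi_{ik}(\vecr))$ entail the SNR threshold needed to convert the PEP bound into the target exponent $d^{MAC}_{i^{*}k^{*}}(\vecr)$ is the only genuinely new ingredient relative to Theorem~\ref{theoremJD}; the remaining manipulations are identical to those already carried out there.
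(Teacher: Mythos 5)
Your proposal is correct and follows essentially the same route as the paper's proof: an outage-based lower bound with the three MAC outage events (including the reinstated interferer-decoding event yielding $(\alpha-r_{j})^{+}$), a matching PEP-based upper bound over the three error events split against the no-outage sets $\bar{\mathcal{O}}^{MAC}_{ik}(\Xi_{ik}(\vecr))$, and the observation that each codebook serves as intended signal at one receiver and interference at the other, which is exactly how the paper arrives at the $\min\lefto\{\xi_{i1}^{i}(\vecr), \xi_{j2}^{i}(\vecr)\right\}$ exponent in the code design criterion.
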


\begin{proof}
We first identify an upper bound on the DMT and then show, using an appropriate lower bound, that this DMT is, indeed, achievable.  We define the outage events corresponding to decoding of $\mathcal{T}_{i}$, decoding of $\mathcal{T}_{j}$, and  jointly decoding of $\mathcal{T}_{i}$ \emph{and} $\mathcal{T}_{j}$ at $\mathcal{R}_{i}$ for $i,j=1,2$ and $i\neq j$ by 
\begin{align}
	\mathcal{O}_{i1}^{MAC} &\triangleq \lefto\{ \vech_{i} : I(\vecx_{i}; \vecy_{i} | \vecx_{j}, \vech_{i}) < R_{i}  \right\} \label{acharnomac1} \\
	\mathcal{O}_{i2}^{MAC} &\triangleq \lefto\{ \vech_{i} : I(\vecx_{j}; \vecy_{i} | \vecx_{i}, \vech_{i}) < R_{j}  \right\} \label{acharnomac2} \\
	\mathcal{O}_{i3}^{MAC} &\triangleq \lefto\{ \vech_{i} : I(\vecx_{i},\vecx_{j} ; \vecy_{i} | \vech_{i}) < R_{1}+R_{2} \right\}  \label{acharnomac3}.
\end{align}
We define an outage event for the MAC at $\mathcal{R}_{i}$  as
\begin{align}
	\mathcal{O}^{MAC}_{i} \triangleq \bigcup_{k=1}^{3} \mathcal{O}^{MAC}_{ik} \label{aouteven}.
\end{align}
 We define the total outage probability for treating the IC as a combination of MACs as  
\begin{align}
\prob{\mathcal{O}^{MAC}} \triangleq  \max\lefto\{\prob{	\mathcal{O}^{MAC}_{1}}\! , 	\prob{\mathcal{O}^{MAC}_{2} }\right\}.
\end{align} 
Using a standard argument along the lines of \cite{tse_mu, isit08_cgb}, we can see that assuming that both transmitters employ i.i.d. Gaussian codebooks results in no loss of optimality in terms of DMT performance.  We can therefore evaluate \eqref{acharnomac1}-\eqref{acharnomac3} as 
\begin{align}
	&\mathcal{O}_{i1}^{MAC}(\vecr) \triangleq \lefto\{ \vech_{i}:  \log\left(1+\SNR|h_{ii}|^2\right) < R_{i}  \right\} \notag \\
	&\mathcal{O}_{i2}^{MAC}(\vecr) \triangleq \lefto\{ \vech_{i}:  \log\left(1+\SNR^{\alpha}|h_{ji}|^2\right) < R_{j}  \right\} \notag \\
	&\mathcal{O}_{i3}^{MAC}(\vecr) \triangleq \notag \\
	& \lefto\{ \vech_{i} :\log\left(1+\SNR^{\alpha}|h_{ji}|^{2}+\SNR|h_{ii}|^{2}\right) < R_{1}+R_{2} \right\} \notag
\end{align}
with $i,j=1,2$ and $i\neq j$. In the following, we will also need the definitions of the no-outage events, according to
\begin{align}
	&\bar{\mathcal{O}}_{i1}^{MAC}(\vecr) \triangleq \lefto\{ \vech_{i}:  \log\left(1+\SNR|h_{ii}|^2\right) \geq R_{i}  \right\} \notag \\
	&\bar{\mathcal{O}}_{i2}^{MAC}(\vecr) \triangleq \lefto\{ \vech_{i}:  \log\left(1+\SNR^{\alpha}|h_{ji}|^2\right) \geq R_{j}  \right\} \notag \\
	&\bar{\mathcal{O}}_{i3}^{MAC}(\vecr) \triangleq \notag \\
	&\lefto\{ \vech_{i} :\log\left(1+\SNR^{\alpha}|h_{ji}|^{2}+\SNR|h_{ii}|^{2}\right) \geq R_{1}+R_{2} \right\} \notag
\end{align}
with $i,j=1,2$ and $i\neq j$.  We can now establish the asymptotic behavior of $\mathcal{O}^{MAC}_{i}$. By the union bound, we have 
\begin{align}
	\prob{\mathcal{O}^{MAC}_{i}}  & \leq  \sum_{k=1}^{3}  \prob{\mathcal{O}^{MAC}_{ik}(\vecr)} \\
									    & \doteq \max_{k=1,2,3} \prob{\mathcal{O}^{MAC}_{ik}(\vecr)}.  \label{amaxomizc}
\end{align}
It is shown in \cite{zheng_tradeoff} and \cite{Akuiyibo08} that
\begin{align}
\prob{\mathcal{O}_{i1}^{MAC}(\vecr)} &\doteq \SNR^{-d_{i1}^{MAC}(\vecr)} \\
\prob{\mathcal{O}_{i2}^{MAC}(\vecr)} &\doteq \SNR^{-d_{i2}^{MAC}(\vecr)} \\
\prob{\mathcal{O}_{i3}^{MAC}(\vecr)} &\doteq \SNR^{-d_{i3}^{MAC}(\vecr)} 
\end{align}
with 
\begin{align}
d_{i1}^{MAC}(\vecr) &= (1-r_{i})^{+}  \label{ajod1}\\
d_{i2}^{MAC}(\vecr) &= (\alpha-r_{j})^{+}\label{ajod2} \\
d_{i3}^{MAC}(\vecr) &= (1-r_{1}-r_{2})^{+}+(\alpha-r_{1}-r_{2})^{+} \label{ajod3}
\end{align}
for $i, j=1,2$ and $i \neq j$.  We point out that \eqref{ajod1} and \eqref{ajod2} define six SNR exponents $d_{ik}^{MAC}(\vecr)$, i.e., for $i=1,2$ and $k=1,2,3$.  The outage event corresponding to jointly decoding the signals from both transmitters at $\mathcal{R}_{1}$ is identical to the outage event corresponding to jointly decoding the signals from both transmitters at $\mathcal{R}_{2}$. Hence,  the corresponding SNR exponents of the outage probabilities of these events, namely, $d_{13}^{MAC}(\vecr)$ and $d_{23}^{MAC}(\vecr)$, are exactly the same. The total outage probability corresponding to treating the IC as a combination of MACs then satisfies 
\begin{align}
\prob{\mathcal{O}^{MAC}} &=  \max\lefto\{\prob{	\mathcal{O}^{MAC}_{1} } \!, 	\prob{\mathcal{O}^{MAC}_{2} }\right\}.  \label{aafasfasazzzzkmf}
\end{align}
From \eqref{amaxomizc}, it follows that
\begin{align}
	\prob{\mathcal{O}^{MAC}_{i}} &\doteq \max_{k=1,2,3} \prob{\mathcal{O}^{MAC}_{ik}(\vecr)}  \notag \\
									   &\doteq \SNR^{-\min\limits_{k=1,2,3} d_{ik}^{MAC}(\vecr)} \label{axxazsfafgafg}.
\end{align}
Hence, combining \eqref{aafasfasazzzzkmf} and \eqref{axxazsfafgafg}, we get 
 \begin{align}
\prob{\mathcal{O}^{MAC}}	  &  \doteq   \max_{i=1,2}\SNR^{-\min_{k=1,2,3} d_{ik}^{MAC}(\vecr)} \label{affffazcvgn} \\
							  & \doteq  \SNR^{-d^{MAC}(\vecr)} \label{atwentyone}\end{align}
where 
\begin{align}
d^{MAC}(\vecr) =   \min\limits_{\substack{i=1,2\\ k=1,2,3}}\left\{d^{MAC}_{ik}(\vecr)\right\}. \end{align}   We note that \eqref{affffazcvgn} can be simplified by eliminating either $d_{13}^{MAC}(\vecr)$ or $d_{23}^{MAC}(\vecr)$ as explained earlier.

With \eqref{axxazsfafgafg} we arrived at a lower bound on the error probability of the \ajml{MAC} at $\mathcal{R}_{i}$.  This lower bound, by definition, gives an upper bound on the DMT region.  We next try to find an upper bound on the error probability that has the same exponential behavior as this lower bound.  To this end, consider next the error probability corresponding to the \ajml{MAC}. We first define the relevant error events.  Let $\vecx_{i}^{n_{i}}$ and $\vecx_{j}^{n_{j}}$ with $n_{i} \in \{1,2,\ldots, 2^{NR_{i}}\}$, $n_{j} \in \{1,2,\ldots, 2^{NR_{j}}\}$ ($i,j=1,2$ and $i\neq j$) be the  codewords transmitted by $\mathcal{T}_{i}$ and $\mathcal{T}_{j}$, respectively. The results of (joint ML) decoding of $\mathcal{T}_{i}$ and $\mathcal{T}_{j}$ at $\mathcal{R}_{i}$ are denoted by $\vecx_{i}^{\tilde{n}_{i}}$ and $\vecx_{j}^{\tilde{n}_{j}}$, respectively, with $\tilde{n}_{i} \in \{1,2,\ldots, 2^{NR_{i}}\}$, $\tilde{n}_{j} \in \{1,2,\ldots, 2^{NR_{j}}\}$ for $i,j=1,2$ and $i\neq j$.   We have the error events corresponding to $\mathcal{T}_{i}$ only, $\mathcal{T}_{j}$ only, and $\mathcal{T}_{i}$ and $\mathcal{T}_{j}$ being decoded in error at $\mathcal{R}_{i}$ as
\begin{align}
\mathcal{E}_{i1}^{MAC} & \triangleq \left\{ \tilde{n}_{i}\neq n_{i}, \  \tilde{n}_{j} = n_{j}\right\} \label{macerr1}\\ 
\mathcal{E}_{i2}^{MAC} & \triangleq \left\{ \tilde{n}_{i}= n_{i}, \  \tilde{n}_{j} \neq n_{j}\right\} \label{macerr2}\\ 
\mathcal{E}_{i3}^{MAC} & \triangleq \left\{ \tilde{n}_{i}\neq n_{i}, \  \tilde{n}_{j} \neq n_{j}\right\} \label{macerr3} 
\end{align}
for $i,j=1,2$ and $i\neq j$.  We will also need the total error probability defined as 
\begin{align}
	\mathcal{E}^{MAC}_{i} \triangleq \bigcup_{k=1,2,3}\mathcal{E}_{ik}^{MAC}.
\end{align}
We denote 
\begin{align} 
	[i^{*} \ k^{*}]= \arg \min_{\substack{i=1,2 \\ k= 1,2,3}} d^{MAC}_{ik}(\vecr). 
\end{align} Let $\Xi_{ik}(\vecr) =[\xi_{ik}^{1}(\vecr) \ \xi_{ik}^{2}(\vecr)]^{T}$ be functions\footnote{We note that the functions $\Xi_{ik}(\vecr)$ might not be unique.} such that 
\begin{align}
d^{MAC}_{i^{*}k^{*}}(\vecr) = d^{MAC}_{ik}(\Xi_{ik}(\vecr))
\end{align}  for $i=1,2$, $k=1,2,3$. 

We next find an upper bound on the probability of the events $\mathcal{E}_{ik}^{MAC}$ as follows:
\begin{align}
&\prob{\mathcal{E}^{MAC}_{ik}} \notag \\ 
& = \prob{\mathcal{E}^{MAC}_{ik}, \mathcal{O}^{MAC}_{ik}(\Xi_{ik}(\vecr))} +\prob{\mathcal{E}^{MAC}_{ik}, \bar{\mathcal{O}}^{MAC}_{ik}(\Xi_{ik}(\vecr))}  \notag \\
&\leq   \prob{ \mathcal{O}^{MAC}_{ik}(\Xi_{i}(\vecr))}+ \prob{\mathcal{E}^{MAC}_{ik}| \bar{\mathcal{O}}^{MAC}_{ik}(\Xi_{i}(\vecr))} \label{afadgk}.
\end{align}
We start by deriving an upper bound on the average (w.r.t. the random channel) pairwise error probability (PEP) of each error event $\mathcal{E}^{MAC}_{ik}$ for $i=1,2$ and $k=1,2,3$.   Assuming, without loss of generality,  that we have an $\mathcal{E}^{MAC}_{i3}$ type error event, the probability of the ML decoder mistakenly deciding in favor of the codeword $\matX_{ij}^{\tilde{n}_{i}\tilde{n}_{j}}=[\vecx_{i}^{\tilde{n}_{i}} \ \vecx_{j}^{\tilde{n}_{j}}]$ when $\matX^{n_{i}n_{j}}_{ij} =[\vecx^{n_{i}}_{i} \ \vecx^{n_{j}}_{j}]$ (with $\vecx_{i}^{n_{i}}, \vecx^{\tilde{n}_{i}}_{i} \in \mathcal{C}_{i}(\SNR,r_{i})$ and $\vecx_{j}^{n_{j}}, \vecx^{\tilde{n}_{j}}_{j} \in \mathcal{C}_{j}(\SNR,r_{j})$, $i,j=1,2$ and $i\neq j$) was actually transmitted, can be upper-bounded according to
\begin{align}
&	 \mathbb{E}_{\vech_{i}}\! \lefto\{\prob{\matX_{ij}^{n_{i}n_{j}} \rightarrow \matX^{\tilde{n}_{i}\tilde{n}_{j}}_{ij}}\right\} \\ &  \leq  \mathbb{E}_{\vech_{i}}\lefto\{\exp\left[- \frac{\|\Delta\matX_{ij} \tilde{\vech}_{i} \|^2}{4} \right]\right\} \\ 
											  & \leq  \mathbb{E}_{\vech_{i}}\lefto\{\exp\left[- \frac{\lambda_{\min}\|\tilde{\vech}_{i} \|^2}{4} \right]\right\} \\ 													    
											  &  = \mathbb{E}_{\vech_{i}}\lefto\{\exp\left[- \lambda_{\min} \frac{\SNR|h_{ii}|^2+\SNR^{\alpha}|h_{ji}|^{2}}{4} \right]\right\} \label{oup}
\end{align}
where $\tilde{\vech}_{i} = [\sqrt{\SNR}h_{ii} \ \sqrt{\SNR^{\alpha}}h_{ji}]^T $  for $i,j=1,2$ and $i\neq j$ and $\lambda_{\min}$ is the smallest nonzero eigenvalue of $\Delta\matX_{ij}(\Delta\matX_{ij})^{H}$.  Noting that the no outage event  $\bar{\mathcal{O}}^{MAC}_{i3}\left(\Xi_{i3}(\vecr)\right)$ entails $\SNR|h_{ii}|^2+\SNR^{\alpha}|h_{ji}|^{2} \geq \SNR^{\xi_{i3}^{1}(\vecr)+\xi_{i3}^{2}(\vecr)}-1$, \eqref{afadgk} implies  an upper bound on $\prob{\mathcal{E}^{MAC}_{i3}}$ according to:
\begin{align}
&\mathbb{E}_{\vech_{i}} \! \lefto\{\prob{\mathcal{E}^{MAC}_{i3}}\right\} \ \dotleq \ \label{axxvafa} \\&  \prob{ \mathcal{O}_{i3}^{MAC}\left(\Xi_{3}(\vecr)\right)}+ \SNR^{N(r_{1}+r_{2})}\exp\left[-\frac{\lambda_{\min}\SNR^{\xi_{i3}^{1}(\vecr)+\xi_{i3}^{2}(\vecr)}}{4}\right]   \notag.
\end{align}
Here, we used the definitions $R_{i}=r_{i}\log\SNR$ for $i=1,2$ and $\exp[-\frac{\lambda_{\min}}{4} (\SNR^{\xi_{i3}^{1}(\vecr)+\xi_{i3}^{2}(\vecr)}-1) ] \doteq \exp[-\frac{\lambda_{\min}}{4}\SNR^{\xi_{i3}^{1}(\vecr)+\xi_{i3}^{2}(\vecr)}]$. Given that $\lambda_{\min} \ \dotgeq \ \SNR^{-\xi_{i3}^{1}(\vecr)-\xi_{i3}^{2}(\vecr)+\epsilon}$ with $\epsilon > 0$, by assumption, we obtain
\begin{align}
&\mathbb{E}_{\vech_{i}}\!  \left\{\prob{\mathcal{E}^{MAC}_{i3}}\right\}  \notag \\ 
& \ \dotleq \ \prob{ \mathcal{O}_{i3}^{MAC}\left(\Xi_{i3}(\vecr)\right)}+ \SNR^{N(r_{1}+r_{2})}\exp\left[-\frac{\SNR^{\epsilon}}{4}\right]   \label{axafkasfkaskfsfk} \\
& \doteq  \prob{ \mathcal{O}_{i3}^{MAC}\left(\Xi_{i3}(\vecr)\right)} \notag \\
&\doteq \SNR^{-d_{i^{*}k^{*}}^{MAC}(\vecr)}\label{afinzcuoqw}
\end{align}
as the second term on the RHS of \eqref{axafkasfkaskfsfk} decays exponentially in SNR whereas the first term decays polynomially. Eq. \eqref{afinzcuoqw} is a consequence of the definition of the function $\Xi_{i3}(\vecr)$. 

A similar analysis for the $\mathcal{E}^{MAC}_{i1}$-type error event results in  
 \begin{align}
 & \mathbb{E}_{\vech_{i}}\lefto\{\prob{\vecx_{i}^{n_{i}} \rightarrow \vecx^{\tilde{n}_{i}}_{i}}\right\}  \leq \notag \\ 
 & \hspace{2cm} \mathbb{E}_{\vech_{i}}\lefto\{\exp\left[-  \frac{\SNR|h_{ii}|^2\|\Delta\vecx_{i}\|^{2}}{4} \right]\right\} \label{agasgafas}
 \end{align} 
which, upon invoking \begin{align} 	\notag \| \Delta \vecx_{i}\|^{2} \ & \dotgeq \ \SNR^{-\min\lefto\{\xi_{i1}^{i}(\vecr), \xi_{j2}^{i}(\vecr)\right\}+\epsilon} \end{align} and using the fact that $\bar{\mathcal{O}}_{i1}^{MAC}\left(\Xi_{i1}(\vecr)\right)$ entails $\SNR|h_{ii}|^2 \geq \SNR^{\xi_{i1}^{i}(\vecr)}-1$, yields
\begin{align}
&\mathbb{E}_{\vech_{i}}\lefto\{\prob{\mathcal{E}^{MAC}_{i1}}\right\}   \ \dotleq \ \prob{ \mathcal{O}_{i1}^{MAC}\left(\Xi_{i1}(\vecr)\right)}+\notag \\ 
&\ \ \ \ \  \SNR^{Nr_{i}}\exp\left[-\frac{\SNR^{\xi_{i1}^{i}(\vecr)-\min\lefto\{\xi_{i1}^{i}(\vecr), \xi_{j2}^{i}(\vecr)\right\}+\epsilon}}{4}\right]  \label{amosnzzcP} \\
& \  \doteq \prob{ \mathcal{O}_{i1}^{MAC}\left(\Xi_{i1}(\vecr)\right)} \doteq \SNR^{-d_{i^{*}k^{*}}^{MAC}(\vecr)}
 \end{align}
for $i=1,2$. 

A similar analysis for the $\mathcal{E}^{MAC}_{i2}$-type error event results in  
 \begin{align}
 & \mathbb{E}_{\vech_{i}}\lefto\{\prob{\vecx_{j}^{n_{j}} \rightarrow \vecx^{\tilde{n}_{j}}_{j}}\right\}  \leq \notag \\ 
 & \hspace{2cm} \mathbb{E}_{\vech_{i}}\lefto\{\exp\left[-  \frac{\SNR^{\alpha}|h_{ji}|^2\|\Delta\vecx_{j}\|^{2}}{4} \right]\right\} \label{fagasgafas}
 \end{align} 
which, upon invoking \begin{align} 	\notag \| \Delta \vecx_{j}\|^{2} \ & \dotgeq \ \SNR^{-\min\lefto\{\xi_{j1}^{j}(\vecr), \xi_{i2}^{j}(\vecr)\right\}+\epsilon} \end{align}and using the fact that $\bar{\mathcal{O}}_{i2}^{MAC}\left(\Xi_{i2}(\vecr)\right)$ entails $\SNR^{\alpha}|h_{ji}|^2 \geq \SNR^{\xi_{i2}^{j}(\vecr)}-1$, yields
\begin{align}
&\mathbb{E}_{\vech_{i}}\lefto\{\prob{\mathcal{E}^{MAC}_{i2}}\right\}  \ \dotleq \ \prob{ \mathcal{O}_{i2}^{MAC}\left(\Xi_{i2}(\vecr)\right)}+ \notag \\ 
& \ \ \ \ \ \SNR^{Nr_{j}}\exp\left[-\frac{\SNR^{\xi_{i2}^{j}(\vecr)-\min\lefto\{\xi_{j1}^{j}(\vecr), \xi_{i2}^{j}(\vecr)\right\}+\epsilon}}{4}\right]  \label{famosnzzcP} \\
& \ \ \doteq \prob{ \mathcal{O}_{i2}^{MAC}\left(\Xi_{i2}(\vecr)\right)} \doteq \SNR^{-d_{i^{*}k^{*}}^{MAC}(\vecr)}
 \end{align}
for $i,j=1,2$ and $i\neq j$.  To complete the proof, we note that
\begin{align}
\mathbb{E}_{\vech_{i}}\lefto\{\prob{\mathcal{E}^{MAC}_{i}} \right\} & \leq \sum_{k=1}^{3}\mathbb{E}_{\vech_{i}}\lefto\{ \prob{\mathcal{E}_{ik}^{MAC}} \right\} \\
										&\ \dotleq \   \sum_{k=1}^{3} \prob{ \mathcal{O}_{ik}^{MAC}\left(\Xi_{ik}(\vecr)\right)}  \\  
										& = 3 \SNR^{-d_{i^{*}k^{*}}^{MAC}(\vecr)} \doteq \SNR^{-d^{MAC}(\vecr)} \notag.
										\end{align}
We finally get 
\begin{align}
P\lefto(E^{MAC}\right) & = \max_{i=1,2} \mathbb{E}_{\vech_{i}}\lefto\{	\prob{\mathcal{E}^{MAC}_{i}} \right\} \\
	  & \ \dotleq \ \SNR^{-d^{MAC}(\vecr)} \label{afortysizx}.
\end{align}
Since \eqref{afortysizx} gives an upper bound that matches the  lower bound in \eqref{atwentyone}, the proof is complete.
\end{proof}

\subsection{Time sharing}
We assume that the transmitters are orthogonalized in time or frequency such that each $\mathcal{T}_{i}$ ($i=1,2$) uses a  fraction $\theta_{i}$ of the channel resources with $\theta_{1}+\theta_{2} =1$ and $0\leq \theta_{i} \leq 1$.   Then, $\mathcal{T}_{i}$  enjoys an effective SISO channel $\theta_{i}$ fraction of time or frequency, and the effective transmission rate of $\mathcal{T}_{i}$ is given by $R_{i}/\theta_{i} = (r_{i}/\theta_{i})\log\SNR$.  Let $P\lefto(E^{TS}_{i}\right)$ be the average ML error probability for decoding $\mathcal{T}_{i}$ at $\mathcal{R}_{i}$ for the time sharing system.  It is shown in \cite{zheng_tradeoff}  that 
\begin{align}
\label{tianeq}	P\lefto(E^{TS}_{i}\right) \doteq \begin{cases} 
											\SNR^{-(1-r_{i}/\theta_{i})^{+}}, & \ \text{if} \ \theta_{i} > 0  \\
											1, & \ \text{if} \ \theta_{i} = 0  \\ 
											\end{cases}
\end{align}
for $i=1,2$. The achievable DMT of this strategy is then
\begin{align}
P\lefto(E^{TS}\right) & = \max\lefto\{P\lefto(E^{TS}_{1}\right) \!,  P\lefto(E^{TS}_{2}\right)\right\}. \notag 			
\end{align}
We can optimize over the parameters $\theta_{i}$ to get the best possible DMT of this strategy according to
\begin{align}
P\lefto(E^{OTS}\right) \triangleq \min_{\theta_{1}, \theta_{2}} \max\lefto\{P\lefto(E^{TS}_{1}\right)\!, P\lefto(E^{TS}_{2}\right)\right\} 
\end{align}
subject to 
\begin{align}
\theta_{1}+\theta_{2} =1 \notag \\
0\leq \theta_{i} \leq 1 \notag 
\end{align}
for $i=1,2$.


\section*{Numerical results} Figs. \ref{div05}-\ref{div15} show the DMT achieved by the fixed-power-split HK scheme (HK) in comparison to the outer bound we derived in \eqref{klo} (ETW), the outer bound in \cite{Akuiyibo08} (AL08), to treating interference as noise (TIAN), and to time-sharing (TS) for symmetric rates $r=r_{1}=r_{2}$ and for $\alpha=1/2$, $\alpha=2/3$, $\alpha=1$, and $\alpha=1.5$, respectively.  

Fig. \ref{div05} shows the achievable DMT regions and the outer bounds for $\alpha = 0.5$.  In this case, we see that the two-message, fixed-power-split HK scheme (HK) is only DMT-optimal for multiplexing rates $r < 1/4$, and falls short of achieving the outer bound \eqref{klo} (ETW) and the outer bound in \cite{Akuiyibo08} (AL08) for multiplexing rates $r \geq 1/4$.  It is interesting to note that the outer bound \eqref{klo} is better than the outer bound  in \cite{Akuiyibo08} for multiplexing rates $r < 0.45$, whereas for $r > 0.45$ the opposite is true, i.e., the outer bound \cite{Akuiyibo08} is tighter than the outer bound  \eqref{klo}.

Figs. \ref{div066}-\ref{div1} depict the achievable DMT regions and the outer bounds for $\alpha = 2/3$ and $\alpha=1$, respectively.  In these cases, we see that the two-message, fixed-power-split HK scheme (HK) is DMT-optimal and achieves the DMT outer bound in \eqref{klo}.  We also observe that the outer bound \eqref{klo} is tighter than the outer bound in \cite{Akuiyibo08} for all multiplexing rates.

In Fig. \ref{div15}, we plot the outer bounds and the achievable DMT regions for the interference level $\alpha=1.5$.   The two-message, fixed-power-split HK scheme achieves the DMT outer bound \eqref{klo}, and therefore, is DMT-optimal for $\alpha=1.5$.  We note that for $\alpha=1.5$, the outer bound \eqref{klo} and the outer bound in \cite{Akuiyibo08} are identical.
 
\begin{figure}[htbp]
\begin{center}
\includegraphics[scale=0.6]{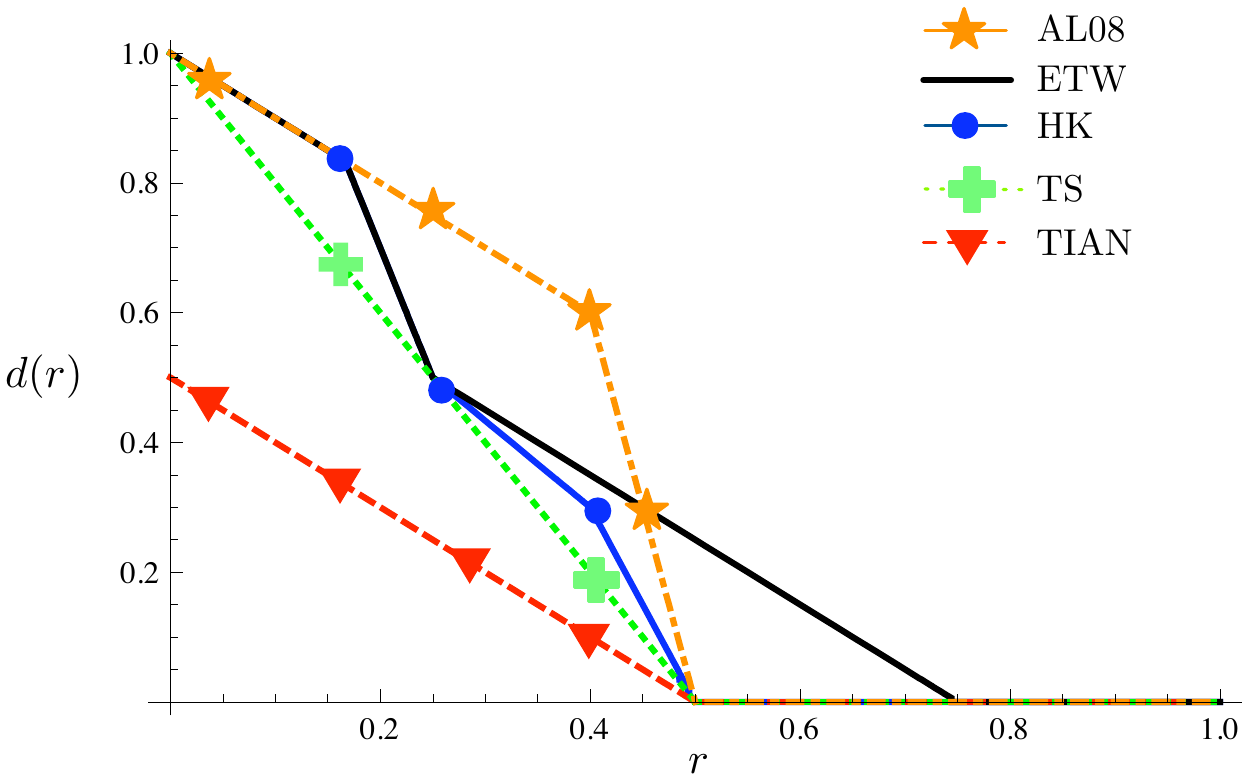}
\caption{Symmetric rate DMT for $\alpha=1/2$ and for various schemes.}
\label{div05}
\end{center}
\end{figure}

\begin{figure}[htbp]
\begin{center}
\includegraphics[scale=0.6]{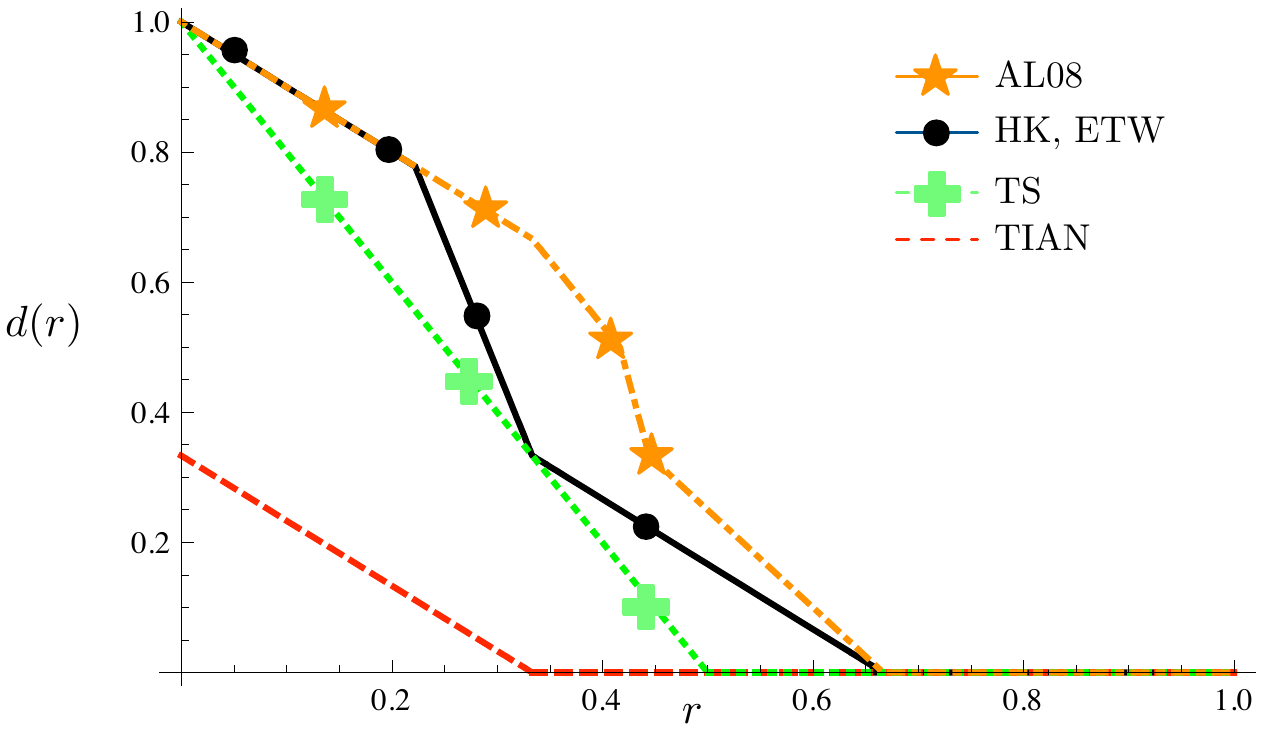}
\caption{Symmetric rate DMT for $\alpha=2/3$ and for various schemes.}
\label{div066}
\end{center}
\end{figure}

\begin{figure}[htbp]
\begin{center}
\includegraphics[scale=0.6]{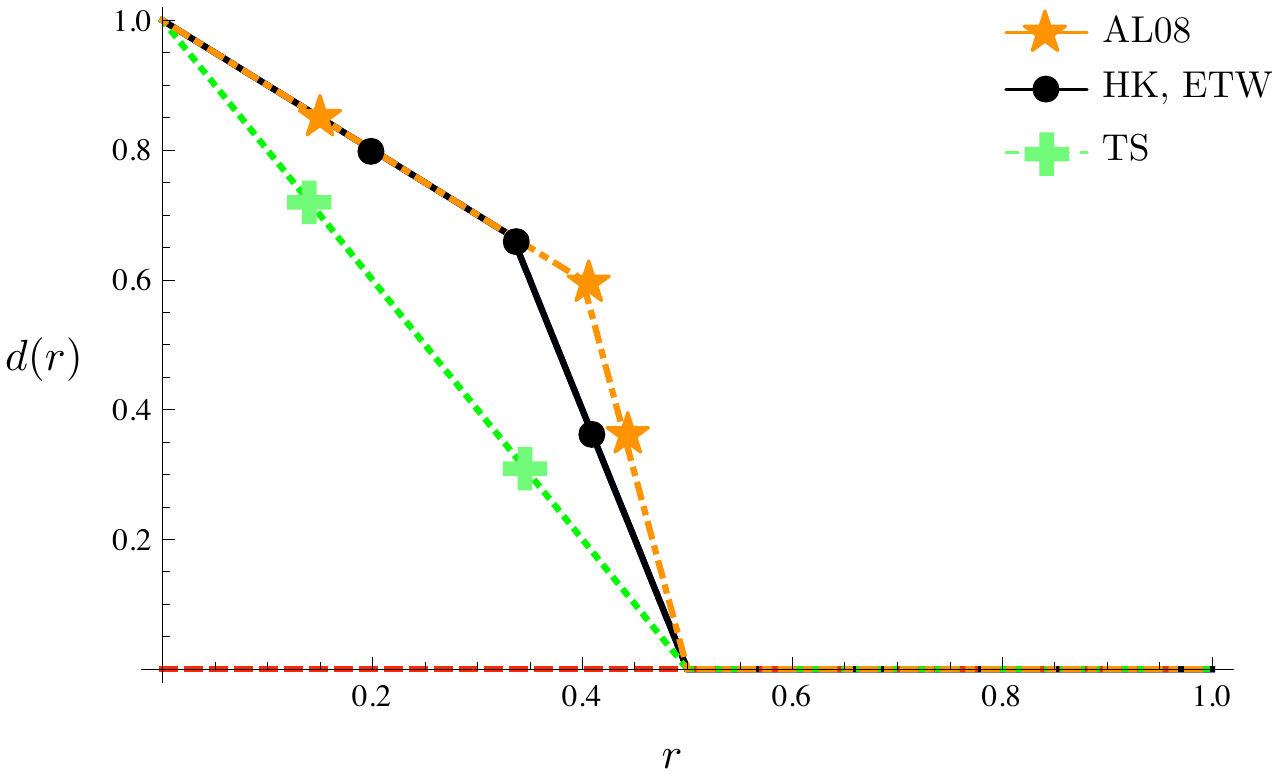}
\caption{Symmetric rate DMT for $\alpha=1$ and for various schemes.}
\label{div1}
\end{center}
\end{figure}

\begin{figure}[htbp]
\begin{center}
\includegraphics[scale=0.6]{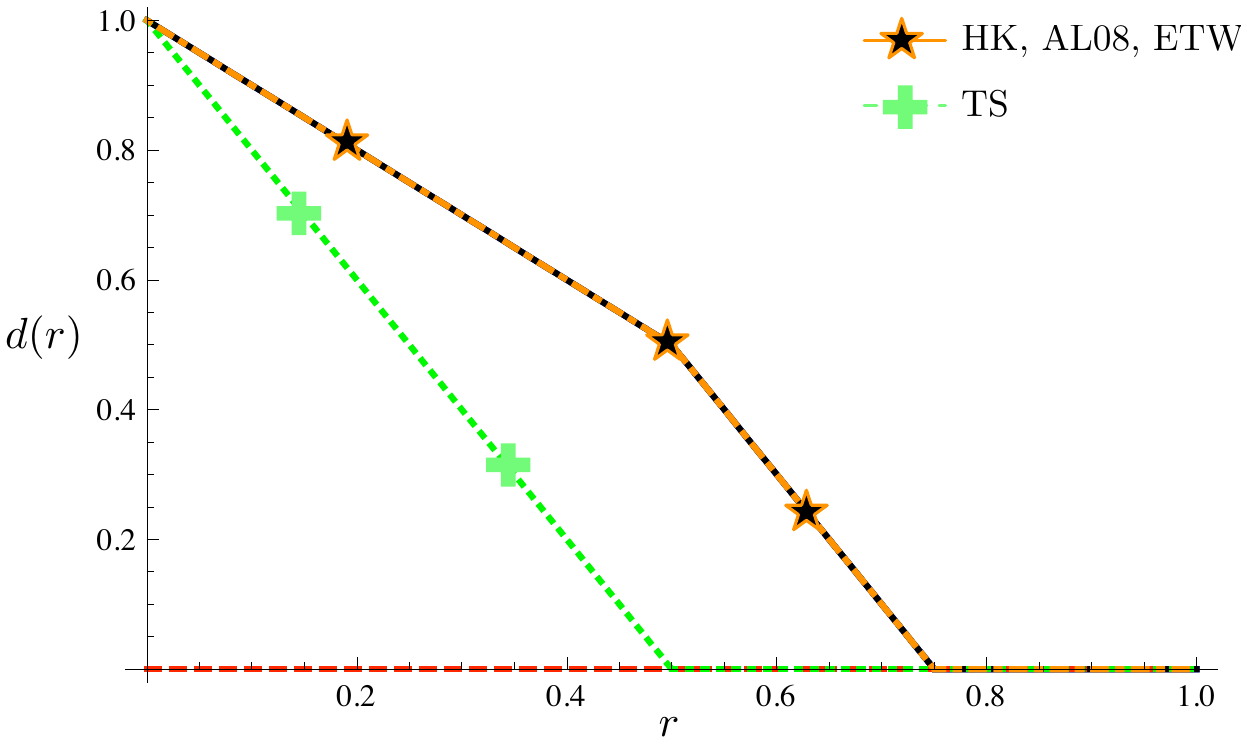}
\caption{Symmetric rate DMT for $\alpha=1.5$ and for various schemes.}
\label{div15}
\end{center}
\end{figure}

\section{Conclusions}
We characterized the optimal DMT of the two-user fading IC for the cases of  \emph{moderate}, \emph{strong}, and \emph{very strong} interference. Further,  we proved that a two-message, fixed power-split HK superposition coding scheme achieves the optimal DMT of the two-user fading IC under \emph{moderate}, \emph{strong}, and \emph{very strong} interference. We provided code design criteria for the corresponding superposition codes.    A complete characterization of the optimal DMT of the two-user fading IC under \emph{weak} interference remains an open question.
\bibliography{IEEEabrv,confs-jrnls,publishers,cebib}
\bibliographystyle{IEEEtran}

\end{document}